\tikzset{axis/.style={&lt;-&gt;}}
\newcommand\reallywidehat[1]{%
\savestack{\tmpbox}{\stretchto{%
  \scaleto{%
    \scalerel*[\widthof{\ensuremath{#1}}]{\kern-.6pt\bigwedge\kern-.6pt}%
    {\rule[-\textheight/2]{1ex}{\textheight}}
  }{\textheight}%
}{0.5ex}}%
\stackon[1pt]{#1}{\tmpbox}%
}
 \definecolor{MyBlue}{rgb}{0.05, 0.25, 0.65}
 \definecolor{MyRed}{rgb}{0.90, 0.05, 0.05}
\definecolor{MyGreen}{rgb}{0.05, 0.90, 0.05}
\newcommand{\B}{\boldsymbol}
\newcommand{\C}[1]{\mathcal{#1}}
\newcommand{\D}[1]{\mathbb{#1}}
\newcommand{\M}[1]{\mathscr{#1}}
\newtheorem{theorem}{Theorem}[section]
\newtheorem{proposition}[theorem]{Proposition}
\newtheorem{corollary}[theorem]{Corollary}
\newtheorem{remark}[theorem]{Remark}
\newtheorem{example}[theorem]{Example}
\newtheorem{definition}[theorem]{Definition}
\newcommand{\Nat}{{\mathbb N}}
\newcommand{\Real}{{\mathbb R}}
\newcommand{\id}{\mathrm{id}}
\newcommand{\AC}{\mathrm{AC}}
\newcommand{\BISH}{\mathrm{BISH}}
\newcommand{\CST}{\mathrm{CST}}
\newcommand{\dom}{\mathrm{dom}}
\newcommand{\PEM}{\mathrm{PEM}}
\newcommand{\CLASS}{\mathrm{CLASS}}
\newcommand{\CC}{\mathrm{CC}}
\newcommand{\TOT}{\Leftrightarrow}
\newcommand{\To}{\Rightarrow}
\newcommand{\sto}{\rightsquigarrow}
 \newcommand{\pto}{\rightharpoonup}
\newcommand{\CZF}{\mathrm{CZF}}
\newcommand{\pr}{\textnormal{\texttt{pr}}}
\newcommand{\BST}{\mathrm{BST}}
\newcommand{\Ker}{\textnormal{\texttt{Ker}}}
\newcommand{\Disj}{\mathbin{\B ) \B (}}
\newcommand{\Set}{\mathrm{\mathbf{Set}}}
\newcommand{\Ineq}{\textnormal{\texttt{Ineq}}}
\newcommand{\SetIneq}{\textnormal{\textbf{SetIneq}}}
\newcommand{\SetExtIneq}{\textnormal{\textbf{SetExtIneq}}}
\newcommand{\EFQ}{\mathrm{EFQ}}
\newcommand{\emptys}{\cancel{\mathlarger{\mathlarger{\mathlarger{\mathlarger{\square}}}}}}
\newcommand{\Fun}{\textnormal{\textbf{Fun}}}
\newcommand{\StrExtFun}{\textnormal{\textbf{StrExtFun}}}
\newcommand{\LinIneq}{\textnormal{\texttt{LinIneq}}}
\newcommand{\InnProd}{\textnormal{\texttt{InnProd}}}
\newcommand{\Norm}{\textnormal{\texttt{Norm}}}
\newcommand{\coperp}{\mathrel{\top}}
\newcommand{\ti}{\mathrm{I}}
\newcommand{\loc}{\textnormal{\texttt{loc}}}
\newcommand{\tota}{\textnormal{\texttt{tot}}}
\newcommand{\BC}{\mathrm{BC}} 
\newcommand{\CoQL}{\mathrm{IntQL}} 
\newcommand{\ClQL}{\mathrm{ClassQL}} 
\newcommand{\ComplQL}{\mathrm{ComplQL}}
\newcommand{\MANC}{\mathrm{AC!}} 
\newcommand{\Hilb}{\textnormal{\textbf{Hilb}}}
\newcommand{\LinIsom}{\textnormal{\textbf{LinIsom}}}
\newcommand{\LinContr}{\textnormal{\textbf{LinContr}}}
\newcommand{\Closure}{\textnormal{\texttt{Closure}}}
\newcommand{\Complete}{\textnormal{\texttt{Complete}}}
\newcommand{\Cauchy}{\textnormal{\texttt{Cauchy}}}
\newcommand{\PL}{\mathrm{PL}} 
\newcommand{\spart}{\textnormal{\texttt{spart}}}
\begin{document}


\date{}


\title{\textbf{Orthocomplemented subspaces and partial projections on a Hilbert space}}


\author{Iosif Petrakis\\	
Department of Computer Science, University of Verona\\
iosif.petrakis@univr.it}  

%





\maketitle

\begin{abstract}
\noindent 
We introduce the notion of an orthocomplemented subspace of a Hilbert space $\C H$, that is, a pair of orthogonal closed subspaces of $\C H$, as a two-dimensional counterpart to the one-dimensional notion of a closed subspace of $\C H$. Orthocomplemented subspaces are the Hilbert space-analogue to Bishop's complemented subsets. To complemented subsets correspond their characteristic functions, which are partial, Boolean-valued functions. Similarly, to orthocomplemented subspaces of $\C H$ correspond partial projections on $\C H$. 
Previous work of Bridges and Svozil on constructive quantum logic is an one-dimensional approach to the subject. The lattice-properties of the orthocomplemented subspaces of a Hilbert space is a  two-dimensional approach to constructive quantum logic, that we call complemented quantum logic. Since the negation of an orthocomplemented subspace is formed by swapping its components, complemented quantum logic, although constructive, is closer to classical quantum logic than the constructive quantum logic of Bridges and Svozil. The introduction of orthocomplemented subspaces and their corresponding partial projections allows a new approach to the constructive theory of Hilbert spaces. For example, the partial projection operator of an orthocomplemented subspace and the construction of the quotient Hilbert space bypass the standard restrictive hypothesis of locatedness on a subspace. Located subspaces correspond to total orthocomplemented subspaces.
\end{abstract}

\noindent
\textit{Keywords}: Constructive analysis, Hilbert spaces, orthocomplemented subspaces, partial projections, quantum logic

\section{Introduction}
\label{sec: intro}

If $A$ is a subset of a set $(X, =_X)$, its characteristic function $\chi_A \colon X \to \D 2$ is defined through the classical principle of the excluded middle $(\PEM)$ by the rule: 
$$\chi_A(x) := \left\{ \begin{array}{ll}
	1   &\mbox{, $x \in A$}\\
	0             &\mbox{, $x \notin A$.}
\end{array}
\right. $$
Constructively, such a definition is acceptable if and only if $A$ is a \textit{decidable} subset of $X$, i.e., 
$\forall_{x \in X}\big(x \in A \vee x \notin A\big)$. As a result, we cannot recover constructively the classical bijection between the powerset $\C P(X)$ of $X$ and the set $\D F(X, \D 2)$ of total functions from $X$ to the set of booleans $\D 2$. A recovery of this bijection is possible constructively, if we work with complemented subsets, instead of subsets, and with their characteristic functions, which are \textit{partial}, Boolean-valued functions on $X$. A \textit{complemented subset} is a pair $\B A:= (A^1, A^0)$ of subsets $A^1, A^0$ of a set $(X, =_X)$ with an extensional inequality $\neq_X$ (see Definition~\ref{def: apartness}), a strong and positive form of negation of the equality $=_X$ on $X$, such that $A^1$ and $A^0$ are subsets of $X$ that are disjoint in a strong and positive way, i.e., 
$$A^1 \Disj A^0 :\TOT \forall_{x^1 \in A^1}\forall_{x^0 \in A^0}\big(x^1 \neq_X x^0\big).$$
The \textit{charactersitic function} of $\B A$ is a partial function $\chi_{\B A} \colon X  \pto \D 2$, i.e., it is a (total) function $\chi_{\B A} \colon \dom(\B A) \to \D 2$ on the subset $\dom(\B A)$ of $X$, where 
$$\dom(\B A) := A^1 \cup A^0,$$
and it is defined by the rule:
$$\chi_{\B A}(x) := \left\{ \begin{array}{ll}
	1   &\mbox{, $x \in A^1$}\\
	0             &\mbox{, $x \in A^0,$}
\end{array}
\right. $$
avoiding the use of $\PEM$. It is easy to show that $\chi_{\B A}$ is a strongly extensional function (see Definition~\ref{def: se}). The \textit{complemented powerset} $\B P(X)$ of $X$ is a poset with
$$\B A \subseteq \B B :\TOT A^1 \subseteq B^1 \ \& \ B^0 \subseteq A^0,$$
i.e., $\B B$ has more ``provers'' and less ``refuters'' than $\B A$. The order on the characteristic functions of complemented subsets is given by
$$\chi_{\B A} \leq \chi_{\B B} :\TOT \dom(\B A) \subseteq \dom(\B B) \ \& \ \forall_{x \in \dom(\B A)}(\chi_{\B A}(x) \leq_{\D 2} \chi_{\B B}(x)).$$
Conversely, every strongly extensional function $\chi$ from a subset $\dom(\chi)$ of $X$ to $\D 2$ induces a complemented subset of $X$, defined by $\B A_{\chi} := (A_{\chi}^1, A_{\chi}^0)$, where 
$$A_{\chi}^1 := \{x \in \dom(\chi) \mid \chi(x) =_{\D 2} 1\},$$
$$A_{\chi}^0 := \{x \in \dom(\chi) \mid \chi(x) =_{\D 2} 0\}.$$
One can then show, see~\cite{PW22}, that the two procedures are inverse to each other that respect the corresponding partial orders, and hence there is a bijection between the complemented powerset $\B {\C P}(X)$ of $X$ and the totality $\C F(X, \D 2)$ of Boolean-valued partial functions on $X$. This bijection captures the computational content of the aforementioned classical bijection between $\C P(X)$ and $\D F(X, \D 2)$. We call the study of subsets within constructive logic the \textit{one-dimensional} approach to the theory of subsets, and we call the study of complemented subsets within constructive logic the \textit{two-dimensional} approach to the theory of subsets. Clearly, within the two-dimensional framework of complemented subsets, partial, Boolean-valued functions become first-class citizens. These concepts were introduced by Bishop and Cheng in~\cite{BC72}, their constrructive development of the Daniell approach~\cite{Da18} to measure theory that was seriously extended later in~\cite{BB85}. In order to recover constructively the definition of the measure $\mu(A)$ of a measurable set $A$ through the integral 
$$\mu(A) := \int \chi_A,$$
Bishop and Cheng used instead complemented measurable sets and integrals on partial functions, i.e.,
$$\mu(\B A) := \int \chi_{\B A}.$$
Bishop-Cheng measure theory is the most developed constructive theory of measure and integration (see also~\cite{Br79, Ch21, PZ24, GP25}).

The classical correspondence between subsets of a set $(X, =_X)$ and characteristic functions on $X$ is in complete analogy to the classical  correpsondence between subspaces of a Hilbert space $\C H$ and projections on $\C H$. Classically, every (closed) subspace $L$ of $\C H$ induces a (total) projection $P_L \colon H \to H$, i.e., an idempotent and self-adjoint operator on $\C H$. Conversely, an idempotent and self-adjoint operator $P$ on $\C H$ induces a closed subspace $L_P$ of $\C H$, defined by
$$L_P := 
\{x \in H \mid P(x) =_H x\}.$$
The partial orders on the subspaces $S(\C H)$ of $\C H$ and the set of projections $\Pr(\C H)$ on $\C H$
are defined , respectively, by the rules:
$$L \leq M :\TOT L \subseteq M,$$
$$P_L \leq P_M :\TOT \forall_{x \in H}\big(\langle P_L(x), x\rangle \leq \langle P_M(x), x \rangle\big).$$
The two procedures are inverse to each other, and hence, classically, there is a bijection between $S(\C H)$, the set of closed subspaces of $\C H$, and $\Pr(\C H)$, the set of total projections on $\C H$, that preserves the corresponding partial orders.
Constructively though, such a bijection cannot be captured within the one-dimensional framework of subspaces and projections, since the existence of the projection $P_L$ is equivalent to the locatedness of $ L$, i.e., the existsence of the distance
$$\rho(x, L) := \inf\big\{||x -l|| \mid l \in L\big\}$$
of $x$ from $L$, for every $x \in H$. Constructively, we cannot accept that every subspace of $L$ is located (see Proposition~\ref{prp: Brouwerian}). 
Clearly, \textit{locatedness of a subspace of $\C H$ corresponds to the decidability of a subset $A$ of a set} $X$. 

Here we extend Bishop's two-dimensional approach from subsets of a set to subspaces of a Hilbert space. Namely, we introduce what we call the \textit{orthocomplemented subspaces} of $\C H$, i.e., pairs $\B L :=(L^1, L^0)$ of orthogonal subspaces of $\C H$. To every orthocomplemented subspace $\B L$ of $\C H$ corresponds a \textit{partial projection} $P_{L^1} \colon \dom(\B L) \to H$ on $\C H$, where
$$\dom(\B L) := L^1 \vee L^0 := \overline{L^1 + L^0}.$$
Actually, we have that $\dom(\B L) =_{\mathsmaller{S(\C H)}} L^1 + L^0$ (see Theorem~\ref{thm: crucial}(i)). Conversely, to every partial projection $P \colon \dom(P) \leq \C H \to \C H$, i.e., a self-adjoint, idempotent and bounded partial operator on $\C H$, corresponds an orthocomplemented subspace $\B L_P := (L_{P^1}, L_{P^0})$, where 
$$L_{P^1} := \{x \in \dom(P) \mid P(x) =_H x\},$$
$$L_{P^0} := \{x \in \dom(P) \mid P(x) =_H 0\}.$$
The two procedures are inverse to each other, i.e., there is a bijection between $\B S(\C H)$, the totality of orthocomplemented subspaces of $\C H$, and $\M P(\C H)$, the totality of partial projections on $\C H$ (see Theorem~\ref{thm: bijection}). The computational content of the aforementioned classical biljection between subspaces of $\C H$ and (total) projections on $\C H$ is captured by the constructive bijection between orthocomplemented subspaces of $\C H$ and partial projections on $\C H$. As in the constructive measure theory of Bishop and Cheng, within the two-dimensional approach to the theory of Hilbert spaces partial operators become first-class citizens.

\begin{table}
	\begin{center}
		\caption{Correspondence between subsets of a set and subspaces of a Hilbert space}
		\label{decloc}
		\vspace{2mm}
		\begin{tabular}{l@{\hspace{1.0cm}}l}
			\itshape Subsets of a set $X$  & \itshape Subspaces of a Hilbert space $\C H$\\
			\hline
			$A \subseteq X$ decidable &  $L \leq \C H$ located \\
			
			$\chi_A$ is total with $\PEM$  &    $P_L$ is total with $\PEM$ \\
			
			$P(X)$ is bijective to $\D F(X, \D 2)$ & $S(\C H)$ is bijective to $\Pr(\C H)$\\
			
			$\B A := (A^1, A^0) \subseteq X$  &    $\B L := (L^1, L^0) \leq \C H$  \\
			
			$\chi_{\B A} \colon \dom(\B A) \to \D 2$ partial function & $P_{\B L}^1 \colon \dom(\B L) \to \C H$ partial projection\\
			
			$\B P(X)$ is bijective to $\C F(X, \D 2)$ &  $\B S(\C H)$ is bijective to $\M P(\C H)$\\\hline
			
		\end{tabular}
	\end{center}
\end{table}

Important case-studies within the two-dimensional  ``complemented approach'' to negation of mathematical concepts are found in many areas of mathematics, other than Bishop-Cheng measure theory. See, for example, the work of Coquand and Lombardi~\cite{CL17} in the theory of prime ideals,, the study of Borel and Baire sets in Bishop spaces~\cite{Pe19a}, the development of topologies of open complemented subsets in~\cite{Pe24a} in constructive topology, and the theory of swap algebras and swap rings in~\cite{MWP24, MWP25} that generalise the theory of Boolean algebras and Boolean rings.

%
%

 We structure this paper as follows: 
 \vspace{-2mm}
 \begin{itemize}
 	
 \item In section~\ref{sec: basic} we include some fundamental notions and facts from the constructive theory of Hilbert spaces that are going to be used in the rest of the paper. Speciffically, we describe the basic operations on subspaces of a Hilbert space which, if treated classically, lead to the notion of a classical orhtomodular lattice, or of classical quantum logic $(\ClQL)$.
 
 \item In section~\ref{sec: sub}, and based on the previous work of Bridges and Svozil~\cite{BS00}, we present the basic lattice-properties of the subspaces of a Hilbert space $\C H$ from the constructive point of view. To $(\ClQL)$ Bridges and Svozil correspond a constructive orthomodular lattice, or a constructive quantum logic $(\CoQL)$. Crucial to the formulation of $\CoQL$ is the notion of a located subspace $L$ of $\C H$.
 In Theorem~\ref{thm: loctot} we characterise a located subspace of $\C H$ as a total element of $S(\C H)$. This lattice-theoretic characterisation of locatedness allows a new formulation of quantum logic from a constructive pont of view (Definition~\ref{def: CoQL}).

\item In section~\ref{sec: orthosub} we introduce the notion of an orthocomplemented subspace  $\B L$ of a Hilbert space $\C H$.
Orthocomplemented subspaces are a two-dimensional counterpart to the one-dimensional subspaces of $\C H$, and the Hilbert space analogue to Bishop's complemented subsets. In Theorem~\ref{thm: crucial} we prove the existence of a partial projection operator $P_{\B L}^1$, defined on the domain
$L^1 \vee L^0$ of $\B L$, without restricting to located subspaces, as in the standard one-dimensional constructive approach of Bishop and Bridges. As in the case of one-dimensional subspaces of $\C H$, the locatedness of an orthocomplemented subspace $\B L$ is equivalent to its totality,  i.e., $\dom(\B L)$ is the whole Hilbert space $\C H$.  While the approach of Bishop and Bridges is a direct constructive translation of the classical proof of the existence of the projection acting on a Hilbert space that corresponds to an arbitrary closed subspace, our proof of Theorem~\ref{thm: crucial} avoids countable choice.
In Proposition~\ref{prp: quotient} we define the quotient Hilbert space $\dom(\B L)/\B L$ avoiding again the hypothesis of locatedness for $L$. If $\B L$ is total, and hence $L^1, L^0$ are located, we get as a limiting case the standard constructive construction of 
the quotient Hilbert space $\C H/L^1$.

\item In section~\ref{sec: bpo} we introduce bounded partial operators between normed spaces, i.e., bounded operators defined on a subspace of their domain. The basic operations on their totality (Definition~\ref{def: partialoperations}) motivate the definition of a \textit{partial linear space} (Definition~\ref{def: pls}), a generalisation of a linear space, to every element $x$ of which corresponds a zero-element $0_x$, such that $0 \cdot x =_X 0_x$. The total elements of a partial linear space, i.e., those that satisfy the equality $0_x =_X 0$, form a (total) linear space.

\item In section~\ref{sec: partialproj}  we study 
partial projections on a Hilbert space $\C H$, i.e., self-adjoint, idempotent and bounded partial operators on $\C H$. The projection operator $P_{\B L}^1$ shown to exist in Theorem~\ref{thm: crucial} is such a partial projection. In Theorem~\ref{thm: bijection} we show the existence of a bijection between the orthocomplemented subspaces of $\C H$ and the partial
projections on $\C H$ that preserves the corresponding partial orders. This result captures the computational content of the classical bijection between the subspaces of $\C H$ and the total projections on $\C H$, which cannot be fully grasped within the one-dimensional constructive approach of Bishop and Bridges, due to the hypothesis of locatedness.

 \item In section~\ref{sec: typeI} we define the basic operations 
 on $\B S(\C H)$, and we prove that $\B S(\C H)$ is a complemented quantum lattice (Proposition~\ref{prp: ComplQL}). As the negation of an orthocomplemeted subspace $(L^1, L^0)$ of $\C H$ is the swapped subspace $(L^0, L^1)$, complemented quantum logic $(\ComplQL)$ satisfies all properties of classical negation $L^{\perp}$ in $\ClQL$. Since many of these properties cannot be accepted in $\CoQL$, the constructive logical system $\ComplQL$ is closer to $\ClQL$ than $\CoQL$. Moreover, the proof of Proposition~\ref{prp: ComplQL} seems to be quite informative 
 (notice that the proof of the corresponding Proposition~\ref{prp: CoQL} is not given in~\cite{BS00}).

  \item In section~\ref{sec: commuting} we extend basic properties of commuting total projections to commuting partial ones (Proposition~\ref{prp: comm1}).
  
%

 \end{itemize}
 \vspace{-2mm}

 We work within Bishop Set Theory $(\BST)$, a semi-formal system for Bishop's informal system of constructive mathematics $\BISH$ that behaves as a high-level programming language. For all notions and results of $\BST$ that are used here without definition or proof we refer to~\cite{Pe20, Pe22a, Pe24}. The type-theoretic interpretation of Bishop's set theory into the theory of setoids (see especially the work of Palmgren~\cite{Pa12}--\cite{Pa19}) has become nowadays the standard way to understand Bishop sets. Other formal systems for BISH are Myhill's Constructive Set Theory $(\CST)$, introduced  in~\cite{My75}, and Aczel's system $\CZF$ (see~\cite{AR10}).
 For all notions and results from constructive analysis within $\BISH$ that are used here without explanation or proof, we refer to~\cite{Bi67, BB85, BR87, BV06}.

 \section{Subspaces of a Hilbert space and classical quantum logic}
 \label{sec: basic}

 In this section we include some fundamental notions and facts from the constructive theory of Hilbert spaces that are going to be used in the rest of the paper. Especially, we describe the basic operations on subspaces of a Hilbert space which, if treated classically, lead to the notion of a classical orhtomodular lattice, or of a classical quantum logic $(\ClQL)$.

 Bishop Set Theory (BST) is an informal, constructive theory of \emph{totalities} and \emph{assignment routines} between totalities that accommodates BISH and serves as an intermediate step between Bishop's original theory of sets and an adequate and faithful formalisation of BISH in Feferman's sense~\cite{Fe79}. Totalities, apart from the basic, undefined set of natural numbers $\mathbb{N}$, are defined through a membership-condition. The \emph{universe} $\mathbb{V}_0$ of (predicative) sets is an open-ended totality, which is not considered a set itself, and every totality the membership-condition of which involves quantification over the universe is not considered a set, but a proper class. \emph{Sets} are totalities the membership-condition of which does not involve quantification over $\mathbb{V}_0$, and are equipped with an equality relation i.e., an equivalence relation. An equality relation $x =_X x{'}$ on a defined set $X$ is defined through a formula $E_X(x, x{'})$ and a proof that $E_X(x, x{'})$ satisfies the properties of an equivalence relation.
 Assignment routines are of two kinds: \emph{non-dependent} ones and \emph{dependent} ones. 
 
 \begin{definition}\label{def: function}
 	If $(X, =_X)$ and $(Y, =_Y)$ are sets, a function $f \colon X \to Y$ is a non-dependent assignment routine $f \colon X \sto Y$ i.e., for every $x \in X$ we have that $f(x) \in Y$, such that $x =_X x{'} \To f(x) =_Y f(x{'})$, for every $x, x{'} \in X$. We denote by $\D F(X, Y)$ the set of functions from $X$ to $Y$, where\footnote{In this way  the type-theoretic function extensionality axiom is built in as the canonical equality of the function set.}
 	$$f =_{\D F(X,Y)} g :\TOT \forall_{x \in X}\big(f(x) =_Y g(x)\big).$$
 	Let $\id_X$ be the identity function on $X$, and let $(\Set, \Fun)$ be the category of sets and functions.
 	Two sets are equal in $\mathbb{V}_0$ if there is $g \colon Y \to X$, such that\footnote{With this equality the universe in $\BST$ can be called \emph{univalent} in the sense of Homotopy Type Theory~\cite{Ri22, Ho13}, as, by definition, an equivalence between sets is an equality.} $g \circ f =_{\D F(X, X)} \id_X$ and $f\circ g =_{\D F(Y, Y)} \id_Y$.
 \end{definition}

 \begin{definition}\label{def: apartness}
 	Let the following formulas with respect to a relation $x \neq_X y$ on the set $(X, =_X)${$:$}\\[1mm]
 	$(\Ineq_1) \  \forall_{x, y \in X}\big(x =_X y \ \& \ x \neq_X y \To \bot \big)$, where $\bot := 0 =_{\Nat} 1$.\\[1mm]
 	$(\Ineq_2) \ \forall_{x, x{'}, y, y{'} \in X}\big(x =_X x{'} \ \& \ 
 	y =_X y{'} \ \& \ x \neq_X y \To x{'} \neq_X y{'}\big)$,\\[1mm]
 	$(\Ineq_3) \ \forall_{x, y \in X}\big(\neg(x \neq_X y) \To x =_X y\big)$,\\[1mm]
 	$(\Ineq_4)  \forall_{x, y \in X}\big(x \neq_X y \To y \neq_X x\big)$,\\[1mm]
 	$(\Ineq_5) \ \forall_{x, y \in X}\big(x \neq_X y \To \forall_{z \in X}(z \neq_X x \ \vee \ z \neq_X y)\big)$,\\[1mm]
 	$(\Ineq_6) \ \forall_{x, y \in X}\big(x =_X y \vee x \neq_X y\big)$.\\[1mm]
 	If $\Ineq_1$ is satisfied, we call $\neq_X$ an 	inequality on $X$, and the 
 	structure $\C X := (X, =_X, \neq_X)$ a set with an inequality. 
 	If $(\Ineq_6)$ is satisfied, then $\C X$ is 
 	\textit{discrete}, if $(\Ineq_2)$ holds, $\neq_X$ is 
 	called \textit{extensional}, and  if $(\Ineq_3)$ holds, it is \textit{tight}. If $\neq_X$ satisfies $(\Ineq_4)$ and $(\Ineq_5)$, it is called an \textit{apartness relation} on $X$.
 \end{definition}

 The primitive inequality $\neq_{\Nat}$ is a discrete, tight apartness relation on $\Nat$. 

\begin{definition}\label{def: se} If $\C X := (X, =_X, \neq_X)$ and $\C Y := (Y, =_Y, \neq_Y)$ are sets with an inequality, a function $f \colon X \to Y$ is \textit{strongly extensional},
 if $f(x) \neq_Y f(x{'}) \To x \neq_X x{'}$, for every $x, x{'} \in X$, and it is an \textit{injection}, if the converse implication holds i.e., $x \neq_X x{'} \To f(x) \neq_Y f(x{'})$, for every $x, x{'} \in X$. 
 Let $(\SetIneq, \StrExtFun)$ be the category of sets with an inequality and strongly extensional functions, and let	$(\SetExtIneq, \StrExtFun)$ be its subcategory with sets with an extensional inequality.	
 \end{definition}

$(\SetExtIneq, \StrExtFun)$ is the main category of sets and functions in $\BST$.

\begin{definition}\label{def: canonicalineq}
	Let $\C X := (X, =_X, \neq_X)$ and $\C Y := (Y, =_Y, \neq_Y)$ be sets with inequalities.
	The inequalities on the product $X \times Y$ and the function space
	$\D F(X, Y)$ are given, respectively, by
	$$(x, y) \neq_{X \times Y} (x{'}, y{'}) :\TOT x \neq_X x{'} \vee y \neq_Y y{'},$$
	$$f \neq_{\D F(X, Y)} g :\TOT \exists_{x \in X} \big[f(x) \neq_Y g(x)\big].$$
\end{definition}
The projections $\pr_X, \pr_Y$ associated to $X \times Y$ 
are strongly extensional functions.
We cannot accept
in $\BISH$ that all functions are strongly extensional. E.g., the strong extensionality of 
all functions from a metric space to itself is equivalent to Markov's principle (see~\cite{Di20}, p.~40).
Notice that in order to show that a constant function between sets with an inequality is strongly extensional, one needs 
intuitionistic,  and not minimal, logic. An apartness relation $\neq_X$ on a set $(X, =_X)$ is always
extensional (see~\cite{Pe20}, Remark 2.2.6, p.~11). Here we work only with extensional subsets, as in Bridges-Richman Set Theory i.e., the set-theoretic framework of~\cite{BR87, MRR88, BV06}. The use of extensional subsets allows a smoother treatment of the strong empty subset of a set (its definition follows Definition~\ref{def: proper}), and hence a calculus of subsets closer to that of subsets within classical logic.

\begin{definition}\label{def: linearspace}
A linear space over the field $\D K \in \{\Real, \D C\}$ is a structure $\C X := (X, =_X, \neq_X; +, 0, \cdot)$, where $(X, =_X, \neq_X) \in \SetExtIneq$ and $(+, 0, \cdot)$ is the linear structure of $\C X$ that satisifies the standard axioms, together with the following compatibility axioms\footnote{These axioms are also found in~\cite{BV06}, pp.~47-48.} between $\neq_X$ and $(+, 0, \cdot):$\\[1mm]
$(\LinIneq_1)$ $\forall_{x, x{'} \in X}\big(x \neq_X x{'} \TOT x - x{'} \neq_X 0\big)$.\\[1mm]
$(\LinIneq_2)$ $\forall_{x, x{'} \in X}\big(x + x{'} \neq_X 0 \To x \neq_X 0 \vee x{'} \neq_X 0\big)$.\\[1mm]
$(\LinIneq_3)$ $\forall_{x\in X}\forall_{k \in \D K}\big(k \cdot x \neq_X 0 \To k \neq_{\D K} 0 \ \& \ x \neq_X 0\big)$.\\[1mm]
We call $\C X$ strict, if there is $x \in X$ with $x \neq_X 0$.  Let $X_0 := \{x \in X \mid x \neq_X 0\}$.
\end{definition}

Clearly, if $\C X, \C Y$ are linear spaces, a linear map $f \colon X \to Y$ is strongly extensional if and only if 
$$\forall_{x \in X}\big(f(x) \neq_Y 0 \To x \neq_X 0\big).$$
It is easy to show that the operations $+, \cdot$ of a linear space $\C X$ are strongly extensional, such that the following property holds
$$(\LinIneq_4) \ \ \ \ \ \ \ \ \ \ \ \ \ \ \ \ \ \ \ \ \ \ \ \ \ \ \ \ \ \ \ \ \ \ \ \ x \neq_{X} x{'} \TOT \forall_{y \in X}(x + y \neq_X x{'} + y). \ \ \ \ \ \ \ \ \ \ \ \ \ \ \ \ \ \ \ \ \ \ \ \ \ \ \ \ \ \ \ \ \ \ \ \ \ \ \ \ \ \ \ \ \ \ \ $$

%


%
%
%
%

\begin{definition}\label{def: ips}
An inner product space is a pair $(\C X, \langle , \rangle)$, where $\C X$ is a linear space and $\langle , \rangle \colon X \times X \to \D K$ satisfies the following conditions:\\[1mm]
$(\InnProd_1)$ $\forall_{x \in X}\big(\langle x,x \rangle \in \Real \ \& \ \langle x,x \rangle \geq 0\big)$.\\[1mm]
$(\InnProd_2)$ $\forall_{x \in X}\big(<x,x> =_{\D K} 0 \TOT x =_X 0\big)$.\\[1mm]
$(\InnProd_3)$ $\forall_{x, x{'} \in X}\big(\langle x,x{'} \rangle  =_{\D K} \overline{\langle x{'}, x \rangle }\big)$.\\[1mm]
$(\InnProd_4)$ $\forall_{x, x{'}, y \in X}\forall_{k, k{'} \in \D K}\big(\langle k \cdot x + k{'} \cdot x{'}, y \rangle =_{\D K} k  \langle x,y \rangle + k{'} \langle x{'},y \rangle \big)$.\\[1mm]
$(\InnProd_5)$ $\forall_{x \in X}\big(x \neq_X 0 \TOT \langle x,x \rangle \neq_{\D K} 0\big)$.\\[1mm]
If $x, y \in X$, the relations of perpendicularity and coperpendicularity on $X$ are defined by
$$x \perp x{'} :\TOT \langle x, x{'} \rangle =_{\D K} 0,$$
$$x  \coperp x{'} :\TOT \langle x, x{'} \rangle \neq_{\D K} 0.$$
A Hilbert space is a separable and complete inner product space. Let $H_0 := \{x \in H \mid x \neq_H 0\}$.
A normed space is a pair
$(\C X, ||.||)$, where $\C X$ is a linear space and $||.|| \colon X \times X \to \Real$ satisfies the following conditions:\\[1mm] 
$(\Norm_1)$ $\forall_{x \in X}\big(||x|| \geq 0\big)$.\\[1mm]
$(\Norm_2)$ $\forall_{x \in X}\big(||x|| =_{\Real} 0 \To x =_X 0\big)$.\\[1mm]
$(\Norm_3)$ $\forall_{x \in X}\forall_{k \in \D K}\big(||k \cdot x|| =_{\Real} |k| ||x||\big)$.\\[1mm]
$(\Norm_4)$ $\forall_{x, x{'} \in X}\big(||x + x{'}|| \leq ||x|| + ||x{'}||\big)$.\\[1mm]
$(\Norm_5)$ $\forall_{x \in X}\big(x \neq_X 0 \TOT||x|| > 0\big)$.
\end{definition}

It is easy to show that the inner product function $\langle , \rangle$ is strongly extensional, and that the relations $\perp$ and $\coperp$on $\C X$ are extensional, i.e., they respect the corresponding equalities. Moreover, we have that $x \perp x \TOT x =_X 0$ and $x \coperp x \TOT x \neq_X 0$. \\[1mm]
\textit{Throughout this paper $\C H := (H, =_H, \neq_H; +, 0, \cdot; \langle , \rangle)$ is a Hilbert space}.

%
%
%
%

\begin{proposition}\label{prp: ineqhilbert}
If $x, y \in H$, then the following hold:\\[1mm]
\normalfont (i)
\itshape $x =_H 0 \TOT \forall_{z \in H}\big(\langle x, z \rangle =_{\D K} 0\big)$.\\[1mm]
\normalfont (ii)
\itshape $x \neq_H 0 \TOT \exists_{z \in H}\big(\langle x, z \rangle \neq_{\D K} 0\big)$.\\[1mm]
\normalfont (iii)
\itshape The inequality $x \neq_H y$ is a tight apartness relation.
\end{proposition}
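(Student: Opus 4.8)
The plan is to reduce all three items to the inner-product axioms $(\InnProd_1)$--$(\InnProd_5)$, the compatibility axioms $(\LinIneq_1)$--$(\LinIneq_3)$, and the strong extensionality of $\langle , \rangle$ recorded just before the proposition; no appeal to $\PEM$ is made. For (i), in the forward direction I would use that $\langle , \rangle \colon H \times H \to \D K$ is a function, hence respects $=_{H \times H}$, so $x =_H 0$ yields $\langle x, z \rangle =_{\D K} \langle 0, z \rangle$ for every $z \in H$, and $\langle 0, z \rangle =_{\D K} 0$ follows from additivity (the case $k = k{'} = 1$ of $(\InnProd_4)$ applied to $0 + 0 =_H 0$) together with the field axioms of $\D K$; the converse direction is the instance $z := x$ of $(\InnProd_2)$. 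For (ii), the forward direction is $(\InnProd_5)$ with $z := x$. For the converse, given $z$ with $\langle x, z \rangle \neq_{\D K} 0$, I would combine this with $\langle 0, z \rangle =_{\D K} 0$ (from (i)) and the extensionality of $\neq_{\D K}$ to get $\langle x, z \rangle \neq_{\D K} \langle 0, z \rangle$, then apply strong extensionality of $\langle , \rangle$ to obtain $(x, z) \neq_{H \times H} (0, z)$, i.e. $x \neq_H 0 \vee z \neq_H z$; since $z =_H z$, the second disjunct contradicts $(\Ineq_1)$, so $x \neq_H 0$ holds in either case, keeping (ii) independent of (iii).

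For (iii) the strategy is to reduce $\neq_H$ to the metric apartness coming from the norm $||x|| := \sqrt{\langle x, x \rangle}$, via the characterisation $x \neq_H y \TOT ||x - y|| > 0$. From left to right, $(\LinIneq_1)$ gives $x - y \neq_H 0$, and then $(\InnProd_5)$ together with $\langle x - y, x - y \rangle \geq 0$ from $(\InnProd_1)$ upgrades the resulting apartness $\langle x-y, x-y\rangle \neq_{\D K} 0$ to $\langle x - y, x - y \rangle > 0$, hence $||x - y|| > 0$; the converse runs the same equivalences backwards. Granting this, the three clauses of a tight apartness are routine: tightness is $\neg(||x-y|| > 0) \To ||x-y|| =_{\Real} 0 \To x - y =_H 0 \To x =_H y$, using $(\Norm_2)$ and the tightness of $=_{\Real}$; symmetry is $||x-y|| = ||y-x||$, the case $k = -1$ of $(\Norm_3)$; and cotransitivity uses the triangle inequality $(\Norm_4)$: for a rational $q$ with $0 < q < ||x - y||$ we have $q < ||x - z|| + ||z - y||$ for every $z$, and splitting this sum yields $||x-z|| > 0$ or $||z-y|| > 0$, i.e. $z \neq_H x$ or $z \neq_H y$.

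The only genuinely delicate point is this last splitting: one cannot constructively conclude $||x-z|| > q/2 \vee ||z-y|| > q/2$ directly, so I would instead invoke the constructive dichotomy for $<$ on $\Real$, namely $||x-z|| > q/4 \vee ||x-z|| < q/2$, and observe that in the second alternative $||z-y|| > q - ||x-z|| > q/2$ follows from $q < ||x-z|| + ||z-y||$. Everything else is bookkeeping with the stated axioms and with the tightness of equality on $\Real$ and on $\D K$.
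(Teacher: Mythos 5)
Your proof is correct, and on the two substantive points it takes a genuinely different route from the paper's. For the converse of (ii) the paper invokes the Cauchy--Schwarz inequality, getting $0 < |\langle x, z\rangle| \leq ||x|| \cdot ||z||$ and hence $||x|| > 0$; you instead go through the strong extensionality of $\langle , \rangle$ (which the paper records just before the proposition) and the disjunctive inequality on $H \times H$, discharging the $z \neq_H z$ disjunct by $(\Ineq_1)$. Both are short; yours avoids Cauchy--Schwarz, the paper's avoids unfolding $\neq_{H \times H}$. The larger divergence is cotransitivity in (iii): you reduce $\neq_H$ to the metric apartness $||x - y|| > 0$ and run the standard normed-space argument, i.e.\ the triangle inequality combined with the constructive dichotomy $a < b \To \forall_{c}(a < c \ \vee \ c < b)$ on $\Real$ --- and you are right that the naive split into $||x-z|| > q/2 \ \vee \ ||z-y|| > q/2$ is not constructively available, so your $q/4$-versus-$q/2$ workaround is exactly what is needed. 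The paper instead reduces (by translation) to showing $x \neq_H 0 \To \forall_{z}\big(x \neq_H z \vee z \neq_H 0\big)$, applies the cotransitivity of $\neq_{\Real}$ to $\langle x, x\rangle \neq_{\Real} 0$ against $\langle z, z\rangle$, and converts $||z|| \neq_{\Real} ||x||$ into $x \neq_H z$ via the reverse triangle inequality $0 < \big|\,||x|| - ||z||\,\big| \leq ||x - z||$. Your argument is the more general one (it works in any normed, indeed any metric, space with the metric apartness), while the paper's exploits the inner product and inherits cotransitivity directly from $\Real$ rather than re-deriving it through rational approximation. Tightness is essentially the same in both, modulo the translation between $||x-y|| > 0$ and $\langle x-y, x-y\rangle \neq_{\D K} 0$; the paper omits symmetry as trivial, which you include.
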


\begin{proof}
(i) It follows trivially by $(\InnProd_2)$.\\
(ii) If $x \neq_H$, then by $(\InnProd_5)$ we have that $\langle x, x \rangle \neq_{\D K} 0$. Conversely, if $z\in H$ with $\langle x, z \rangle \neq_{\D K} 0$, then by the Cauchy-Schwartz inequality we get $0 < |\langle x, z \rangle| \leq ||x|| \cdot ||y||$, hence $||x|| \neq_{\Real} 0$.\\
(iii) We only show that $x \neq_H y$ is tight and cotransitive. For tightness, it suffices to show the implication $\neg(x \neq_H 0) \To x =_H 0$. By $(\InnProd_5)$ $\neg(x \neq_H 0) \TOT \neg(\langle x, x \rangle \neq_{\D K} 0)$. Since the inequality $\neq_{\D K}$ is tight, we get $\langle x, x \rangle =_{\D K} 0$, and hence $x =_H 0$. For cotransitivity, it suffices to show the implication
$$x \neq_H 0 \To \forall_{z \in H}\big(x \neq_H z \vee z \neq_H 0\big).$$
By the cotransitivity of $\neq_{\Real}$ we have that 
$$\langle x, x\rangle \neq_{\Real} 0 \To \forall_{z \in H}\big(\langle x, x\rangle   \neq_{\Real} \langle z, z\rangle  \vee \langle z, z\rangle  \neq_{\Real} 0\big).$$
If $\langle z, z\rangle  \neq_{\Real} 0$, then $z \neq_H 0$. If $\langle x, x\rangle   \neq_{\Real} \langle z, z\rangle $, then $||z|| \neq_{\Real} ||x||$, and by the inequality
$$0 < \big| ||x|| - ||z|| \big| \leq ||x - z||$$
we get $x -z \neq_H 0 \TOT x \neq_H z$.
\end{proof}

\begin{definition}\label{def: operator} If $(\C X, ||.||)$ and $(\C Y, ||.||)$ are normed spaces, a linear map $T \colon X \to Y$ is bounded, if there is $C > 0$, such that for every $x \in X$ we have that $||T(x)|| \leq C ||x||$. We denote by $\C B(\C X, \C Y)$ their set. A bounded linear map $T \colon X \to Y$ is called normed, or normable, if its norm
	$$||T|| := \sup \{||T(x)|| \mid x \in X \ \& \ ||x|| \leq 1\}$$
exists. If $\C H$ is a Hilbert space, a bounded linear map
$T \colon H \to H$ is called a bounded operator on $\C H$, and their set is denoted by $\C B(\C H)$. A projection on $\C H$ is an idempotent and self-adjoint bounded operator on $\C H$. Their set is denoted by $\Pr(\C H)$.
\end{definition}

If $T \in \C B(\C X, \C Y)$, then by the boundedness condition $T$ is strongly extensional. Not every bounded linear map is accepted to be normed constructively (see~\cite{BV06}, p.~53). If $\C X$ is strict, then $X$ has a unit vector, and consequently, if it is normed, then its norm is written as
$$||T|| =_{\Real} \sup \{||T(x)|| \mid x \in X \ \& \ ||x|| =_{\Real} 1\}.$$
The identity $I_H \colon H \to H$ and the zero map $0 \colon H \to H$ are projections on $\C H$. In~\cite{BB85} a subspace $L$ of a normed space $\C X$ is a located and closed linear subset of $X$. Locatedness is crucial to the standard constructive definition of the projection function $P_L$ that corresponds to $L$ (see~\cite{Bi67, BB85}). As already mentioned in the introduction, constructively we cannot accept that every subspace of a normed (or Hilbert) space is located.

\begin{definition}\label{def: subspace}
	A subspace $L$ of $H$, in symbols $L \leq \C H$, is a closed linear subset of $H$. Let $\C S(\C H)$ be the totality of subspaces of $\C H$ equipped with the equality inherited by the powerset of $H$. If $L, M \in S(\C H)$, let $L \leq M :\TOT L \subseteq M$, hence $L =_{\mathsmaller{\C S(\C H)}} M \TOT L \leq M \ \& \ M \leq L$. A subspace $L$ of $\C H$ is called located in $\C H$, or simply located, if for every $x \in H$ the distance $\rho(x, L) := \inf\{||x - l|| \mid l \in L\}$ exists in $\Real$. We denote by $S_{\loc}(\C H)$ the totality of located subspaces of $\C H$. If $A \subseteq H$, then $\overline{A}$ denotes the closure of $A$ in $\C H$. If $L \leq \C H$, we call $L$ strict, if there is $x \in L$ with $x \neq_H 0$.
\end{definition}

Classically, all subspaces of $\C H$ are located. This cannot be accepted constructively (see~\cite{BS00} and the subsequent Proposition~\ref{prp: Brouwerian}).
Although a finite dimensional linear subspace of a Hilbert space is always closed (see~\cite{Li81}, p.~197), there are infinite dimensional linear subspaces of a Hilbert spaces that are not. E.g., take the subspace of all sequences in $l^2$ with all but finitely many zeros (see~\cite{Co89}, p.~44).

\begin{definition}\label{def: orthocomplement}
	If $L \leq \C H$, the orthocomplement $L^{\perp}$ of $L$ is defined by
	$$L^{\perp} := \{x \in H \mid \forall_{y \in L}(x \perp y)\}.$$
	$L^{\perp \perp} := (L^{\perp})^{\perp}$ is the double orthocomplement of $L$.
\end{definition}

Clearly, $L^{\perp} \leq H$ and $L \cap L^{\perp} = \{0\}$. 
If $L$ is located, then $L^{\perp}$ is also located (see~\cite{BB85}, p.~368). Due to Corollary~\ref{cor: corBBcrucial}(ii) the orthocoplement $L^{\perp}$ of $L$ is the inner product-analogue to the logical complement $A^{\neg}$ of a set $(X, =_X)$, or to the strong complement $A^{\neq}$ of a subset $A$ of a set $X \in \SetExtIneq$, where
$$A^{\neg} := \{x \in X \mid \forall_{a \in A}\big(\neg (x =_X a)\big)\},$$
$$A^{\neq} := \{x \in X \mid \forall_{a \in A}\big(x \neq_X a\big)\}.$$
Because of Corollary~\ref{cor: corBBcrucial}(ii) we define a proper subspace $L$ of $\C H$ not if $L^{\neq}$ is inhabited, 
but as follows.

\begin{definition}\label{def: proper} We call a subspace $L$ of $\C H$ proper, in symbols $L \lneq \C H$, if $\exists_{x \in H_0}\big(x \perp L\big)$.
\end{definition}

Since the zero subspace $\{0\}$, which is a simplified writing for $\{0_H\}$, can be written as
$$\{0_H\} := \{x \in H \mid x \perp x\},$$
it corresponds to the weak empty subset $\emptyset_X$ of a set $(X =_X)$ and to the strong empty subset $\emptys_X$ of $X \in \SetExtIneq$, where
$$\emptyset_X := \{x \in X \mid \neg(x =_X x)\},$$
$$\emptys_X := \{x \in X \mid x \neq_X x\}.$$

\begin{definition}\label{def: operations1} The following operations\footnote{As it is mentioned in~\cite{PM09}, p.~24, this implication does not satisfy the condition $a \rightarrow b = 1 \TOT a \leq b$, which holds in every Boolean algebra. The latter is satisfied by the Sasaki hook $ a \rightarrow_1 b := a' \vee (a \wedge b)$, which is equal to $a' \vee b$ when the orthomodular
lattice is distributive, i.e. when it is a Boolean algebra. In addition to $ a \rightarrow_1 b$, there are exactly four other ``quantum implications'' that satisfy the above
condition and all reduce to $a' \vee b$ in a Boolean algebra 
(see~\cite{PM09}, p.~32).} are defined on {$S(\C H)$}$:$
	$$1 := H,$$
	$$0 := \{0\},$$
	$$L \wedge M := L \cap M,$$
	$$L + M := \{x + y \mid x \in L \ \& \  y \in M\},$$
	$$L \oplus M := L + M \ \mbox{when} \ L \cap M =_{\mathsmaller{S(\C H)}} 0,$$
	$$L \vee M :=\overline{L + M} =_{\mathsmaller{S(\C H)}} \overline{\langle L \cup M \rangle},$$
	$$-L := L^{\perp},$$
	$$L - M := L \wedge (-M) := L \cap M^{\perp},$$
	$$L \To M := (-L) \vee M,$$
	$$L \TOT M := (L \To M) \wedge (M \To L),$$ 
	$$\neg L := L \To 0.$$
	If $(L_i)_{i \in I}$ is a family of subspaces of $\C H$ over the index-set $I$, let
	$$\bigwedge_{i \in I} L_i := \bigcap_{i \in I}L_i,$$
	$$\bigvee_{i \in I} L_i := \overline{\bigg\langle\bigcup_{i \in I}L_i\bigg\rangle}.$$
	If $L \leq \C H$ is understood as a proposition, its derivability $\vdash L$ is defined as a proof of $L =_{\mathsmaller{S (\C H)}} 1$.
\end{definition}

Clearly, $\neg L := (- L) \vee 0 =_{\mathsmaller{S(\C H)}} -L$. The equality
$0 \To L := 0^{\perp} \vee L =_{\mathsmaller{S(\C H)}} 1 \vee L =_{\mathsmaller{S(\C H)}} 1$
expresses that within $S(\C H)$, seen as a logical system, the Ex Falso Quodlibet $(\EFQ)$ is derived.
In general, the implication in $S(\C H)$ does not satisfy the adjunction of a Heyting algebra
$$(K \wedge L) \leq M \ \mbox{if and only if}  \ K \leq (L \To M).$$
For example, in $\Real^2$ let $M := \Real(1, 0)$ be the $x$-axis and $L := \Real(\cos(\pi/4), \sin(\pi/4))$ be the diagonal. As $L^{\perp} \vee M =_{\mathsmaller{S(\C H)}} \Real^2$, we have that $L \leq L^{\perp} \vee M$, but $L \cap L =_{\mathsmaller{S(\C H)}}$ is not a subspace of $M$. For the converse implication,
if $K :=  \Real(1, 0)$, $L :=  \Real(\cos(\pi/4), \sin(\pi/4))$, and $M := 0$, then $K \wedge L \leq 0$, but $K$ is not a subspace of $L^{\perp}$.
Some of the following standard properties of $S(\C H)$ require classical logic and constitute the notion of a \textit{classical quantum lattice}, or an \textit{orthomodular lattice}.

\begin{proposition}[$\CLASS$]\label{prp: ClQL}
	If $L, M \in S(\C H)$, then the following properties hold$:$\\[1mm]
	$(\ClQL_0)$ If $H$ contains a non-zero element, then $\neg(0 =_{\mathsmaller{S(\C H)}} 1)$.\\[1mm]
	$(\ClQL_1)$ $L \wedge M$ $(L \vee M)$ is the gratest lower bound $($least upper bound$)$ of $L$ and $M$ relative to $\leq$.\\[1mm]
	$(\ClQL_{1{'}})$ $\bigwedge_{i \in I} L_i$ $\big(\bigvee_{i \in I} L_i\big)$ is the greatest lower bound $($least upper bound$)$ of $(L_i)_{i \in I}$ relative to $\leq$.\\[1mm]
	$(\ClQL_2)$ $0 \leq L \leq 1$.\\[1mm]
	$(\ClQL_3)$ If $L \leq M$, then $M^{\perp} \leq L^{\perp}$.\\[1mm]
	$(\ClQL_4)$ $L =_{\mathsmaller{S(\C H)}} L^{\perp \perp}$.\\[1mm]
	$(\ClQL_5)$ $L \wedge L^{\perp} =_{\mathsmaller{S(\C H)}} 0$.\\[1mm]
	$(\ClQL_6)$ $L \vee L^{\perp} =_{\mathsmaller{S(\C H)}} 1$.\\[1mm]
	$(\ClQL_7)$ If $L \leq M$, then $M =_{\mathsmaller{S(\C H)}} L \vee (M-L)$.
\end{proposition}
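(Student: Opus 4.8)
The plan is to reduce the whole list to the classical orthogonal decomposition theorem: reasoning in $\CLASS$, every closed subspace $L \leq \C H$ is located, so each $x \in H$ splits uniquely as $x = a + b$ with $a \in L$ and $b \in L^{\perp}$. Once this is available, the proposition becomes a chain of short verifications, and I would arrange them so that the genuinely classical input is confined to $(\ClQL_4)$, $(\ClQL_6)$ and $(\ClQL_7)$, all the rest being in fact constructive.

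First I would dispatch the elementary items. $(\ClQL_0)$: if $x \neq_H 0$, then $x \in 1 = H$ while $x \notin \{0\} = 0$, so $0 =_{\mathsmaller{S(\C H)}} 1$ would force $x =_H 0$, contradicting $x \neq_H 0$ via $(\Ineq_1)$. $(\ClQL_2)$: $0 = \{0_H\} \subseteq L$ because $L$ is linear, and $L \subseteq H = 1$. $(\ClQL_3)$: if $L \subseteq M$ and $x \perp y$ for every $y \in M$, then in particular $x \perp y$ for every $y \in L$. $(\ClQL_5)$: $z \in L \cap L^{\perp}$ forces $z \perp z$, hence $\langle z, z \rangle =_{\D K} 0$, hence $z =_H 0$ by $(\InnProd_2)$. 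For $(\ClQL_1)$ and $(\ClQL_{1'})$: an intersection of closed linear subsets is again a closed linear subset, so $L \cap M$ (resp.\ $\bigcap_{i \in I} L_i$) is a subspace below all its arguments and the greatest such; and $\overline{L + M}$ (resp.\ $\overline{\langle \bigcup_{i \in I} L_i \rangle}$) is linear, since $+$ and scalar multiplication are continuous, hence a subspace, it contains each argument (as $0$ lies in each), and every subspace containing all the arguments contains their span and, being closed, contains its closure.

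For the three substantive items I would invoke the decomposition. $(\ClQL_4)$: $L \subseteq L^{\perp\perp}$ is immediate, since each $x \in L$ is perpendicular to every element of $L^{\perp}$; conversely, for $x \in L^{\perp\perp}$ write $x = a + b$ with $a \in L$, $b \in L^{\perp}$, and observe $b = x - a \in L^{\perp\perp}$ because $a \in L \subseteq L^{\perp\perp}$, so $b \in L^{\perp} \cap L^{\perp\perp} =_{\mathsmaller{S(\C H)}} 0$ by $(\ClQL_5)$ applied to $L^{\perp}$, whence $x = a \in L$. $(\ClQL_6)$: the decomposition gives $L + L^{\perp} = H$ as subsets of $H$, so $L \vee L^{\perp} = \overline{L + L^{\perp}} =_{\mathsmaller{S(\C H)}} H = 1$. $(\ClQL_7)$: assuming $L \subseteq M$, the inclusion $L \vee (M - L) \subseteq M$ holds since $L \subseteq M$, $M - L = M \cap L^{\perp} \subseteq M$, and $M$ is closed; conversely $M$ is itself a Hilbert space and $L$ a closed subspace of it, so the decomposition taken inside $M$ writes each $m \in M$ as $m = a + c$ with $a \in L$ and $c \in M$, $c \perp L$, i.e.\ $c \in M - L$, giving $M \subseteq L + (M - L) \subseteq L \vee (M - L)$, and hence equality.

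There is no genuine obstacle, only a point of bookkeeping worth flagging: $(\ClQL_4)$, $(\ClQL_6)$, $(\ClQL_7)$ are exactly the items that collapse constructively without locatedness, and each argument above consumes the classical orthogonal decomposition --- for $(\ClQL_7)$ the decomposition relative to the Hilbert space $M$, not to $\C H$. Secondarily, $\vee$ carries a closure in its definition, but in $(\ClQL_6)$ and $(\ClQL_7)$ the relevant algebraic sums already equal $H$, resp.\ $M$, so the closure is cosmetic there; it genuinely bites only in $(\ClQL_1)$, where $\overline{L + M}$ need not coincide with $L + M$.
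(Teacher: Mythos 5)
Your proof is correct. The paper gives no proof of this proposition at all --- it is tagged $[\CLASS]$ and stated as standard classical background --- so there is nothing to compare against; your argument is the standard one, reducing everything to the classical orthogonal decomposition, and your closing observation that the classical input is consumed exactly by $(\ClQL_4)$, $(\ClQL_6)$ and $(\ClQL_7)$ matches precisely which items the paper later weakens or rejects constructively in Propositions~\ref{prp: Brouwerian} and~\ref{prp: CoQL}.
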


\begin{definition}\label{def: ClQL}
	A classical quantum lattice, or an orthomodular lattice, is a structure $(\C S, \leq, -, 0, 1)$, such that $(\C S, \leq)$ is a poset, $0, 1 \in \C S$, and $- \colon \C S \to \C S$, such that the following conditions hold$:$\\[1mm]
	$(\ClQL_0$ $\neg(0 =_{\mathsmaller{\C S}} 1)$.\\[1mm]
	$(\ClQL_1)$ $p \wedge q$ $(p \vee q)$ is the greatest lower bound $($least upper bound$)$ of $p$ and $q$ relative to $\leq$.\\[1mm]
	$(\ClQL_2)$ $0 \leq p \leq 1$.\\[1mm]
	$(\ClQL_3)$ If $p \leq q$, then $(-q) \leq (-p)$.\\[1mm]
	$(\ClQL_4)$ $p =_{\mathsmaller{\C S}} -(-p)$.\\[1mm]
	$(\ClQL_5)$ $p \wedge (-p) =_{\mathsmaller{\C S}} 0$.\\[1mm]
	$(\ClQL_6)$ $p \vee (-p) =_{\mathsmaller{\C S}} 1$.\\[1mm]
	$(\ClQL_7)$ If $p \leq q$, then $q =_{\mathsmaller{\C S}} p \vee (q-p)$, where $q-p := q \wedge (-p)$.\\[1mm]
	A complete classical quantum lattice is defined in the obvious way.
\end{definition}

\begin{example}\label{ex: BA}
\normalfont
Every Boolean algebra, which always satisfies distributivity, is a classical quantum lattice, and every complete Boolean algebra is a a classical complete quantum lattice.
\end{example}

By $(\ClQL_1)$, or directly by the definition of $\wedge, \vee$ in $S (\C H)$, we have trivially that $L \wedge M =_{\mathsmaller{S(\C H)}} M \wedge L$ and $L \vee M =_{\mathsmaller{S(\C H)}} M \vee L$.
By $(\ClQL_6)$ we get that $S(\C H)$, as a logical system, satisfies $\PEM$, i.e., $\vdash L \vee \neg L$. Since $L \leq L \vee M, L \leq L \vee N$, and $M \wedge N \leq M \leq L \vee M$, and $M \wedge N \leq N \leq L \vee N$, the following inequality holds constructively
$$L \vee (M \wedge N) \leq (L \vee M) \wedge (L \vee N).$$
Since $L \wedge M \leq L$, $L \wedge M \leq M \leq M \vee N$, and $L \wedge N \leq L$ and $L \wedge N \leq N \leq M \vee N$, the following inequality holds constructively
$$L \wedge (M \vee N) \geq (L \wedge M) \vee (L \wedge N).$$
The converse inequalities fail also classically. Just take $L :=  \Real(1, 0)$, $M := \Real(0,1)$, and $N :=  \Real(\cos(\pi/4), \sin(\pi/4))$ for both of them. Consequently, $(S(\C H), \wedge, \vee)$ does not satisfy distributivity (D)\footnote{Distributivity holds only for a Hilbert space $H$ with $\dim(H) \leq 1$ (see~\cite{Ha82}, Problem 14).}. Property $(\ClQL_7)$, which is the \textit{orthomodular law}, or the \textit{orthomodular identity}, is a special case of  modularity (M), which itself is a special case of (D), according to which if $L, M, N \leq \C H$, then 
$$N \leq L \To [L \wedge (M \vee N) =_{\mathsmaller{S(\C H)}} (L \wedge M) \vee N].$$ 
Since $L \wedge M \leq L$, $L \wedge M \leq M \leq M \vee N$, and $N \leq M \vee N$, the implication
$$N \leq L \To [L \wedge (M \vee N) \geq (L \wedge M) \vee N]$$
holds classically. In general, the implication
$$N \leq L \To [L \wedge (M \vee N) \leq (L \wedge M) \vee N]$$ 
does not hold classically (see, e.g.,~\cite{BvN36}, p.~832).
In~\cite{Ha82}, Problem 14, it is shown classically that (M) holds if and only if $H$ is finite-dimensional.
In~\cite{Ha82}, Problem 15, it is also shown that if $M, N \leq \C H$, then $M + N$ is closed if and only if (M) holds for $M, N$ and every $L \leq \C H$ with  $N \leq L$. One direction of this result is constructive. 

\begin{proposition}\label{prp: closedMod}
	If $M, N \leq \C H$, such that $\overline{M+N} =_{\mathsmaller{S(\C H)}} M + N$, then for every $L \leq \C H$ with $N \leq L$, we have that $L \wedge (M \vee N) =_{\mathsmaller{S(\C H)}} (L \wedge M) \vee N$.
\end{proposition}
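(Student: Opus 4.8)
The plan is to reduce the proposition to the single set-theoretic identity
$$L \cap (M+N) =_{\mathsmaller{S(\C H)}} (L \cap M) + N,$$
which holds for any subspaces whenever $N \leq L$, and then to insert the closedness hypotheses at exactly the two places where the join $\vee$ is defined through a closure. First I would use the assumption: since $\overline{M+N} =_{\mathsmaller{S(\C H)}} M+N$, we have $M \vee N =_{\mathsmaller{S(\C H)}} M+N$, and therefore $L \wedge (M\vee N) =_{\mathsmaller{S(\C H)}} L \cap (M+N)$.

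Next comes the core inclusion argument for the identity above. For $(L\cap M)+N \subseteq L \cap (M+N)$: if $x = m{'}+n{'}$ with $m{'} \in L\cap M$ and $n{'} \in N$, then $x \in M+N$, and $x \in L$ since $m{'} \in L$, $n{'} \in N \subseteq L$, and $L$ is a linear subset. For $L \cap (M+N) \subseteq (L\cap M)+N$: given $x \in L$ with $x \in M+N$, the definition of $M+N$ as $\{m+n \mid m \in M, n \in N\}$ supplies $m \in M$ and $n\in N$ with $x = m+n$; then $m = x - n \in L$ because $x \in L$, $n \in N \subseteq L$, and $L$ is linear, so $m \in L \cap M$ and $x \in (L\cap M)+N$. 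No countable choice, no $\PEM$, and no locatedness is invoked here; this is precisely the sense in which one direction of Halmos' criterion is already constructive.

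Finally I would close the loop: $L\cap(M+N)$ is closed, being the intersection of the closed set $L$ with the closed set $M+N$, hence a subspace; by the identity just established, $(L\cap M)+N$ is therefore closed, so $(L\cap M)\vee N := \overline{(L\cap M)+N} =_{\mathsmaller{S(\C H)}} (L\cap M)+N$. Chaining the equalities yields
$$(L\wedge M)\vee N =_{\mathsmaller{S(\C H)}} (L\cap M)+N =_{\mathsmaller{S(\C H)}} L\cap(M+N) =_{\mathsmaller{S(\C H)}} L\wedge(M\vee N),$$
which is the claim.

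I do not expect a genuine obstacle. The only point needing constructive care is the inclusion $L\cap(M+N)\subseteq(L\cap M)+N$, and it goes through exactly because membership in the algebraic sum $M+N$ literally comes equipped with a decomposition $x = m+n$. As a bonus, the reverse inequality $(L\wedge M)\vee N \leq L\wedge(M\vee N)$ (already noted to hold classically, and in fact constructively, using $L\cap M \leq L$, $L \cap M \leq M \leq M\vee N$, and $N \leq L$, $N \leq M \vee N$) is subsumed by the same computation, so the argument is uniform.
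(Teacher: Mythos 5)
Your argument is correct and is essentially the proof the paper points to: the paper's own ``proof'' is just a citation of Halmos (\cite{Ha82}, p.~177), and the constructive direction given there is exactly your computation --- use the hypothesis to replace $M \vee N$ by $M+N$, decompose $x = m+n$ with $n \in N \leq L$ to get $m = x-n \in L \cap M$, and observe that $(L\cap M)+N = L\cap(M+N)$ is already closed so the closure in $(L\wedge M)\vee N$ is harmless. Nothing is missing; your explicit write-up is a faithful, self-contained version of the cited argument.
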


\begin{proof}
	The constructive proof of this result is found in~\cite{Ha82}, p.~177.
\end{proof}

\begin{definition}\label{def: perpcoperp}
	If $L^1, L^0$ are subspaces of $\C H$, we define the following relations$:$
	$$L^1 \perp L^0 :\TOT \forall_{x^1 \in L^1}\forall_{x^0 \in L^0}(x^1 \perp x^0),$$
	$$L^1 \coperp L^0 :\TOT \exists_{x^1 \in L^1}\exists_{x^0 \in L^0}(x^1 \coperp x^0).$$
	If $L^1 \perp L^0$, we say that $L^1, L^0$ are orthocomplemented, or orthodisjoint, while if $L^1 \coperp L^0$, we say that $L^1, L^0$ are orthoovelapping, or they orthooverlap.
\end{definition}

The relations $L^1 \perp L^0$ and $L^1 \coperp L^0$ are the inner prooduct-analogue to the relations
of strong disjointness $A^1 \Disj A^0$ and strong overlap $A^1 \between A^0$ between subsets $A^1, A^0$ of $X \in \SetExtIneq$, where
$$A^1 \Disj A^0 :\TOT \forall_{x^1 \in A^1}\forall_{x^0 \in A^0}(x^1 \neq_X x^0),$$
$$A^1 \between A^0 :\TOT \exists_{x^1 \in A^1}\exists_{x^0 \in A^0}(x^1 =_X x^0).$$
The following properties of orthocomplemented subspaces connect $S(\C H)$ with \textit{apart algebras} (see~\cite{Pe25}) and are straightforward to show.

\begin{remark}\label{rem: perp1}If $L^1, L^0, M^0 \leq S(\C H)$, and $(L_i)_{i \in I}$ is an $I$-family of subspaces of $\C H$, the following hold$:$\\[1mm]
	\normalfont (i)
	\itshape If $L^1 \perp L^0$, then $L^1 \wedge L^0 = 0$ and $L^0 \leq {(L^1)}^{\perp}$.\\[1mm]
	\normalfont (ii)
	\itshape If $L^1 \perp L^0$, then $L^0 \perp L^1$.\\[1mm]
	\normalfont (iii)
	\itshape If $L^1 \perp L^0$ and $M^0 \leq L^0$, then $L^1 \perp M^0$.\\[1mm]
	\normalfont (iv)
	\itshape If $S \subseteq H$ and $L^1 \perp S$, then $L^1 \perp \overline{\langle S \rangle}$.\\[1mm]
	\normalfont (v)
	\itshape $L^1 \perp \bigvee_{i \in I} L_i \TOT \forall_{i \in I}\big(L^1 \perp L_i)$.
\end{remark}


The following properties of orthooverlapping subspaces connect $S(\C H)$ with \textit{overlap algebras} (see~\cite{CS10}) and are straightforward to show.

\begin{remark}\label{rem: coperp1}If $L^1, L^0, M^0 \leq S(\C H)$, and $(L_i)_{i \in I}$ is an $I$-family of subspaces of $\C H$, the following hold$:$\\[1mm]
	\normalfont (i)
	\itshape If $L^1 \coperp L^0$, then $L^0 \coperp L^1$.\\[1mm]
	\normalfont (ii)
	\itshape If If $L^1 \coperp L^0$, and $L^0 \leq M^0$, then $L^1 \coperp M^0$.\\[1mm]
	\normalfont (iii)
	\itshape $L^1 \coperp \bigvee_{i \in I} L_i \TOT \exists_{i \in I}\big(L^1 \coperp L_i)$.
\end{remark}

If the inequality $\neq_X$ on a set is cotransitive, and $A \subseteq X$, then emptiness of $A$ is characterised through strong disjointness, i.e., 
$$A =_{\mathsmaller{P(X)}} \emptys_X \TOT A \Disj A.$$
Dually, the inhabitedness of $A$ is characterised through strong overlap, i.e., 
$$A \ \mbox{is inhabited} \ \TOT A \between A.$$
In analogy, we have the following characterisations of equality to $0$ and strictness of subspaces within the constructive theory of Hilbert spaces:
$$L =_{\mathsmaller{S(\C H)}} 0 \TOT L \perp L,$$
$$L \ \mbox{is strict} \ \TOT L \coperp L.$$

\begin{definition}\label{def: arrow}
	If $H, H{'}$ are Hilbert spaces, a map $T \colon H \to H{'}$ is an isometry, if $||T(x)|| =_{\Real} ||x||$, for every $x \in H$, and $T$ is a contraction, if $||T(x)|| \leq ||x||$, for every $x \in H$. Let $(\Hilb, \LinIsom)$ be the category of Hilbert spaces with linear isometries, a subcategory of the category $(\Hilb, \LinContr)$ of Hilbert spaces and linear contractions\footnote{The latter category is considered in~\cite{Le23}}. 	If $L \leq \C H$, a bounded linear map $T \colon L \to H$ is called \textit{positive}, if $\langle Tx, x\rangle \geq 0$, for every $x \in L$.
\end{definition}


\begin{proposition}\label{prp: inv}
	If $T \colon H \to H{'} \in \LinIsom$ and $M, N \leq \C H{'}$, then the following hold$:$\\[1mm]
	\normalfont (i)
	\itshape $T^{-1}(M) \leq \C H$.\\[1mm] 
	\normalfont (ii)
	\itshape $T^{-1}(0) =_{\mathsmaller{S(\C H)}} 0$ and $T^{-1}(1) =_{\mathsmaller{S(\C H)}} 1$.\\[1mm] 
	\normalfont (iii)
	\itshape $T^{-1}(M \wedge N) =_{\mathsmaller{S(\C H)}} T^{-1}(M) \wedge T^{-1}(N)$.\\[1mm] 
	\normalfont (iv)
	\itshape  $T^{-1}(M \vee N) \geq T^{-1}(M) \vee T^{-1}(N)$.
\end{proposition}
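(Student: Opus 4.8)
The plan is to verify the four assertions in turn; each is a routine transfer of an elementary set-theoretic or metric fact across the isometry $T$, and the only points that need genuine constructive care are the closedness of a preimage and the fact (worth stressing) that nowhere do we invoke locatedness.

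For (i), I would first note that $T^{-1}(M)$ is a linear subset of $H$: it contains $0$ because $T(0) =_{H{'}} 0 \in M$, and it is closed under addition and scalar multiplication since $T$ is linear and $M$ is a linear subset. For topological closedness I would use that an isometry is $1$-Lipschitz, hence uniformly continuous: if $(x_n)$ is a sequence in $T^{-1}(M)$ converging to $x$ in $H$, then $T(x_n) \to T(x)$, and since $M$ is closed we obtain $T(x) \in M$, i.e.\ $x \in T^{-1}(M)$. Thus $T^{-1}(M) \leq \C H$.

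For (ii), the isometry hypothesis does the work. Linearity gives $x =_H 0 \To T(x) =_{H{'}} 0$, so $0 \subseteq T^{-1}(0)$; conversely, if $T(x) =_{H{'}} 0$ then $||x|| =_{\Real} ||T(x)|| =_{\Real} 0$, whence $x =_H 0$ by $(\Norm_2)$, so $T^{-1}(0) \subseteq 0$ and the two subspaces are equal in $S(\C H)$. The equality $T^{-1}(1) =_{\mathsmaller{S(\C H)}} 1$ is immediate, since $T(x) \in H{'}$ for every $x \in H$, so $T^{-1}(H{'}) = H$. For (iii), the identity is purely set-theoretic: $x \in T^{-1}(M \wedge N)$ iff $T(x) \in M$ and $T(x) \in N$ iff $x \in T^{-1}(M) \wedge T^{-1}(N)$, and both sides are subspaces by (i).

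For (iv), from $M \subseteq M + N \subseteq \overline{M + N} = M \vee N$ (take the $N$-component to be $0$), and symmetrically for $N$, I get $T^{-1}(M) \subseteq T^{-1}(M \vee N)$ and $T^{-1}(N) \subseteq T^{-1}(M \vee N)$. By (i) the right-hand side is a subspace containing $T^{-1}(M) \cup T^{-1}(N)$; since $T^{-1}(M) \vee T^{-1}(N) = \overline{T^{-1}(M) + T^{-1}(N)}$ is the least subspace with this property (any subspace containing both summands contains their algebraic sum by linearity, and, being closed, contains its closure), we conclude $T^{-1}(M \vee N) \geq T^{-1}(M) \vee T^{-1}(N)$. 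I do not expect a real obstacle anywhere; the one thing worth recording is that (iv) cannot be improved to an equality, even classically: with $H := \Real$, $H{'} := \Real^2$, $T(t) := (t, 0)$, and $M := \Real(1,1)$, $N := \Real(1,-1)$, one has $M \vee N = \Real^2$, so $T^{-1}(M \vee N) = \Real$, while $T^{-1}(M) = T^{-1}(N) = \{0\}$ and hence $T^{-1}(M) \vee T^{-1}(N) = \{0\}$. The value of the statement is precisely that $T^{-1}$ carries subspaces to subspaces and is monotone and meet-preserving without any locatedness assumption on $M$, $N$, or on the images.
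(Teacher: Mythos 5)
Your proof is correct. The paper states this proposition without proof, treating it as routine, and your argument is exactly the expected one: closedness of $T^{-1}(M)$ from the continuity of the isometry, injectivity of $T$ (via $||x|| =_{\Real} ||T(x)||$) for $T^{-1}(0) =_{\mathsmaller{S(\C H)}} 0$, the set-theoretic identity for meets, and the characterisation of $T^{-1}(M) \vee T^{-1}(N)$ as the least closed subspace containing both preimages for (iv); all of these steps are constructively unproblematic and, as you note, require no locatedness. Your counterexample showing that (iv) cannot be strengthened to an equality is a worthwhile addition that the paper does not supply.
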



The existence of a projection $P_L$ from a Hilbert space $\C H$ onto a \textit{located} subspace $L$ of $\C H$ is shown constructively  in~\cite{BB85}, pp.~366-367 (Theorem (8.7)). 

\begin{theorem}[Bishop-Bridges]\label{thm: BBcrucial}
If $L \leq \C H$ is located, then to each $x \in H$ corresponds a
unique closest point $P(x) \in L$. The vector $P(x)$ can also be characterized
as the unique vector $y \in L$ such that $\langle x - y, z\rangle =_H 0$ for all $z \in L$. The
function $P$ is an idempotent operator with range $L$. If $L$ is strict, then $P$ is normable, and $||P|| =_{\Real} 1$.
\end{theorem}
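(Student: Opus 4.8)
The plan is to run the classical Hilbert-space projection argument, keeping careful track of where locatedness and countable choice enter. Since $L$ is located, $\rho := \rho(x, L)$ exists as a real number, so for each $n \in \Nat$ the definition of the infimum yields some $l_n \in L$ with $||x - l_n||^2 < \rho^2 + 2^{-n}$; invoking countable choice I fix such a sequence $(l_n)_{n \in \Nat}$. Applying the parallelogram law to $x - l_n$ and $x - l_m$, and using that $\tfrac{l_n + l_m}{2} \in L$ so that $||x - \tfrac{l_n+l_m}{2}|| \geq \rho$, I obtain $||l_n - l_m||^2 \leq 2||x - l_n||^2 + 2||x - l_m||^2 - 4\rho^2 < 2^{-n+1} + 2^{-m+1}$, so $(l_n)$ is Cauchy. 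By completeness of $\C H$ and closedness of $L$ it converges to some $P(x) \in L$, and continuity of the norm gives $||x - P(x)|| =_{\Real} \rho$.

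Next I would establish the orthogonality characterisation, which also delivers uniqueness. If $y \in L$ is a closest point, then for every $z \in L$ and $t \in \Real$ the vector $y + tz \in L$ gives $||x - y||^2 \leq ||x - y - tz||^2 = ||x-y||^2 - 2t\,\mathrm{Re}\langle x - y, z\rangle + t^2||z||^2$; dividing by $t$ and letting $t \to 0^{+}$ and $t \to 0^{-}$ forces $\mathrm{Re}\langle x - y, z\rangle =_{\Real} 0$, and replacing $z$ by $iz$ in the complex case kills the imaginary part, so $\langle x - y, z\rangle =_{\D K} 0$. Conversely, if $y \in L$ satisfies $\langle x - y, z \rangle =_{\D K} 0$ for all $z \in L$, then for any $l \in L$ the Pythagorean identity gives $||x - l||^2 = ||x - y||^2 + ||y - l||^2 \geq ||x - y||^2$, so $y$ is closest; and if $y, y'$ both satisfy the orthogonality condition, subtracting the two identities $\langle x - y, y - y'\rangle = 0 = \langle x - y', y - y'\rangle$ yields $||y - y'||^2 =_{\Real} 0$, hence $y =_H y'$ by $(\Norm_2)$. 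Thus $P(x)$ is the unique closest point and the unique vector with the stated orthogonality property.

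With the orthogonality characterisation in hand, linearity of $P$ is immediate: $P(x) + P(x') \in L$ and $(x + x') - (P(x) + P(x')) \perp L$, so uniqueness forces $P(x + x') =_H P(x) + P(x')$, and likewise $P(k \cdot x) =_H k \cdot P(x)$. Boundedness follows from Pythagoras applied to $x = P(x) + (x - P(x))$, giving $||x||^2 = ||P(x)||^2 + ||x - P(x)||^2$, hence $||P(x)|| \leq ||x||$, so $P \in \C B(\C H)$ with $||P|| \leq 1$. For $l \in L$ one has $0 \leq \rho(l,L) \leq ||l-l|| = 0$, so $l$ itself is its closest point and $P(l) =_H l$; therefore $P$ is idempotent and has range exactly $L$. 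Finally, if $L$ is strict, pick $x_0 \in L$ with $x_0 \neq_H 0$, so $||x_0|| > 0$ by $(\Norm_5)$ and $u := x_0/||x_0||$ is a unit vector with $P(u) =_H u$, whence $||P(u)|| =_{\Real} 1$; since $||P(v)|| \leq ||v|| \leq 1$ for every $v$ in the closed unit ball, the set $\{||P(v)|| \mid ||v|| \leq 1\}$ has supremum $1$, so $P$ is normable with $||P|| =_{\Real} 1$.

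The main obstacle, and exactly the point the surrounding text flags, is the first step: extracting the minimising sequence $(l_n)$ from locatedness uses countable choice, which is precisely what the later Theorem~\ref{thm: crucial} is designed to circumvent. A secondary constructive subtlety is the existence of the supremum defining $||P||$: one cannot in general decide whether $L$ is strict, hence cannot compute $||P||$ unconditionally — the strictness hypothesis is what makes this supremum locatable, here trivially, since it is bounded above by $1$ and attained.
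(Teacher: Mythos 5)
Your proposal is correct and follows essentially the same route as the paper's source for this result: the paper does not reprove Theorem~\ref{thm: BBcrucial} but cites Theorem (8.7) of Bishop--Bridges, describing it as the classical minimising-sequence argument in which countable choice is used exactly where you use it, namely to extract $(l_n)_{n \in \Nat}$ with $||x - l_n|| \to \rho(x,L)$ from locatedness. Your parallelogram-law Cauchy estimate, the orthogonality characterisation, and the attained supremum argument for $||P|| =_{\Real} 1$ under strictness all match that standard constructive proof, and your closing remark about where $\CC$ enters is precisely the point the paper emphasises when motivating Theorem~\ref{thm: crucial}.
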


The proof of Theorem~\ref{thm: BBcrucial} is along the lines of the corresponding classical proof and requires the axiom of countable choice $(\CC)$, which is generally accepted in $\BISH$, although it is desirable to avoid it (see~\cite{Ri01}). $\CC$ is necessary to extract a sequence $(l_n)_{n \in \Nat} \subseteq L$, such that 
$$||l_n - x|| \stackrel{n} \longrightarrow \rho(x, L) := \inf\{||l - x|| \mid l \in L\},$$
for every $x \in X$. In section~\ref{sec: orthosub} we prove a two-dimensional version of this result (Theorem~\ref{thm: crucial}) that avoids $\CC$. If one wants to avoid $\CC$ in the proof of Bishop and Bridges, then the definition of locatedness of $L$ must be accommodated by a \textit{modulus of locatedness} $\loc_L \colon X \to \D F(\Nat, L)$, where $\loc_L(x)$ is a sequence in $L$ that satisfies the above convergence condition, for every $x \in X$. Of course, then one needs to reformulate the theory of located subsets in metric spaces taking these moduli of locatedness into account. Although we believe that this is possible, it has not been done yet. Only after revisiting locatedness in this way, the only remaining choice principle used in the aforementioned proof of Bishop and Bridges is Myhill's axiom of unique choice $\MANC$, introduced by Myhill in his seminal paper~\cite{My75}. This is exactly the form of (non) choice used in our proof of Theorem~\ref{thm: crucial} too. By inspection of the proof of Theorem~\ref{thm: BBcrucial} we get the following corollary, which, classically, holds for an arbitrary subspace and in case (ii) for an arbitrary $x \in L^{\neg}$ (see~\cite{Ha57}, pp.~23-24).

\begin{corollary}\label{cor: corBBcrucial}
Let $L \leq \C H$ be located and $x \in X$.\\[1mm]
\normalfont (i)
\itshape There is $l \in L$ with $\rho(x, L) =_{\Real} ||x - l||$.\\[1mm]
\normalfont (ii)
\itshape If $x \in L^{\neq}$, then there is $z \in H_0$ with $z \in L^{\perp}$.
\end{corollary}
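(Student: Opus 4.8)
The plan is to read both clauses directly off Theorem~\ref{thm: BBcrucial}, the only subtlety being that for (ii) one must exploit the fact that membership in the \emph{strong} complement $L^{\neq}$ supplies a genuine apartness $\neq_H$, not merely a denial of equality.

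For (i): since $L$ is located, Theorem~\ref{thm: BBcrucial} yields the closest point $P(x) \in L$, and by the very meaning of ``closest point'' we have $\|x - P(x)\| =_{\Real} \inf\{\|x - l\| \mid l \in L\} =_{\Real} \rho(x, L)$. So taking $l := P(x)$ settles (i); note uniqueness of $P(x)$ is not even needed here.

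For (ii): instantiate the hypothesis $x \in L^{\neq}$, i.e. $\forall_{a \in L}(x \neq_H a)$, at the particular element $a := P(x) \in L$, which gives $x \neq_H P(x)$. By $(\LinIneq_1)$ this yields $z := x - P(x) \neq_H 0$, that is, $z \in H_0$. It then remains to check $z \in L^{\perp}$, i.e. $\langle z, w\rangle =_{\D K} 0$ for every $w \in L$; but this is precisely the variational characterisation of $P(x)$ in Theorem~\ref{thm: BBcrucial} as the unique $y \in L$ with $\langle x - y, w\rangle = 0$ for all $w \in L$. Hence $z$ is the required witness.

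I do not expect a genuine obstacle: the content is entirely inherited from Theorem~\ref{thm: BBcrucial}. The one point that deserves care is conceptual rather than computational — one must route through the positive inequality $\neq_H$ built into the definition of $L^{\neq}$ in order to land in $H_0$, since a mere $\neg(x =_H P(x))$ would not be enough constructively. It is worth recording, as the statement suggests, that classically the same two-line argument applies to an \emph{arbitrary} subspace and to arbitrary $x \in L^{\neg}$; the constructive restriction to located $L$ and to $L^{\neq}$ is exactly what is forced, respectively, by the need for $P(x)$ to exist and by the need for $z$ to be apart from $0$.
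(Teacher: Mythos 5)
Your proposal is correct and follows essentially the same route as the paper: take $l := P(x)$ for (i), set $z := x - P(x)$ for (ii), get $z \in L^{\perp}$ from the variational characterisation in Theorem~\ref{thm: BBcrucial}, and get $z \in H_0$ from the strong inequality supplied by $x \in L^{\neq}$ (the paper phrases this last step as $\rho(x,L) =_{\Real} \|x-l\| > 0$, while you go through $(\LinIneq_1)$ directly; these are interchangeable). Your closing remark about why $L^{\neq}$ rather than $L^{\neg}$ is needed is exactly the point the paper makes in the surrounding discussion.
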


\begin{proof}
To show (i), it suffices to to take $l := P(x)$ from Theorem~\ref{thm: BBcrucial}, while for (ii) let $z := x - l$. Since by hypothesis $\rho(x, L) =_{\Real} ||x - l|| > 0$, we get $z \in H_0$.
\end{proof}

Of course, if $x \in H_0$ with $x \in L^{\perp}$, then $x \in L^{\neq}$, since by Pythagoras theorem
$$||x-l||^2 =_{\Real} ||x||^2 + ||l||^2 \geq ||x||^2 > 0,$$
for every $l \in L$. The standard equivalence between the order of subspaces of $\C H$ and the order of projections on $\C H$ holds also constructively, only if we restrict to located subspaces.

\begin{proposition}\label{prp: equivorder}
If $L, M \leq \C H$ are located and $P_L, P_M$ are their corresponding projections, then the following are equivalent$:$\\[1mm]
\normalfont (i)
\itshape $L \leq M$.\\[1mm]
\normalfont (ii)
\itshape $P_M \circ P_L =_{\mathsmaller{\Pr(\C H)}} P_L$.\\[1mm]
\normalfont (iii)
\itshape $P_L \circ P_M =_{\mathsmaller{\Pr(\C H)}} P_L$.\\[1mm]
\normalfont (iv)
\itshape $\forall_{x \in H}\big(||P_L(x)|| \leq ||P_M(x)||\big)$.\\[1mm]
\normalfont (v)
\itshape $P_L \leq P_M :\TOT \forall_{x \in H}\big(\langle P_L(x), x \rangle \leq \langle P_M(x), x\rangle\big)$.
\end{proposition}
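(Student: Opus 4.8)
The plan is to run the cycle of implications $(i)\Rightarrow(ii)\Leftrightarrow(iii)\Rightarrow(iv)\Leftrightarrow(v)\Rightarrow(i)$, using throughout the description of the projections supplied by Theorem~\ref{thm: BBcrucial}: for a located subspace $L$ and $x\in H$, the vector $P_L(x)$ lies in $L$, satisfies $\langle x-P_L(x),z\rangle =_{\D K}0$ for every $z\in L$, has range $L$, and coincides with $x$ exactly when $x\in L$ (then $x$ is its own closest point of $L$). Before touching the five conditions I would isolate two elementary consequences that carry the whole argument. Splitting $x=P_L(x)+(x-P_L(x))$ and using the orthogonality relation together with $\InnProd_3$, $\InnProd_4$ gives $\langle P_L(x),x\rangle =_{\D K}\langle P_L(x),P_L(x)\rangle =_{\Real}||P_L(x)||^2$; in particular $\langle P_L(x),x\rangle$ is real and nonnegative, so $(iv)$ and $(v)$ literally assert the same thing and their equivalence is immediate (this also re-proves that $P_L$ is positive). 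The same Pythagorean decomposition yields $||y||^2 =_{\Real}||P_L(y)||^2+||y-P_L(y)||^2$, hence the contraction estimate $||P_L(y)||\le||y||$ for every $y\in H$, and likewise for $P_M$; this estimate is constructive and, importantly, does not require $P_L$ to be normable, so it is available even when $L$ is not strict.

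With these in hand the implications are short. For $(i)\Rightarrow(ii)$: if $L\le M$ then $P_L(x)\in L\subseteq M$, so $P_M$ fixes $P_L(x)$, i.e. $(P_M\circ P_L)(x)=_H P_L(x)$ for all $x$. For $(ii)\Leftrightarrow(iii)$: $P_L$ and $P_M$ are self-adjoint (by hypothesis, or again directly from the orthogonality relation), so applying $(\cdot)^{*}$ to $P_M\circ P_L =_{\mathsmaller{\Pr(\C H)}} P_L$ gives $P_L\circ P_M =_{\mathsmaller{\Pr(\C H)}} P_L$, and symmetrically; the uniqueness of the adjoint needed here is Proposition~\ref{prp: ineqhilbert}(i). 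For $(iii)\Rightarrow(iv)$: from $P_L\circ P_M =_{\mathsmaller{\Pr(\C H)}} P_L$ we get $P_L(x)=_H P_L(P_M(x))$, and the contraction estimate applied to $y:=P_M(x)$ gives $||P_L(x)||=||P_L(P_M(x))||\le||P_M(x)||$.

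For the closing implication $(v)\Rightarrow(i)$: take $x\in L$, so $P_L(x)=_H x$; then by the identity above $||x||^2 =_{\Real}\langle P_L(x),x\rangle\le\langle P_M(x),x\rangle =_{\Real}||P_M(x)||^2\le||x||^2$, whence $||P_M(x)||^2 =_{\Real}||x||^2$; by the Pythagorean identity $||x-P_M(x)||^2 =_{\Real}0$, so $x =_H P_M(x)$ by $\Norm_2$, and since $P_M$ has range $M$ we conclude $x\in M$. Thus $L\subseteq M$, i.e. $L\le M$, which closes the loop and establishes the equivalence of all five statements.

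I do not expect a genuine obstacle here: once $L$ and $M$ are assumed located, every step is an equational manipulation of $\langle\,,\,\rangle$ and $||\cdot||$, with no use of $\PEM$ and no choice beyond what already enters through the existence of $P_L$ and $P_M$ in Theorem~\ref{thm: BBcrucial}. The only two points that deserve care are to extract $(iv)$ from $(iii)$ rather than from $(ii)$ — so that in $P_L(P_M(x))$ the operator $P_L$ is applied \emph{after} $P_M$ — and to derive the contraction inequality $||P_L(y)||\le||y||$ from the Pythagorean decomposition rather than from the norm of $P_L$, which the constructive theory guarantees only when the subspace is strict.
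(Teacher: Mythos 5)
Your proof is correct, and the chain of implications you run is exactly the standard argument that the paper has in mind: the paper does not write out a proof of this proposition at all, but simply defers to Kadison--Ringrose~\cite{KR83}, p.~110, noting that the argument there is constructive once locatedness is assumed. Your two points of constructive care --- obtaining the contraction estimate $||P_L(y)|| \leq ||y||$ from the Pythagorean decomposition rather than from normability of $P_L$ (which Theorem~\ref{thm: BBcrucial} only guarantees for strict $L$), and passing through (iii) rather than (ii) to get (iv) --- are precisely the adjustments needed to make the classical proof go through in $\BISH$, so your write-up is in effect the proof the paper omits.
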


\begin{proof}
The constructive proof of these equivalences can be found in~\cite{KR83}, p.~110, where the hypothesis of locatedness is not mentioned, since it is classically void.
\end{proof}

\section{Intuitionistic quantum logic of Bridges and Svozil}
\label{sec: sub}

The relation between quantum logic and pure logic is already discussed\footnote{Exactly this extract is quoted also by Landsman in~\cite{La17}, p.~459.} in the seminal paper of Birkhoff and von Neumann~\cite{BvN36}, p.~837.

\begin{quote}
	The models for propositional calculi which
	have been considered in the preceding sections are also interesting from the
	standpoint of pure logic. Their nature is determined by quasi-physical and
	technical reasoning, different from the introspective and philosophical considerations which have had to guide logicians hitherto. Hence it is interesting to compare the modifications which they introduce into Boolean algebra, with those which logicians on ``intuitionist'' and related grounds have tried introducing.
	$\ldots$ Our conclusion agrees perhaps more with those critiques of logic, which find
	most objectionable the assumption 
	$\ldots$ that to deduce
	an absurdity from the conjunction of $a$ and not $b$, justifies one in inferring that
	$a$ implies $b$.
\end{quote}

The constructive approach to the foundations of quantum mechanics has its origin to the work of 
Bridges~\cite{Br81}, where a constructive axiomatic theory of events\footnote{Events, or questions, are the simplest observables with a spectrum (the set of possible measured-values of the observable) included in $2 := \{0, 1\}$.} and states of a physical system, together with its relation to a constructive version of the Hilbert space-model of quantum mechanics, were presented for the first time.
 In~\cite{Br81} a constructive proof of Gleason's theorem was posed as an open problem, which was found later by Richman and Bridges in~\cite{RB99, Ri00}. Gleason's theorem, a cornerstone of the mathematical foundations of quantum mechanics, was assumed by von Neumann, conjectured by Mackey, and proved classically by Gleason in~\cite{Gl57}. In~\cite{BS00}, which is an extension and an improvement of~\cite{Br81}, a new axiomatization of quantum lattices is given, tailored to the role of locatedness that constructively has to be added to the notion of a closed linear subspace of a Hilbert space.
 A different constructive approach to quantum logic is developed by Landsman in~\cite{La17}. The study of bounded and unbounded linear operators on Hilbert spaces in~\cite{Ye00,Ye11} within Ye's strict finitism, based on ideas of Bishop and Bridges, and the results of Spitters on the constructive theory of von Neumann algebras in~\cite{Sp02,Sp05} are notable related constructive developments indicating that large parts of quantum mechanics can be formulated constructively.

In this section, and based on the work of Bridges and Svozil~\cite{BS00}, we present the basic lattice-properties of the subspaces of a Hilbert space $\C H$ from the constructive point of view. As expected, not all classical properties related to the negation $L^{\perp}$ of $L$ in $S(\C H)$ can be accepted constructively.

\begin{proposition}[Bridges-Svozil]\label{prp: Brouwerian}
If $L, M \in S(\C H)$, then the following cannot be accepted constructively$:$\\[1mm]
$(\BC_1)$ If $L, M \leq \C H$,
then $L \wedge M$ is located.\\[1mm]
$(\BC_2)$ If $L, M \leq \C H$,
then $L \vee M$ is located.\\[1mm]
$(\BC_3)$ $(L \wedge M)^{\perp} =_{\mathsmaller{S(\C H)}} L^{\perp} \vee M^{\perp}$.\\[1mm]
$(\BC_4)$ $(L \vee M)^{\perp \perp} =_{\mathsmaller{S(\C H)}} L \vee M$.\\[1mm]
$(\BC_5)$ $L^{\perp \perp} =_{\mathsmaller{S(\C H)}} L$.\\[1mm]
$(\BC_6)$ $(L \vee M)^{\perp} =_{\mathsmaller{S(\C H)}} 0 \To L \vee M =_{\mathsmaller{S(\C H)}} H$.\\[1mm]
$(\BC_7)$ If $L$ is located and $L^{\perp}  =_{\mathsmaller{S(\C H)}} M^{\perp}$, then $L =_{\mathsmaller{S(\C H)}} M$.
\end{proposition}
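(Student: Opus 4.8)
The plan is to prove each clause $(\BC_{1})$--$(\BC_{7})$ by a Brouwerian counterexample: to accept it constructively would entail an omniscience principle unprovable in $\BISH$ --- $\LPO$ for the locatedness clauses $(\BC_{1}),(\BC_{2})$, and Markov's principle $\MP$ for the orthocomplement identities $(\BC_{3})$--$(\BC_{7})$. Following~\cite{BS00}, a single parametrised family of subspaces does all the work. Fix an arbitrary binary sequence $(a_{n})_{n \in \Nat}$, put $a := \sum_{n} 2^{-n} a_{n} \in \Real$ (so $a =_{\Real} 0 \TOT \forall_{n}(a_{n} = 0)$ and $a \neq_{\Real} 0 \TOT \exists_{n}(a_{n} = 1)$), and inside the one-dimensional Hilbert space $\Real$ consider the subspace $L_{a} := \overline{\Real a} \in S(\Real)$, with orthocomplement $L_{a}^{\perp} =_{\mathsmaller{S(\Real)}} \{x \in \Real \mid xa =_{\Real} 0\}$.

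The key preliminary step is to record, \emph{without} any case split on ``$a =_{\Real} 0$ or $a \neq_{\Real} 0$'', two control facts: (a) $1 \in L_{a} \TOT L_{a} =_{\mathsmaller{S(\Real)}} \Real \TOT a \neq_{\Real} 0$ (from $1 \in \overline{\Real a}$ one gets $|ta| > 1/2$ for some $t$, hence $a \neq_{\Real} 0$; conversely take $t := a^{-1}$); and (b) $L_{a}^{\perp} =_{\mathsmaller{S(\Real)}} 0 \TOT \big(1 \in L_{a}^{\perp\perp}\big) \TOT \neg(a =_{\Real} 0)$, the non-trivial implications using only tightness of $\neq_{\Real}$ (Proposition~\ref{prp: ineqhilbert}(iii)) together with the elementary fact ``$xa =_{\Real} 0$ and $x \neq_{\Real} 0$ imply $a =_{\Real} 0$'' (divide by $|x| > 0$). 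Granting (a) and (b), the orthocomplement items follow at once. $(\BC_{5})$ for $L_{a}$ reads $1 \in L_{a} \TOT 1 \in L_{a}^{\perp\perp}$, i.e. $a \neq_{\Real} 0 \TOT \neg(a =_{\Real} 0)$, whose forward direction is $\MP$; $(\BC_{4})$ is $(\BC_{5})$ applied to the subspace $L_{a} \vee 0$; $(\BC_{6})$ for the pair $(L_{a}, 0)$ says $L_{a}^{\perp} =_{\mathsmaller{S(\Real)}} 0 \To L_{a} =_{\mathsmaller{S(\Real)}} \Real$, i.e. $\neg(a =_{\Real} 0) \To a \neq_{\Real} 0$, again $\MP$; and $(\BC_{7})$ with $L := \Real$ (located, as $\rho(x, \Real) =_{\Real} 0$) and $M := L_{a}$, under the hypothesis $\neg(a =_{\Real} 0)$ --- which by (b) makes $M^{\perp} =_{\mathsmaller{S(\Real)}} 0 =_{\mathsmaller{S(\Real)}} L^{\perp}$ --- forces $\Real =_{\mathsmaller{S(\Real)}} L_{a}$, hence $a \neq_{\Real} 0$ by (a): once more $\MP$. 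Only $(\BC_{3})$ needs two dimensions: in $\Real^{2}$ take $L$ the $x$-axis and $M := \Real \cdot (1, a)$; a direct computation gives $(1,0) \in (L \wedge M)^{\perp} \TOT \neg(a =_{\Real} 0)$ while $(1,0) \in L^{\perp} \vee M^{\perp} \TOT a \neq_{\Real} 0$, so $(\BC_{3})$ again yields $\MP$.

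For the locatedness clauses, $(\BC_{1})$ applied to the pair $(L_{a}, \Real)$ --- note $L_{a} \wedge \Real =_{\mathsmaller{S(\Real)}} L_{a}$, so already the unrestricted reading is refuted; in the reading where both arguments must themselves be located, use instead in $\Real^{2}$ the $x$-axis together with $M := \Real \cdot (1, a)$ --- asserts that $L_{a}$ is located, so that $d := \rho(1, L_{a})$, which (as $\Real a$ is dense in $L_{a}$) equals $\inf\{|1 - ta| \mid t \in \Real\}$, exists in $\Real$. Since $0 < 1$, either $d < 1$ or $d > 0$; the first case gives some $|1 - ta| < 1$, hence $|ta| > 0$ and $a \neq_{\Real} 0$, while in the second case $a \neq_{\Real} 0$ would yield (with $t := a^{-1}$) $d =_{\Real} 0$, a contradiction, so $\neg(a \neq_{\Real} 0)$ and thus $a =_{\Real} 0$ by tightness. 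Hence $a =_{\Real} 0 \vee a \neq_{\Real} 0$, i.e. $\LPO$ for $(a_{n})_{n}$ (the two-line variant yields only $a =_{\Real} 0 \vee \neg(a =_{\Real} 0)$, i.e. $\WLPO$). The same family settles $(\BC_{2})$ via the pair $(L_{a}, 0)$, since $L_{a} \vee 0 =_{\mathsmaller{S(\Real)}} L_{a}$.

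The individual verifications are short; the genuine difficulty is \emph{uniformity} --- showing that $L_{a}$, $L_{a}^{\perp}$, $L_{a}^{\perp\perp}$ and the two-dimensional auxiliaries are closed linear subsets and that every displayed equality in $S(\C H)$ actually holds, all \emph{without} ever invoking the decision ``$a =_{\Real} 0$ or $a \neq_{\Real} 0$'' whose unavailability is the whole point. Here tightness of $\neq_{\Real}$ and the elementary divisibility fact above carry the argument, and keeping (a), (b) factored out avoids repeating the same reasoning seven times. Alternatively, one may just appeal to the corresponding counterexamples already given in~\cite{BS00}.
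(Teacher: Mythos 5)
Your proposal is correct, but it takes a genuinely different route from the paper. The paper's own proof is essentially a citation plus a one-line reduction: for $(\BC_1)$--$(\BC_6)$ it refers to~\cite{BS00}, pp.~505--506, and for $(\BC_7)$ it sets $M := L^{\perp\perp}$ and invokes $L^{\perp} =_{\mathsmaller{S(\C H)}} L^{\perp\perp\perp}$ (Proposition~\ref{prp: corCoQL}(iv)) so as to fall back on the double-orthocomplement clause. You instead make the Brouwerian counterexamples explicit from the single family $L_a := \overline{\Real a}$, and your control facts (a), (b) do correctly reduce each of $(\BC_3)$--$(\BC_7)$ to $\neg(a =_{\Real} 0) \To a \neq_{\Real} 0$, i.e.\ $\MP$, and $(\BC_1)$, $(\BC_2)$ to $a =_{\Real} 0 \vee a \neq_{\Real} 0$, i.e.\ $\LPO$; I checked the individual reductions, including the $\Real^2$ computation for $(\BC_3)$ and the dichotomy $d < 1 \vee d > 0$ for $(\BC_1)$, and they are sound. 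What your approach buys is self-containedness and, for $(\BC_7)$, a more transparent argument: your direct counterexample $L := \Real$, $M := L_a$ sidesteps the delicate point in the paper's reduction that for \emph{located} $L$ the identity $L =_{\mathsmaller{S(\C H)}} L^{\perp\perp}$ is constructively provable (Proposition~\ref{prp: corCoQL}(vii)), so the substitution $M := L^{\perp\perp}$ does not by itself exhibit anything non-constructive. What the paper's route buys is brevity and the stronger calibrations available in~\cite{BS00}, where $(\BC_1)$ and $(\BC_2)$ are refuted even for \emph{located} $L, M$; your $(L_a, 0)$ instance of $(\BC_2)$, like the $(L_a, \Real)$ instance of $(\BC_1)$, only refutes the unrestricted reading stated here (which is, to be fair, the reading the proposition asserts), and deriving $\MP$ rather than an omniscience principle for $(\BC_3)$--$(\BC_7)$ is a weaker but still perfectly legitimate Brouwerian counterexample in $\BISH$.
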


\begin{proof}
For the proof of $(\BC_1)$-$(\BC_6)$, see~\cite{BS00}, pp.~505-506. For the proof of $(\BC_7)$, take $M := L^{\perp \perp}$ and use the equality $L^{\perp} =_{S(\C H)} L^{\perp \perp \perp}$ in Proposition~\ref{prp: corCoQL}(iv). 
\end{proof}

Proposition~\ref{prp: CoQL} is found without proof in~\cite{BS00}. 
According to it, the subspaces of a Hilbert space, together with its located subspaces, form an \textit{intuitionistic quantum lattice}. In Proposition~\ref{prp: ComplQL} we provide an alternative approach to the proof of the next proposition.
Here we only added condition $(\CoQL_0)$ as part of the definition of an intuitionisitic quantum lattice. By Theorem~\ref{thm: BBcrucial} the totality of located subspaces of $\C H$ is equipped with the following inequality\footnote{In~\cite{BS00}, p.~511, this inequality is defined by quantification over the whole space $H$. Here we quantify over $H_0$, in order to be compatible with the partial case (see Definition~\ref{def: partialproj}).}.

\begin{definition}\label{def: ineqSloc}If $L, M$ are located subspaces of $\C H$, let
$$L \neq_{\mathsmaller{S_{\loc}(\C H)}} M :\TOT P_L \neq_{\mathsmaller{\C B(\C H)}} P_M :\TOT \exists_{x \in H_0}\big(P_L(x) \neq_H P_M(x)\big).$$
In this case we write $x \colon L \neq_{\mathsmaller{S_{\loc}(\C H)}} M$.
\end{definition}

\begin{example}\label{ex: negmap}
\normalfont
The function $^{\perp} \colon S_{\loc}(\C H) \to S_{\loc}(\C H)$, defined by the rule $L \mapsto L^{\perp}$
is well-defined (see our remark right after Definition~\ref{def: orthocomplement}). Moreover, it is strongly extensional, since if $L^{\perp} \neq_{\mathsmaller{\C B(\C H)}} M^{\perp} :\TOT \exists_{x \in H_0}\big(P_{L^{\perp}}(x) \neq_H P_{M^{\perp}}(x)\big)$, and if by Theorem~\ref{thm: BBcrucial} $x =_H l + l^{\perp} = m + m^{\perp}$, with $l \in L, m \in M, l^{\perp} \in L^{\perp}$, and $m^{\perp} \in M^{\perp}$, then by $\LinIneq_4$ and the strong extensionality of the map $x \mapsto (-1) \cdot x$ we get
$$l^{\perp} \neq_H m^{\perp} \To (x - l) \neq_H (x - m) \To (-l) \neq_H (-m) \To l \neq_H m \To P_L(x) \neq_H P_M(x).$$ 
Hence, $x \colon L \neq_{\mathsmaller{S_{\loc}(\C H)}} M$.
\end{example}

\begin{proposition}[Bridges-Svozil]\label{prp: CoQL}
Let the structure $(S(\C H), S_{\loc}(\C H), \leq, ^{\perp}, 0, 1)$. If $\C H$ is strict, if $L, M \in S(\C H)$, and if $(L_i)_{i \in I}$ is a family of subspaces over the index-set $I$, then the following hold$:$\\[1mm]
$(\CoQL_0)$ $0, 1 \in S_{\loc}(\C H)$ and $0 \neq_{\mathsmaller{S_{\loc}(\C H)}} 1$.\\[1mm]
$(\CoQL_1)$ $L \wedge M$ $(L \vee M)$ is the greatest lower bound $($least upper bound$)$ of $L,M$ relative to $\leq$.\\[1mm]
$(\CoQL_{1{'}})$ $\bigwedge_{i \in I} L_i$ $\big(\bigvee_{i \in I} L_i\big)$ is the greatest lower bound $($least upper bound$)$ of $(L_i)_{i \in I}$ relative to $\leq$.\\[1mm]
$(\CoQL_2)$ $0 \leq L \leq 1$.\\[1mm]
$(\CoQL_2)$ $0 \leq L \leq 1$.\\[1mm]
$(\CoQL_3)$ If $L \leq M$, then $M^{\perp} \leq L^{\perp}$.\\[1mm]
$(\CoQL_4)$ $L \leq L^{\perp \perp}$.\\[1mm]
$(\CoQL_5)$ $L \wedge L^{\perp} =_{\mathsmaller{S(\C H)}} 0$.\\[1mm]
$(\CoQL_6)$ If $L \in S_{\loc}(\C H)$, then $L^{\perp} \in S_{\loc}(\C H)$.\\[1mm]
$(\CoQL_7)$ If $L \in S_{\loc}(\C H)$ and $L \leq M^{\perp}$, then $L \vee M \in S_{\loc}(\C H)$.\\[1mm]
$(\CoQL_8)$ If $L \in S_{\loc}(\C H)$ and $L \leq M$, then $M =_{\mathsmaller{S(\C H)}} L \vee (M-L)$.
\end{proposition}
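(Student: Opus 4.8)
The plan is to verify $(\CoQL_0)$--$(\CoQL_8)$ one at a time, the point being that $(\CoQL_1)$--$(\CoQL_5)$ coincide with $(\ClQL_1)$--$(\ClQL_5)$ of Proposition~\ref{prp: ClQL} except that $(\CoQL_4)$ keeps only the inclusion $L\subseteq L^{\perp\perp}$, and that in this weakened form all of them go through without $\PEM$. Explicitly: $(\CoQL_1)$ and $(\CoQL_{1{'}})$ are just the lattice structure of closed subspaces --- $\bigcap_i L_i$ is a closed linear subset and the largest subspace below every $L_i$, while $\overline{\langle\bigcup_i L_i\rangle}$ lies above every $L_i$ and inside every closed subspace that contains them all; $(\CoQL_2)$ is $\{0\}\subseteq L\subseteq H$; $(\CoQL_3)$ holds because $L\subseteq M$ makes any vector perpendicular to all of $M$ perpendicular to all of $L$; $(\CoQL_4)$ is $L\subseteq L^{\perp\perp}$ by symmetry of $\perp$ --- and here I would stress that, unlike $(\ClQL_4)$, the equality is not constructive, being precisely the rejected $(\BC_5)$; and $(\CoQL_5)$ is the remark after Definition~\ref{def: orthocomplement}, since $x\in L\wedge L^{\perp}$ gives $x\perp x$, hence $\langle x,x\rangle=_{\D K}0$, hence $x=_H 0$. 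For $(\CoQL_0)$, the subspaces $0$ and $1$ are located because $\rho(x,\{0\})=_{\Real}||x||$ and $\rho(x,H)=_{\Real}0$ exist for every $x$, and since $\C H$ is strict there is $x\in H_0$; as $P_0(x)=_H 0$ and $P_1(x)=_H x$, this $x$ witnesses $0\neq_{\mathsmaller{S_{\loc}(\C H)}}1$ in the sense of Definition~\ref{def: ineqSloc}.

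$(\CoQL_6)$ is the stability of locatedness under orthocomplementation, the fact of~\cite{BB85} already invoked right after Definition~\ref{def: orthocomplement}. The substantive item is $(\CoQL_7)$. From $L\leq M^{\perp}$ I also get $M\subseteq M^{\perp\perp}\subseteq L^{\perp}$ by $(\CoQL_4)$ and $(\CoQL_3)$, so the sum $L+M$ is orthogonal and, since $L\wedge M=_{\mathsmaller{S(\C H)}}0$ (any $x\in L\cap M$ satisfies $x\perp x$), each element of $L+M$ has a \emph{unique} decomposition $l+m$ with $l\in L$, $m\in M$. The Pythagorean identity $||l+m||^2=_{\Real}||l||^2+||m||^2$ then makes the two coordinate maps $L+M\to L$ and $L+M\to M$ bounded, so a Cauchy sequence in $L+M$ has Cauchy coordinate sequences, which converge inside the closed subspaces $L$ and $M$; hence $L+M$ is complete, therefore closed, therefore $L\vee M=_{\mathsmaller{S(\C H)}}L+M$. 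To locate the join, for $x\in H$ I split $x=_H P_L(x)+(x-P_L(x))$ via Theorem~\ref{thm: BBcrucial}, with $x-P_L(x)\in L^{\perp}$; for $l\in L$ and $m\in M\subseteq L^{\perp}$ the vectors $P_L(x)-l\in L$ and $(x-P_L(x))-m\in L^{\perp}$ are orthogonal, so $||x-l-m||^2=_{\Real}||P_L(x)-l||^2+||(x-P_L(x))-m||^2$; minimising over $l$ kills the first term and reduces $\rho(x,L\vee M)$ to $\rho(x-P_L(x),M)$, which exists under the locatedness hypotheses of the statement. I expect this to be the main obstacle, since it is where closedness and completeness of the join must be established by hand --- keeping the argument within $\MANC$ rather than $\CC$ --- and where the orthogonal-decomposition bookkeeping has to be done with care.

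Finally, $(\CoQL_8)$ is the orthomodular law, and locatedness of $L$ is what makes it work. The inclusion $L\vee(M-L)\leq M$ is immediate: $L\leq M$ and $M-L:=M\wedge L^{\perp}\leq M$ give $L+(M-L)\subseteq M$ with $M$ closed. For the reverse inclusion $M\leq L\vee(M-L)$, I take $x\in M$ and split $x=_H P_L(x)+(x-P_L(x))$ by Theorem~\ref{thm: BBcrucial}; then $P_L(x)\in L\leq M$, so $x-P_L(x)\in M$ as $M$ is linear and contains $x$ and $P_L(x)$, while $x-P_L(x)\in L^{\perp}$ by the characterisation of $P_L(x)$ in Theorem~\ref{thm: BBcrucial}, whence $x-P_L(x)\in M\cap L^{\perp}=M-L$ and $x\in L+(M-L)$. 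Combining the two inclusions gives the equality. Apart from $(\CoQL_7)$ everything is routine, the only genuine external input being the Bishop--Bridges stability fact cited for $(\CoQL_6)$.
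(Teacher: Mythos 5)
The paper itself does not prove this proposition: it is stated as ``found without proof in~\cite{BS00}'', and the paper's own route to it is indirect --- Theorem~\ref{thm: loctot} identifies located subspaces with total elements of $S(\C H)$, and the proof of the two-dimensional Proposition~\ref{prp: ComplQL}, specialised to total orthocomplemented subspaces $(L, L^{\perp})$, is offered as ``an alternative approach to the proof''. Your direct one-dimensional verification is therefore a genuinely different and more self-contained route. Items $(\CoQL_0)$--$(\CoQL_6)$ are handled correctly. For $(\CoQL_8)$ your decomposition $x =_H P_L(x) + (x - P_L(x))$ with $x - P_L(x) \in M \cap L^{\perp}$ is exactly the idea the paper uses inside its proof of $(\ComplQL_8)$. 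For $(\CoQL_7)$ the two approaches diverge: the paper's proof of $(\ComplQL_7)$ establishes totality, i.e.\ $H =_{\mathsmaller{S(\C H)}} (L+M) + (L^{\perp}\wedge M^{\perp})$, and then appeals to Theorem~\ref{thm: loctot}, whereas you compute $\rho(x, L\vee M)$ directly via $P_L$. Your route buys an explicit formula $\rho(x, L\vee M) =_{\Real} \rho(x - P_L(x), M)$ and, as a by-product, the closedness of $L+M$ (which is Theorem~\ref{thm: crucial}(i) for the orthocomplemented pair $(L,M)$, and could simply be cited); the paper's route buys the stronger structural statement that $(L\vee M, L^{\perp}\wedge M^{\perp})$ is total.

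One point must be made explicit. Your reduction of $\rho(x, L\vee M)$ to $\rho(x - P_L(x), M)$ ``under the locatedness hypotheses of the statement'' uses the locatedness of $M$, which is \emph{not} among the printed hypotheses of $(\CoQL_7)$ --- only $L \in S_{\loc}(\C H)$ is assumed there. As printed the clause is in fact false: taking $L := 0$, which is located and satisfies $L \leq M^{\perp}$ for every $M$, it would make every closed subspace located, contradicting $(\BC_1)$ and $(\BC_2)$ of Proposition~\ref{prp: Brouwerian}. The intended hypothesis (as in the paper's $(\ComplQL_7)$, where both $\B L$ and $\B M$ are total, and as in Proposition~\ref{prp: corCoQL}(ix)) is $L, M \in S_{\loc}(\C H)$. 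Your proof is correct under that reading, but you should say so explicitly rather than attribute the locatedness of $M$ to hypotheses that are not there.
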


Property $(\CoQL_3)$ corresponds to contraposition, property $(\CoQL_4)$ corresponds to double negation introduction, which hold constructively also for the standard negation of formulas.
Property $(\CoQL_8)$ is the constructive version of the orthomodular law given by Bridges and Svozil.
The following properties of an intuitionistic quantum lattice are derived by its axioms in~\cite{BS00}.

\begin{proposition}[Bridges-Svozil]\label{prp: corCoQL}
If $\C H$ is strict, and $L, M \in S(\C H)$, then the following hold$:$\\[1mm]
\normalfont (i)
\itshape If $L \in S_{\loc}(\C H)$, $L \leq M$, and  $M-L =_{\mathsmaller{S(\C H)}} 0$, then $L =_{\mathsmaller{S(\C H)}} M$.\\[1mm]
\normalfont (ii)
\itshape $S(\C H)$ satisfies $(\CoQL_0)$-$(\CoQL_7)$, but we cannot accept in $\BISH$ that it satisfies $(\CoQL_8)$.\\[1mm]
\normalfont (iii)
\itshape $0 =_{\mathsmaller{S(\C H)}} 1^{\perp}$ and $1 =_{\mathsmaller{S(\C H)}} 0^{\perp}$.\\[1mm] 
\normalfont (iv)
\itshape $L^{\perp \perp \perp} =_{\mathsmaller{S(\C H)}} L^{\perp}$.\\[1mm]
\normalfont (v)
\itshape $(L \vee M)^{\perp} =_{\mathsmaller{S(\C H)}} L^{\perp} \wedge M^{\perp}$.\\[1mm]
\normalfont (vi)
\itshape $(L \wedge M)^{\perp} \geq L^{\perp} \vee M^{\perp}$.\\[1mm]
\normalfont (vii)
\itshape If $L \in S_{\loc}(\C H)$, then $L =_{\mathsmaller{S(\C H)}} L^{\perp \perp}$.\\[1mm]
\normalfont (viii)
\itshape If $L, M \in S_{\loc}(\C H)$ and $L \leq M$, then $M - L \in S_{\loc}(\C H)$.\\[1mm]
\normalfont (ix)
\itshape If $L, M \in S_{\loc}(\C H)$ and $L \leq M^{\perp}$, then $M =_{\mathsmaller{S(\C H)}} (L \vee M) - L$.\\[1mm]
\normalfont (x)
\itshape If $L, M \in S_{\loc}(\C H)$ and $L^{\perp} =_{\mathsmaller{S(\C H)}} M^{\perp}$, then $L =_{\mathsmaller{S(\C H)}} M$.
\end{proposition}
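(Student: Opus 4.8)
The plan is to derive all ten items formally from the axioms $(\CoQL_0)$--$(\CoQL_8)$ of Proposition~\ref{prp: CoQL} together with the lattice structure of $S(\C H)$, treating them in an order in which each one feeds the next. I would first dispose of the orthocomplement block (iii)--(vi), none of which uses locatedness. For (iii): since $1^{\perp}\leq 1$ by $(\CoQL_2)$ we have $1\wedge 1^{\perp}=1^{\perp}$, while $(\CoQL_5)$ gives $1\wedge 1^{\perp}=0$, so $1^{\perp}=0$; then $(\CoQL_4)$ yields $1\leq 1^{\perp\perp}=0^{\perp}$, and $0^{\perp}\leq 1$ by $(\CoQL_2)$, hence $0^{\perp}=1$. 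Item (iv) is $(\CoQL_3)$ plus $(\CoQL_4)$: $L\leq L^{\perp\perp}$ gives $L^{\perp\perp\perp}\leq L^{\perp}$, and the reverse inclusion is $(\CoQL_4)$ at $L^{\perp}$. Items (v) and (vi) are the one-sided De~Morgan computations using only $(\CoQL_1)$, $(\CoQL_3)$, $(\CoQL_4)$; for (v) the nontrivial inclusion $L^{\perp}\wedge M^{\perp}\leq(L\vee M)^{\perp}$ follows from $L,M\leq(L^{\perp}\wedge M^{\perp})^{\perp}$ (by $(\CoQL_3)$ and $(\CoQL_4)$), then $(\CoQL_1)$, then $(\CoQL_3)$ and $(\CoQL_4)$ once more.

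Next comes the located block. Item (i) is immediate: $(\CoQL_8)$ gives $M=L\vee(M-L)=L\vee 0=L$, the last step by $(\CoQL_1)$--$(\CoQL_2)$. Item (vii), which I expect to be the heart of the matter, then falls out by applying (i) with $M:=L^{\perp\perp}$: here $L\leq L^{\perp\perp}$ by $(\CoQL_4)$, $L$ is located by hypothesis, and $L^{\perp\perp}-L=L^{\perp\perp}\wedge L^{\perp}=L^{\perp}\wedge(L^{\perp})^{\perp}=0$ by $(\CoQL_5)$ applied to $L^{\perp}$; hence $L=L^{\perp\perp}$. Item (x) is now one line: from $L^{\perp}=M^{\perp}$ we get $L^{\perp\perp}=M^{\perp\perp}$, and (vii) collapses this to $L=M$.

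The remaining items (viii), (ix), (ii) all route through (vii). For (viii), with $L\leq M$ both located, I would rewrite $M-L=M\wedge L^{\perp}$ as $(M^{\perp}\vee L)^{\perp}$: by (v) and by (vii) at $M$ (so that $M^{\perp\perp}=M$), one has $(M^{\perp}\vee L)^{\perp}=M^{\perp\perp}\wedge L^{\perp}=M\wedge L^{\perp}$. Now $M^{\perp}$ is located by $(\CoQL_6)$ and $M^{\perp}\leq L^{\perp}$ by $(\CoQL_3)$, so $M^{\perp}\vee L$ is located by $(\CoQL_7)$, and hence so is its orthocomplement by $(\CoQL_6)$. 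For (ix), $L\vee M$ is located by $(\CoQL_7)$ (using $L\leq M^{\perp}$); the inclusion $M\leq(L\vee M)-L$ holds since $M\leq L\vee M$ and $M\leq L^{\perp}$ (the latter from $L\leq M^{\perp}$ via $(\CoQL_3)$ and $(\CoQL_4)$); for the reverse inclusion I would compute $\big((L\vee M)-L\big)-M=(L\vee M)\wedge L^{\perp}\wedge M^{\perp}=(L\vee M)\wedge(L\vee M)^{\perp}=0$ by (v) and $(\CoQL_5)$, and then apply (i) to get $(L\vee M)-L=M$. Finally, for (ii): reading each reference to located subspaces in $(\CoQL_0)$--$(\CoQL_8)$ as a reference to arbitrary subspaces, axioms $(\CoQL_0)$--$(\CoQL_7)$ hold directly from the definitions of orthocomplement, $\wedge$ and $\vee$ (with $0\neq_{\mathsmaller{S(\C H)}}1$ witnessed by $P_0=0\neq I_H=P_1$, the latter inequality because $\C H$ is strict), whereas $(\CoQL_8)$ in that reading asserts the orthomodular law $M=L\vee(M\wedge L^{\perp})$ for \emph{all} $L\leq M$; specialising to $M:=L^{\perp\perp}$ and using $L^{\perp\perp}\wedge L^{\perp}=0$ this forces $L^{\perp\perp}=L$ for every $L$, which is $(\BC_5)$ of Proposition~\ref{prp: Brouwerian} and hence not acceptable in $\BISH$.

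The genuinely delicate point is (vii): it is the single place where locatedness is indispensable, entering through $(\CoQL_8)$, and it is what makes the orthocomplementation $L\mapsto L^{\perp}$ an involution on $S_{\loc}(\C H)$ --- a fact then used, directly or indirectly, in (viii), (ix) and (x). The only other place calling for care is the reverse inclusion in (ix): one must not attempt to ``cancel $L$'' in $L\vee M=L\vee\big((L\vee M)-L\big)$, since the lattice is not distributive, and should instead pass through $(\CoQL_5)$ as above.
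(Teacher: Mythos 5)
Your derivation is correct. Note that the paper itself contains no proof of this proposition: it is attributed to Bridges--Svozil, with the surrounding text stating that these properties ``are derived by its axioms in [BS00]'' and, later, that the proof of the complemented analogue (Proposition~\ref{prp: corComplQL}) also serves as a proof of it. Your purely axiomatic derivation is the intended one, and the two places where something nontrivial happens are handled exactly right: (vii) is obtained from (i) with $M := L^{\perp\perp}$ and $(\CoQL_5)$ applied to $L^{\perp}$, which is the only point where $(\CoQL_8)$, and hence locatedness, enters; and the failure of $(\CoQL_8)$ for arbitrary subspaces in (ii) is correctly reduced to $(\BC_5)$ of Proposition~\ref{prp: Brouwerian} by the same specialisation $M := L^{\perp\perp}$. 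One small remark on (ix): the observation that $L \vee M$ is located via $(\CoQL_7)$ is harmless but not needed for your argument, since the application of (i) only requires the smaller subspace $M$ to be located.
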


By Proposition~\ref{prp: corCoQL}(x) the strongly extensional function $^{\perp} \colon S_{\loc}(\C H) \to S_{\loc}(\C H)$ is an injection.
A feature missing in the presentation of intuitionistic quantum logic in Proposition~\ref{prp: CoQL} is a lattice-theoretic characterisation of locatedness. In the abstract formulation of $\CoQL$ in~\cite{BS00} a quantum lattice is a structure $(\C S, \C T, \leq, -)$, such that $(\C S, \leq)$ is an inhabited poset, $- \colon \C S \to \C S$, and  $\C T$ is an inhabited subset of $\C S$ satisfying the corresponding abstract versions of $(\CoQL_1)$-$(\CoQL_8)$. No internal, lattice-theoretic characterisation of $\C T$ is given in~\cite{BS00}. Next, we provide such a characterisation.

\begin{theorem}\label{thm: loctot}If $L \in S(\C H)$, then $L$ is located if and only if $L \vee L^{\perp} =_{\mathsmaller{S(\C H)}} 1$
\end{theorem}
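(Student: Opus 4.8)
The plan is to prove both directions of the equivalence, with the nontrivial content residing in the implication ``$L \vee L^{\perp} =_{\mathsmaller{S(\C H)}} 1 \To L$ is located.'' The forward direction is immediate from the results already available: if $L$ is located, then by Theorem~\ref{thm: BBcrucial} every $x \in H$ decomposes as $x =_H l + (x - l)$ with $l = P_L(x) \in L$ and $x - l \in L^{\perp}$ (the orthogonality characterization of $P_L(x)$ is exactly the statement $\langle x - l, z\rangle =_H 0$ for all $z \in L$). Hence $x \in L + L^{\perp} \subseteq \overline{L + L^{\perp}} = L \vee L^{\perp}$, so $L \vee L^{\perp} =_{\mathsmaller{S(\C H)}} 1$. (One also knows $L^{\perp}$ is located here, though that is not strictly needed for this direction.)

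For the converse, suppose $L \vee L^{\perp} = \overline{L + L^{\perp}} =_{\mathsmaller{S(\C H)}} 1$; I must produce $\rho(x, L) := \inf\{\|x - l\| \mid l \in L\}$ for an arbitrary $x \in H$. First I would fix $x \in H$ and $\varepsilon > 0$. Since $x \in \overline{L + L^{\perp}}$, choose $l \in L$ and $m \in L^{\perp}$ with $\|x - (l + m)\| < \varepsilon$. The candidate value for the distance is $\|m\|$ (or rather $\|x - l\|$), and the key point is that for \emph{any} $l' \in L$, Pythagoras gives a clean lower bound: writing $x - l' = (x - l - m) + (l - l') + m$ and using that $(l - l') \in L$ is orthogonal to $m \in L^{\perp}$, one gets $\|x - l'\| \geq \|(l - l') + m\| - \varepsilon \geq \|m\| - \varepsilon$ after projecting out appropriately — more precisely $\|(l-l') + m\|^2 = \|l - l'\|^2 + \|m\|^2 \geq \|m\|^2$. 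So $\|m\| - \varepsilon$ is an approximate lower bound for the set $\{\|x - l'\| : l' \in L\}$, while $\|x - l\| \leq \|m\| + \varepsilon$ is an approximate upper bound witnessed by the element $l$. This shows the infimum set has arbitrarily good approximate infima in a way that lets one verify the constructive definition of $\inf$: given $\varepsilon > 0$ there is an element within $\varepsilon$ of a lower bound. To package this rigorously one shows the family of numbers $\|x - l\|$ over $l \in L$ is order-located below — i.e., for every rational $p < q$, either $q$ is an upper bound of $\{\|x-l'\|\}$ or some $\|x - l'\| > p$ — which is exactly what the above estimates deliver once $\varepsilon$ is chosen smaller than $(q-p)/3$.

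The main obstacle I anticipate is the constructive bookkeeping around the infimum: one cannot simply assert $\rho(x,L)$ exists and equals $\|m\|$, because $m$ depends on $\varepsilon$ and is not canonical. The clean way around this is to verify the \emph{order-locatedness} criterion for the distance (that for all rationals $p < q$ one can decide ``$\forall_{l'} \|x - l'\| \leq q$'' or ``$\exists_{l'} \|x - l'\| > p$''), which suffices constructively for the infimum to exist, and to note that the decision is made using only the approximation $\|x - (l+m)\| < \varepsilon$ together with the exact Pythagorean identity $\|(l - l') + m\|^2 = \|l - l'\|^2 + \|m\|^2$ — no countable choice is needed, since for each $\varepsilon$ a single pair $(l,m)$ is extracted, and the existential witness $l'$ in the second disjunct can be taken to be $l$ itself. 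A secondary subtlety is ensuring $L$ is inhabited (so the infimum set is nonempty): $0 \in L$ since $L$ is a subspace, so this is automatic. Finally I would remark that this argument, like the two-dimensional Theorem~\ref{thm: crucial} to come, sidesteps the sequence-extraction that forces countable choice in the Bishop--Bridges approach.
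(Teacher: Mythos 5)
Your proof is correct in substance, but it takes a genuinely different route from the paper's. The paper disposes of the converse direction by citing Exercise 37 of~\cite{BB85} (proved in~\cite{BV06}, p.~180) and by observing that it is the special case of Theorem~\ref{thm: crucial}(v) in which $\dom(\B L) =_{\mathsmaller{S(\C H)}} H$: there one first proves (Theorem~\ref{thm: crucial}(i), via a Cauchy-sequence argument using Pythagoras and $\MANC$) that $L + L^{\perp}$ is closed, so that every $x \in L \vee L^{\perp}$ has an \emph{exact} decomposition $x =_H l + m$, and then Pythagoras gives $\rho(x, L) =_{\Real} ||m||$ outright. You instead never establish closedness of $L + L^{\perp}$: for each $\varepsilon > 0$ you extract a single approximate decomposition $||x - (l+m)|| < \varepsilon$ and use the exact identity $||(l - l') + m||^2 =_{\Real} ||l - l'||^2 + ||m||^2$ to show that $||m|| - \varepsilon$ is a lower bound for $\{||x - l'|| \mid l' \in L\}$ while $l$ witnesses an element below $||m|| + \varepsilon$, which verifies the constructive existence criterion for the infimum directly. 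Your route buys a proof that sidesteps the sequence-extraction (and hence the unique-choice step) of Theorem~\ref{thm: crucial}(i); the paper's route buys more information, namely the exact decomposition and the formula $\rho(x, L^1) =_{\Real} ||P_{\B L}^0(x)||$ that feeds into the construction of the partial projection.

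One slip to fix: you twice state the order-locatedness criterion in its \emph{supremum} form (``either $q$ is an upper bound of $\{||x-l'||\}$ or some $||x - l'|| > p$''). Since $\{||x - l'|| \mid l' \in L\}$ is in general unbounded above, that disjunction is not the one you can or need to decide. The criterion for the \emph{infimum} of an inhabited set bounded below is: for all rationals $p < q$, either $p$ is a lower bound of the set or some element is $< q$. Your estimates deliver exactly this (with $\varepsilon := (q-p)/3$, compare $||m||$ against the interval and use $||m|| - \varepsilon$ as the lower bound or $l$ as the witness), so the repair is purely a matter of wording.
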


\begin{proof}
If $L$ is located, then the equality $L \vee L^{\perp} =_{\mathsmaller{S(\C H)}} 1$ follows from Theoremm~\ref{thm: BBcrucial}. If $L \vee L^{\perp} =_{\mathsmaller{S(\C H)}} 1$, then
by Exercise 37 in~\cite{BB85}, p.~394, which is proved in~\cite{BV06}, p.~180, we have that $L$ and $L^{\perp}$ are located. This proof is a special case of our proof of Theorem~\ref{thm: crucial}(v) for an orhocomplemented subspace with domain the whole space $H$.  
\end{proof}

\begin{definition}\label{def: totalsub}
If $L \leq \C H$, we call $L$ total if $L \vee L^{\perp} =_{\mathsmaller{S (\C H)}} 1$, and we denote their set by $S_{\tota}(\C H)$.
\end{definition}

Through this lattice-theoretic characterisation of locatedness quantum logic can be formulated constructively as follows.

\begin{definition}\label{def: CoQL}
An intuitionistic quantum lattice is a structure $(\C S, \leq, -, 0, 1)$, such that 
$(\C S, \leq)$ is a poset, $0, 1 \in \C S$, and $- \colon \C S \to \C S$. If $\C S_{\tota}$ is the subset of the total elements of $\C S$, i.e., $p \vee (-p) =_{\C S} 1$, for every $p \in \C S_{\tota}$, equipped with an inequality $t \neq_{\mathsmaller{\C S_{\tota}}} t{'}$, the following conditions hold$:$\\[1mm]
$(\CoQL_0)$ $0, 1 \in \C S_{\tota}$ and $0 \neq_{\mathsmaller{\C S_{\tota}}} 1$.\\[1mm]
 $(\CoQL_1)$ $p \wedge q$ $(p \vee q)$ is the least upper bound $($greatest lower bound$)$ of $p, q$ relative to $\leq$.\\[1mm]
 $(\CoQL_2)$ $0 \leq p \leq 1$.\\[1mm]
 $(\CoQL_3)$ If $p \leq q$, then $-q \leq -p$.\\[1mm]
 $(\CoQL_4)$ $p \leq -(-p)$.\\[1mm]
 $(\CoQL_5)$ $p \wedge (-p) =_{\mathsmaller{\C S}} 0$.\\[1mm]
 $(\CoQL_6)$ If $t \in \C S_{\tota}$, then $-t \in \C S_{\tota}$.\\[1mm]
 $(\CoQL_7)$ If $t \in \C S_{\tota}$ and $t \leq (-p)$, then $t \vee p \in \C S_{\tota}$.\\[1mm]
 $(\CoQL_8)$ If $t \in \C S_{\tota}$ and $t \leq p$, then $p =_{\mathsmaller{\C S}} t \vee (p-t)$, where $p-t := p \wedge (-t)$.\\[1mm]
 A complete intuitionistic quantum lattice is defined in the obvious way.
\end{definition}

Since $0, 1 \in \C S_{\tota}$,  we get immediately that 
$\C S_{\tota}$ is inhabited.

\section{Orthocomplemented subspaces of a Hilbert space}
\label{sec: orthosub}

In this section we introduce the notion of an orthocomplemented subspace  of a Hilbert space $\C H$. Orthocomplemented subspaces form a two-dimensional counterpart to the one-dimensional subspaces of $\C H$, and the Hilbert space analogue to Bishop's complemented subsets.

\begin{definition}\label{def: orthocs}
If $L^1, L^0 \in S(\C H)$ and $L^1 \perp L^0$, we call the pair $\B L := (L^1, L^0)$ an orthocomplemented subspace of $\C H$, in symbols $\B L \leq \C H$. If $\B L \leq H$, its domain is  
$\dom(\B L) := L^1 \vee L^0 \leq H$. We call $\B L$\\[1mm]
\normalfont (a)
\itshape total, if $\dom(\B L) =_{\mathsmaller{S(\C H)}} H$, \\[1mm]
\normalfont (b)
\itshape strict, if $L^1$ is strict,\\[1mm]
\normalfont (c)
\itshape located, if $L^1$ and $L^0$ are located.\\[1mm]
 Let $\B {\C S}(\C H)$ be the totality of orthocomplemented subspaces of $\C H$. If $\B L, \B M \in \B {\C S}(\C H)$, we define
$$\B L \leq \B M :\TOT L^1 \leq M^1 \ \& \ M^0 \leq L^0,$$
$$\B L =_{\mathsmaller{\B {\C S}(\C H)}} \B M :\TOT \B L \leq \B M \ \& \ \B M \leq \B L.$$
If $L^1 \coperp L^0$, we call the pair $\C L := (L^1, L^0)$ a coorthocomplemented subspace of $\C H$, in symbols $\C L \leq \C H$.  Let ${\C C}(\C H)$ be the totality of coorthocomplemented subspaces of $\C H$. If $\C L, \C M \in {\C C}(\C H)$, we define\footnote{Coorthocomplemented subspaces correspond to overlapping subsets, and their order corresponds to the order
$(A^1, A^0) \leq (B^1, B^0) :\TOT A^1 \subseteq B^1 \ \& \ A^0 \subseteq B^0$ between overlapping subsets. If $(A^1, A^0)$ is an overlapping subset, then it represents the intersection $A^1 \cap A^0$, and hence the inequality between overlapping subsets amounts to $A^1 \cap A^0 \subseteq B^1 \cap B^0$.}
$$\C L \leq \C M :\TOT L^1 \leq M^1 \ \& \ L^0 \leq M^0,$$
$$\C L =_{\mathsmaller{{\C S}(\C H)}} \C M :\TOT \C L \leq \C M \ \& \ \C M \leq \C L.$$
\end{definition}

\begin{remark}\label{rem: orthocs1}Let $L^1, L^0 \in S(\C H)$. The following hold$:$\\[1mm]
	\normalfont (i)
	\itshape If $L^1 \perp L^0$, then $L^1 \cap L^0 = \{0\}$ and $L^0 \subseteq {(L^1)}^{\perp}$.\\[1mm]
	\normalfont (ii)
	\itshape If $L$ is located, then $\B L := (L, L^{\perp})$ is a total orthocomplemented subspace of $\C H$ and $H = L \oplus L^{\perp}$.\\[1mm]
	\normalfont (iii)
	\itshape If $\B L$ is total, then $\B L$ is located.\\[1mm]
	\normalfont (iv)
	\itshape $L^1$ and $L^0$ are located in the Hilbert subspace $\dom(\B L)$ of $\C H$.
\end{remark}

\begin{proof}
	Case (i) is trivial to show, and cases (ii)-(iv) follow 
	immediately 
	by Theorem~\ref{thm: loctot}.
	\end{proof}

Case (i) of the following theorem is known classically. It is shown differently (see~\cite{KR83}, p.~112, and our comment after the proof of Corollary~\ref{cor: cor7}), but it is shown constructively, exactly as we prove it here, in~\cite{Ha57}, p.~25. 
The proof of Theorem~\ref{thm: crucial}(ii) rests on Myhill's axiom unique choice $(\MANC)$. Theorem~\ref{thm: crucial} is our two-dimensional version of Theorem (8.7) in~\cite{BB85}, pp.~366-367, since we replace the locatedness of a closed subspace of $\C H$ by the totality of an orthocomplemented subspace of $\C H$. Totality of complemented objects is a two-dimensional property that appears in all two-dimensional representations of some sort of complementation (see~\cite{PW22, MWP24, MWP25, Pe24a}), and expresses a limiting classical behaviour. While the approach of Bishop and Bridges is a direct constructive translation of the classical proof of the existence of the projection acting on a Hilbert space that corresponds to an arbitrary closed subspace, our proof takes a new path, that avoids countable choice.

\begin{theorem}[$\MANC$]\label{thm: crucial}
If $\B L := (L^1, L^0) \in \B S(\C H)$, the following hold$:$\\[1mm]
\normalfont (i)
\itshape $L^1 \vee L^0 =_{\mathsmaller{S(\C H)}} L^1 \oplus L^0$.\\[1mm]
\normalfont (ii)
\itshape There is a function $P_{\B L}^1 \colon \dom(\B L) \to L^1$, which is a linear, self-adjoint,  and bounded map with $||P_{\B L}^1|| =_{\Real} 1$, if $L^1$ is strict, which is also idempotent, i.e., $P_{\B L}^1 \big(P_{\B L}^1 (x)\big) =_H x$, for every $x \in \dom(\B L)$.\\[1mm]
\normalfont (iii)
\itshape The function $P_{\B L}^0 \colon \dom(\B L) \to L^0$, where $P_{\B L}^0 := \id_{\dom(\B L)} - P_{\B L}^1$ satisfies the same properties as $P_{\B L}^1$.\\[1mm]
\normalfont (iv)
\itshape If $x \in \dom(\B L)$, then $P_{\B L}^1 (x)$ is the unique element of $L^1$ satisfying the following property$:$
$$(*) \ \ \ \ \ \ \ \ \ \ \ \ \ \ \ \ \ \ \ \ \ \ \ \ \ \ \ \ \ \ \ \ \ \ \ \ \forall_{m^1 \in L^1}\big(\langle x - P_{\B L}^1 (x), m^1\rangle =_{\D K} 0\big). \ \ \ \ \ \ \ \ \ \ \ \ \ \ \ \ \ \ \ \ \ \ \ \ \ \ \ \ \ \ \ \ \ \ \ \ \ \ \ \ \ \ \ \ \ \ \ $$
\normalfont (v)
\itshape $L^1$ and $L^0$ are located in $\dom(\B L)$, where $\rho(x, L^1) =_{\Real} ||P_{\B L}^0(x)||$, and $\rho(x, L^0) =_{\Real} ||P_{\B L}^1(x)||$, for every $x \in \dom(\B L)$.\\[1mm]
\normalfont (vi)
\itshape If $x \in \dom(\B L) \cap (L^1)^{\neq}$, then there is $z \in L^0 \cap H_0$, hence, $z \perp L^1$ and $z \neq_H 0$.
\end{theorem}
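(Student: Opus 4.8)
The plan is to derive all six items from the orthogonal decomposition on the algebraic sum $L^1 + L^0$. First I would record that $L^1 \wedge L^0 =_{\mathsmaller{S(\C H)}} 0$: any $z \in L^1 \cap L^0$ satisfies $z \perp z$ by $L^1 \perp L^0$, hence $z =_H 0$ by $(\InnProd_2)$. Consequently every $x \in L^1 + L^0$ decomposes \emph{uniquely} as $x =_H x^1 + x^0$ with $x^1 \in L^1$, $x^0 \in L^0$, and by Myhill's axiom of unique choice $(\MANC)$ this unique existence yields a genuine (evidently linear) decomposition function $x \mapsto (x^1, x^0)$ on $L^1 + L^0$. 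The Pythagorean identity $||x||^2 =_{\Real} ||x^1||^2 + ||x^0||^2$ (valid since $x^1 \perp x^0$) is the single computation behind everything below.

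For (i): if $(a_n)$ is a Cauchy sequence in $L^1 + L^0$, applying Pythagoras to $a_n - a_m$ shows $(a_n^1)$ and $(a_n^0)$ are Cauchy in the complete subspaces $L^1, L^0$, hence converge to some $x^1 \in L^1$, $x^0 \in L^0$, and $a_n \to x^1 + x^0 \in L^1 + L^0$. Thus $L^1 + L^0$ is complete, hence closed, hence $L^1 \vee L^0 := \overline{L^1 + L^0} =_{\mathsmaller{S(\C H)}} L^1 + L^0 =_{\mathsmaller{S(\C H)}} L^1 \oplus L^0$. Crucially, nothing here requires extracting a minimising sequence, so $-$ unlike the proof of Theorem~\ref{thm: BBcrucial} $-$ countable choice is not used.

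For (ii)--(iv): by (i) we may take $\dom(\B L) = L^1 + L^0$ and set $P_{\B L}^1(x^1 + x^0) := x^1$. It is well-defined and linear by the decomposition function; self-adjoint because for $x = x^1 + x^0$, $y = y^1 + y^0$ both $\langle P_{\B L}^1 x, y\rangle$ and $\langle x, P_{\B L}^1 y\rangle$ equal $\langle x^1, y^1\rangle$, the cross terms vanishing by $L^1 \perp L^0$; bounded with $||P_{\B L}^1 x|| = ||x^1|| \le ||x||$ by Pythagoras; and idempotent, since $P_{\B L}^1$ restricts to the identity on its range $L^1$. If $L^1$ is strict, a unit vector $v \in L^1 \subseteq \dom(\B L)$ gives $||P_{\B L}^1 v|| = 1$, so $||P_{\B L}^1|| =_{\Real} 1$. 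For (iii), $P_{\B L}^0 := \id_{\dom(\B L)} - P_{\B L}^1$ sends $x^1 + x^0$ to $x^0$, i.e.\ it is the map ``$P^1$'' built for the swapped orthocomplemented subspace $(L^0, L^1)$ (orthocomplemented by Remark~\ref{rem: perp1}(ii)), so it inherits every property of (ii) with the roles of $L^1, L^0$ interchanged. For (iv), $y := P_{\B L}^1(x) = x^1 \in L^1$ satisfies $\langle x - y, m^1\rangle =_{\D K} \langle x^0, m^1\rangle =_{\D K} 0$ for all $m^1 \in L^1$; and if $y, y' \in L^1$ both satisfy $(*)$, then $\langle y - y', m^1\rangle =_{\D K} 0$ for every $m^1 \in L^1$, and choosing $m^1 := y - y'$ forces $y =_H y'$ by $(\InnProd_2)$.

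Finally (v)--(vi) read off directly. For $x \in \dom(\B L)$ and $l \in L^1$, Pythagoras gives $||x - l||^2 =_{\Real} ||x^1 - l||^2 + ||x^0||^2 \ge ||x^0||^2 = ||P_{\B L}^0(x)||^2$, with equality at $l := x^1$; hence the infimum defining $\rho(x, L^1)$ exists and equals $||P_{\B L}^0(x)||$, and symmetrically $\rho(x, L^0) =_{\Real} ||P_{\B L}^1(x)||$, which is exactly the locatedness of $L^1$ and $L^0$ in $\dom(\B L)$. For (vi), if $x \in \dom(\B L)$ with $x \neq_H l$ for every $l \in L^1$, then in particular $x \neq_H x^1$, so $z := x^0 = x - x^1 \neq_H 0$ by $(\LinIneq_1)$; this $z$ lies in $L^0$, hence $z \perp L^1$, and $z \in H_0$. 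I expect the only real friction to be the constructive bookkeeping around the decomposition map and the passage ``complete $\Rightarrow$ closed'' for $L^1 + L^0$ $-$ namely verifying that turning unique existence of decompositions into a function needs nothing past $\MANC$, and that the argument for (i) at no point selects approximants; once that is pinned down, the rest is routine Pythagorean computation with the inner-product axioms.
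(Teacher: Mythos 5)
Your proof is correct and follows essentially the same route as the paper's: the unique orthogonal decomposition on the algebraic sum $L^1 + L^0$ turned into a function via $\MANC$, Pythagoras applied to differences of a Cauchy sequence to show $L^1 + L^0$ is closed, and the induced projection carrying all remaining items. The only divergences are cosmetic: you bound $P_{\B L}^1$ directly by Pythagoras where the paper goes through self-adjointness and Cauchy--Schwarz, and your uniqueness argument in (iv) (subtract the two instances of $(*)$ and test against $y - y'$) is shorter and field-independent, whereas the paper performs an explicit three-substitution computation carried out only for $\D K := \Real$.
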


\begin{proof}
(i)  Clearly, $L^1 \oplus L^0 \leq L^1 \vee L^0$. To show the converse inequality, let $x \in L^1 \vee L^0$. By definition there is a sequence $(x_n)_{n \in \Nat}$ in $L^1 + L^0$, such that $(x_n)_n \stackrel{n} \longrightarrow x$. For every $n \in \Nat$ there are (unique) $l^1_n \in L^1$ and $l_n^0 \in L^0$, such that $x_n =_{H} l_n^1 + l_n^0$. Since $(x_n)_{n \in \Nat}$ is a Cauchy sequence, we have that
$$||(l_n^1 + l_n^0) - (l^1_m + l^0_m)|| \stackrel{n, m} \longrightarrow 0 \To ||(l_n^1 + l_n^0) - (l^1_m + l^0_m)||^2 \stackrel{n, m} \longrightarrow 0.$$
Since $l_n^1 - l^1_m \in L^1$, $l_n^0 - l^0_m \in L^0$, and $L^1 \perp L^0$, by Pythagoras theorem we get
$$||(l_n^1 - l^1_m) + (l_n^0 - l^0_m)||^2 =_{\Real} ||l_n^1 - l^1_m||^2 + ||l_n^0 - l^0_m||^2.$$
Consequently, we get
$$||l_n^1 - l_m^1|| \stackrel{n, m} \longrightarrow 0 \ \ \& \ \ ||l_n^0 - l_m^|| \stackrel{n, m} \longrightarrow 0,$$
i.e., the extracted by $\AC!$ sequences $(l_n^1)_{n \in \Nat}$ in $L^1$ and $(l_n^0)_{n \in \Nat}$ in $L^0$ are Cauchy sequences. As $L^1$ and $L^0$ are closed in $H$, they are complete (see~\cite{BB85}, p.~91). Hence, there are $l^1 \in L^1$, such that $(l_n^1)_n \stackrel{n} \longrightarrow l^1$ and $l^0 \in L^0$, such that $(l_n^0)_n \stackrel{n} \longrightarrow l^0$. By the continuity of $+$ we get 
$(l_n^1 + l_n^0)_n \stackrel{n} \longrightarrow l^1 + l^0$, and by the uniqueness of the limit we get $x =_H l^1 + l^0$.
If $m^1 \in L^1$ and $m^0 \in L^0$, such that $x =_H m^1 + m^0$, then 
$L^1 \ni l^1 - m^1 =_H l^0 - m^0 \in L^0$, hence $l^1 - m^1 =_H 0$ and $l^0 - m^0 =_H 0$, i.e., $l^1 =_H m^1$ and $l^0 =_H m^0$.\\
(ii) By $\MANC$ there is a function $P_{\B L}^1 \colon \dom(\B L) \to L^1$, where $x \mapsto l^1$, and $x =_H l^1 + l^0$, as explained above. If $y \in \dom(\B L)$, such that $y =_H m^1 + m^0$, then $x + y \mapsto l^1 + m^1$, and if $k \in \D K$, then $k x \mapsto k l^1$, i.e., $P_{\B L}^1$ is linear. Since $x \mapsto l^1 \mapsto l^1$, $P_{\B L}^1$ is idempotent. Next we show that if $x, y \in \dom(\B L)$, then
$$\langle P_{\B L}^1(x), y \rangle =_{\D K} \langle x, P_{\B L}^1(y) \rangle.$$
If $x =_H l^1 + l^0$ and $y =_H m^1 + m^0$, then the required equality $\langle l^1, m^1 + m^0 \rangle =_{\D K} \langle l^1 + l^0, m^1 \rangle$ follows from the equalities $\langle l^1, m^0 \rangle =_{\D K} 0 =_{\D K} \langle l^0, m^1 \rangle$. Hence,
$$0 \leq ||P_{\B L}^1(x)||^2 =_{\D K} \langle P_{\B L}^1(x), P_{\B L}^1(x)\rangle =_{\D K} \langle x, P_{\B L}^1 P_{\B L}^1(x)\rangle =_{\D K} \langle x, P_{\B L}^1(x)\rangle =_{\D K} \langle P_{\B L}^1(x), x\rangle.$$
By the inequality Cauchy-Schwartz we get $\big|\langle x, P_{\B L}^1(x)\rangle\big| \leq ||x|| \cdot || P_{\B L}^1(x)||$, hence $||P_{\B L}^1(x)|| \leq ||x||$, and $||P_{\B L}^1|| \leq 1$. If $l^1 \in L^1$ and $l^1 \neq_H 0$, then $P_{\B L}^1(l^1) := l^1$ and $||P_{\B L}^1|| =_{\D K} 1$.\\
(iii) The proof is straightforward.\\
(iv) For simplicity, we give the proof if $\D K := \Real$. Since $x - l^1 =_H l^0 \perp L^1$, we get that $l^1$ trivially satisfies $(*)$. Let $s^1 \in L^1$ satisfying also $(*)$.  If $m^1 := l^1$ in $(*)$ for $s^1$ we get
$$\langle x-s^1, l^1\rangle =_{\Real} 0 \To \langle x, l^1\rangle =_{\Real} \langle s^1, l^1\rangle \To \langle l^1, l^1\rangle =_{\Real} \langle s^1, l^1\rangle.$$
If $m^1 := l^1 -s^1$ in $(*)$ for $s^1$, and using the first equality right above we get
$$\langle x-s^1, l^1 -s^1\rangle =_{\Real} 0 \To \langle x - s^1, l^1\rangle + \langle x - s^1, - s^1\rangle =_{\Real} 0 \To \langle x, s^1\rangle =_{\Real} \langle s^1, s^1\rangle.$$
Hence, 
$$\langle l^1-s^1, l^1-s^1\rangle =_{\Real} \big(\langle l^1, l^1\rangle - \langle l^1, s^1\rangle\big) + \big(\langle s^1, s^1\rangle - \langle s^1, l^1\rangle\big) =_{\Real} 0 + \langle x, s^1\rangle - \langle s^1, l^1\rangle.$$
If $m^1 := s^1$ in $(*)$ for $l^1$, we get $ \langle x - l^1, s^1\rangle =_{\Real} 0$, hence $\langle l^1-s^1, l^1-s^1\rangle =_{\Real} 0$ and $l^1 =_H s^1$.\\
(v) First, we show that $L^1$ is located in $\dom(\B L)$. If $x \in \dom(\B L) := L^1 \oplus L^0$, and $x =_H l^1 + l^0$, with $l^1 \in L^1$ and $l^0 \in L^0$, then, for every $m^1 \in L^1$, by Pythagoras theorem we have that
\begin{align*}
||x-m^1||^2 & =_{\Real} ||l^1 + l^0 -m^1||^2\\
& =_{\Real} ||(l^1 - m^1) + l^0||^2\\
& =_{\Real} ||(l^1 - m^1)||^2 + ||l^0||^2\\
& \geq ||l^0||^2\\
& =_{\Real} ||x - l^1||^2. 
\end{align*}
Hence, $\rho(x, L^1) =_{\Real} ||l^0|| =_{\Real} ||x - l^1||$. Similarly, $\rho(x, L^0) =_{\Real} ||l^1||$, and $L^1, L^0$ are located in $\dom(\B L)$.\\
(vi) We work as in the proof of Corollary~\ref{cor: corBBcrucial}(ii)\footnote{Cases (v) and (vi) of Theorem~\ref{thm: crucial} are the orthocomplemented version of Corrollary~\ref{cor: corBBcrucial}(i) and (ii), respectively.}, without using the locatedness hypothesis for $L^1$, since $x \in \dom(\B L)$.
\end{proof}

To avoid $\MANC$ in the above proof, one needs to accommodate the related notions with additional witnessing data. For example, to every $L \leq \C H$ we can add a map $\Closure_L \colon \overline{L} \to \Cauchy(L)$, where $x \mapsto (x_n)_{n \in \Nat}$, a given Cauchy sequence converging to $x$. Moreover, we could add to $\C H$ a map $\Complete_{\C H} \colon \Cauchy(\C H) \to H$, where each Cauchy sequence in $H$ is mapped to its limit. Consequently, in the proof we could use the map $\Closure_{L^1 \vee L^0}$ and $\Complete_{\C H}$ appropriately, in order to avoid $\MANC$. 
If $L^1$ and $L^0$ are not orthogonal, then case (i) of the previous proposition may not hold (see Problems 52-55 in~\cite{Ha82}). By the proof of Theorem~\ref{thm: crucial}(ii) we also get that $P_{\B L}^1$ and $P_{\B L}^0$ are positive operators.

\begin{corollary}\label{cor: corthm}
If $\B L := (L^1, L^0), \B M := (M^1, M^0) \in \B S (\C H)$, and $x \in H$, then the following hold$:$\\[1mm]
\normalfont(i)
\itshape If $\B L$ is total, then $L^0 =_{\mathsmaller{S(\C H)}} (L^1)^{\perp}$ and 
$L^1 =_{\mathsmaller{S(\C H)}} (L^1)^{\perp \perp}$.\\[1mm]
\normalfont(ii)
\itshape If $\B L$ is total, then $\B L =_{\mathsmaller{\B S (\C H)}} (L, L^{\perp})$, where $L := L^1$.\\[1mm]
\normalfont(iii)
\itshape If $M \leq \C H$, such that $L^1 \leq M$, then 
$$M \wedge (L^{0} \vee L^1) =_{\mathsmaller{S(\C H)}} (M \wedge L^{0}) \vee L^1,$$
and if $L^{0} \leq M$, then 
$$M \wedge (L^1 \vee L^{0}) =_{\mathsmaller{S(\C H)}} (M \wedge L^1) \vee L^{0}.$$
\normalfont(iv)
\itshape If $\B L$ is total and $M \leq \C H$, such that $L \leq M$, then 
$$M \wedge (L^{\perp} \vee L) =_{\mathsmaller{S(\C H)}} (M \wedge L^{\perp}) \vee L,$$
and if $L^{\perp} \leq M$, then 
$$M \wedge (L \vee L^{\perp}) =_{\mathsmaller{S(\C H)}} (M \wedge L) \vee L^{\perp}.$$
\normalfont(v)
\itshape $x \in \dom(\B L) \TOT \exists_{l^1 \in L^1}\exists_{l^0 \in L^0}\big(x =_H l^1 + l^0\big)$.\\[1mm]
\normalfont(vi)
\itshape $x \in \dom(\B L)^{\neq} \TOT \forall_{l^1 \in L^1}\forall_{l^0 \in L^0}\big(x \neq_H l^1 + l^0\big)$
\end{corollary}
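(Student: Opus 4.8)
The plan is to obtain all six items from Theorem~\ref{thm: crucial}, the orthogonal splitting $\dom(\B L) =_{\mathsmaller{S(\C H)}} L^1 \oplus L^0$ it provides, and Proposition~\ref{prp: closedMod}; no genuinely new argument is required, so the work is mostly bookkeeping.

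For (i), I would first note the inclusion $L^0 \leq (L^1)^{\perp}$, which holds for any orthocomplemented subspace by Remark~\ref{rem: orthocs1}(i). For the reverse inclusion I would use totality: given $z \in (L^1)^{\perp}$, Theorem~\ref{thm: crucial}(i) lets me write $z =_H l^1 + l^0$ with $l^1 \in L^1$, $l^0 \in L^0$; since $z \perp L^1$ and $l^0 \perp L^1$, the vector $l^1 =_H z - l^0$ is perpendicular to $L^1$, hence $l^1 \perp l^1$, so $l^1 =_H 0$ and $z =_H l^0 \in L^0$. This gives $L^0 =_{\mathsmaller{S(\C H)}} (L^1)^{\perp}$. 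For the second equality I would observe that the swapped pair $(L^0, L^1)$ is again total, its domain being $L^0 \vee L^1 =_{\mathsmaller{S(\C H)}} L^1 \vee L^0 =_{\mathsmaller{S(\C H)}} H$, and apply the part just proved to it to get $L^1 =_{\mathsmaller{S(\C H)}} (L^0)^{\perp} =_{\mathsmaller{S(\C H)}} (L^1)^{\perp\perp}$. Item (ii) then follows by unfolding the order on $\B S(\C H)$: with $L := L^1$ we have $L^0 =_{\mathsmaller{S(\C H)}} L^{\perp}$, so $\B L =_{\mathsmaller{\B S(\C H)}} (L, L^{\perp})$.

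For (iii) I would use that Theorem~\ref{thm: crucial}(i) gives $\overline{L^1 + L^0} =_{\mathsmaller{S(\C H)}} L^1 + L^0$ (and symmetrically for $L^0 + L^1$), so Proposition~\ref{prp: closedMod} applies to the pair $L^1, L^0$ in either order. For the first identity I instantiate that proposition with its ``$M$'' as $L^0$, its ``$N$'' as $L^1$, and its ``$L$'' as $M$ --- legitimate because $L^1 \leq M$ --- obtaining $M \wedge (L^0 \vee L^1) =_{\mathsmaller{S(\C H)}} (M \wedge L^0) \vee L^1$; the second identity is the same with $L^1$ and $L^0$ swapped. Item (iv) is the specialization of (iii) to a total $\B L$, replacing $L^0$ by $L^{\perp}$ via (i) and writing $L := L^1$. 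Finally, (v) is just a restatement of Theorem~\ref{thm: crucial}(i) --- membership of $x$ in $\dom(\B L) =_{\mathsmaller{S(\C H)}} L^1 \oplus L^0$ is by definition the existence of a decomposition $x =_H l^1 + l^0$, and conversely any such $x$ lies in $L^1 + L^0 \subseteq \overline{L^1 + L^0} =_{\mathsmaller{S(\C H)}} \dom(\B L)$ --- and for (vi) I would unfold $x \in \dom(\B L)^{\neq}$ as $\forall_{y \in \dom(\B L)}(x \neq_H y)$ and apply (v) in both directions.

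I do not expect a real obstacle here; the only thing to be careful about is notational --- the letters $L, M, N$ in Proposition~\ref{prp: closedMod} clash with the corollary's $\B L = (L^1, L^0)$ and $\B M$, so the instantiations must be spelled out --- together with remembering that totality of $\B L$ is used only in (i), (ii), (iv), and that the equalities in (i) live in $S(\C H)$ and so are to be proved as pairs of inclusions. No choice principle beyond the $\MANC$ already used in Theorem~\ref{thm: crucial} enters.
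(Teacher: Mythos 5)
Your proposal is correct and follows essentially the same route as the paper: the reverse inclusion in (i) via the decomposition $z =_H l^1 + l^0$ forcing $l^1 =_H 0$, the swap $(L^0,L^1)$ for the double orthocomplement, (iii) from the closedness of $L^1 + L^0$ (Theorem~\ref{thm: crucial}(i)) plus Proposition~\ref{prp: closedMod}, and (iv)--(vi) as immediate specializations. Your explicit instantiations of Proposition~\ref{prp: closedMod} are exactly the intended ones, so there is nothing to add.
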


\begin{proof}
(i) By Theorem~\ref{thm: crucial}(i) we have that $H = L^1 \oplus L^0$, hence $L^0 \leq (L^1)^{\perp}$. To show the converse inequality, let $x \in (L^1)^{\perp}$, and $l^1 \in L^1, l^0 \in L^0$, such that $x =_H l^1 + l^0$. Hence, $x - l^0 =_H l^1$. Since $x - l^0 \in (L^1)^{\perp}$, we get $l^1 =_H 0$ and $x =_H l^0$ i.e., $x \in L^0$. Since $(L^0, L^1)$ is also total, hence using the previous twice we get $L^1 =_{\mathsmaller{S(\C H)}} (L^0)^{\perp} =_{\mathsmaller{S(\C H)}} (L^1)^{\perp \perp}$.
Case (ii) follows immediately from case (i).\\
(iii) Both equalities follow from case (i), Theorem~\ref{thm: crucial}(i), which expresses that $L^1 + L^0$ is closed in $\C H$, and Proposition~\ref{prp: closedMod}. Case (iv) is a special case of (iii).
Cases (v) and (vi) follow by Theorem~\ref{thm: crucial}(i).
\end{proof}

Since by Corollary~\ref{cor: corthm}(i) every total element of $\B S(\C H)$ is of the form $(L, L^{\perp})$, for some located (by Remark~\ref{rem: orthocs1}(iii)) $L \leq \C H$, then by Remark~\ref{rem: orthocs1}(ii) we conclude that the \textit{totality of $(L, L^{\perp})$ is equivalent to the locatedness of} $L$. Hence, 
 \textit{the located subspaces of $\C H$ correspond exactly to the total orthocomplemented subspaces of} $\C H$. 
 Hence, it is not an accident that subsequent Proposition~\ref{prp: corComplQL} replaces the hypothesis of locatedness found in Proposition~\ref{prp: corCoQL} by that of totality. Consequently, our proof of Proposition~\ref{prp: corComplQL} is also a proof of Proposition~\ref{prp: corCoQL}. Our use of the two-dimensional notion of totality instead of locatedness provides a new approach to the definition of the quotient Hilbert space. In~\cite{BV06}, p.~55, the quotient normed space $X/Y$ of a normed space $(\C X, ||.||)$ over a subspace $Y$ of $X$ is defined only if $Y$ is located. The quotient set $X/Y$ is $X$ again equipped with the following equality, inequality, and norm:
$$x =_{\mathsmaller{X/Y}} x{'} :\TOT \rho(x-x{'}, Y) =_{\Real} 0,$$
$$x \neq_{\mathsmaller{X/Y}} x{'} :\TOT \rho(x-x{'}, Y) \neq_{\Real} 0 \TOT \rho(x-x{'}, Y) > 0,$$
$$||x||_{\mathsmaller{X/Y}} := \rho(x, Y).$$
In the case of a Hilbert space we can define the quotient Hilbert space $\dom(\B L)/\B L$ of $\dom(\B L)$ over $\B L \in \B S(\C H)$ and get the above construction as the limiting total case. 

\begin{proposition}\label{prp: quotient}
If $\B L \in \B S(\C H)$, the quotient Hilbert space $\dom(\B L)/\B L$ of $\dom(\B L)$ over $\B L$ is the
structure 
$$\big(\dom(\B L), =_{\mathsmaller{\dom(\B L)/\B L}}, \neq_{\mathsmaller{\dom(\B L)/\B L}}; +, 0, \cdot; \langle, \rangle_{\mathsmaller{\dom(\B L)/\B L}}\big),$$
where if $x, y \in \dom(\B L)$ with $x =_H P_{\B L}l^1(x) + P_{\B L}^0(x)$ and $y =_H P_{\B L}^1(y) + P_{\B L}^0(y)$, then
$$x =_{\mathsmaller{\dom(\B L)/\B L}} y :\TOT P_{\B L}^0(x) =_H P_{\B L}^0(y),$$
$$x \neq_{\mathsmaller{\dom(\B L)/\B L}} y :\TOT P_{\B L}^0(x) \neq_H P_{\B L}^0(y),$$
$$\langle x, y \rangle_{\mathsmaller{\dom(\B L)/\B L}} := \langle P_{\B L}^0(x), P_{\B L}^0(y) \rangle_{\mathsmaller{H}},$$
$$||x||_{\mathsmaller{\dom(\B L)/\B L}} := ||P_{\B L}^0(x)||_{\mathsmaller{H}}.$$
The projection function $\pi_{\B L} \colon \dom(\B L) \to \dom(\B L)$ is defined by the rule $x \mapsto x$, where the domain of $\pi_{\B L}$ is understood with the relative equality, inequality and inner product of $H$, while its codomain is understood with the above defined equality, inequality, and inner product. The following hold$:$\\[1mm]
\normalfont (i) 
\itshape $\dom(\B L)/\B L$ is a Hilbert space.\\[1mm]
\normalfont (ii) 
\itshape $\pi_{\B L}$ is a linear contraction, and if $L^0$ is strict, then $||\pi_{\B L}|| =_{\Real} 1$.\\[1mm]
\normalfont (iii) 
\itshape The map $T_{\B L} \colon \dom(\B L) \to L^0$, where $x \mapsto  P_{\B L}^0(x)$, for every $x \in \dom(\B L)$, is a linear isometry.
\end{proposition}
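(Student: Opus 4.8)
The plan is to recognise the entire construction as the pullback, along the partial projection $P_{\B L}^0 \colon \dom(\B L) \to L^0$ from Theorem~\ref{thm: crucial}(iii), of the Hilbert‑space structure that the closed subspace $L^0$ inherits from $\C H$. Put $T := T_{\B L} = P_{\B L}^0$; it is linear by Theorem~\ref{thm: crucial}, and it maps $\dom(\B L)$ onto $L^0$, since $P_{\B L}^0$ restricts to the identity on $L^0 \subseteq \dom(\B L)$ (because $P_{\B L}^1$ vanishes there, by the uniqueness of the decomposition in $L^1 \oplus L^0$ given by Theorem~\ref{thm: crucial}(i)). Moreover the four clauses defining $=_{\mathsmaller{\dom(\B L)/\B L}}$, $\neq_{\mathsmaller{\dom(\B L)/\B L}}$, $\langle\,,\rangle_{\mathsmaller{\dom(\B L)/\B L}}$ and $||\cdot||_{\mathsmaller{\dom(\B L)/\B L}}$ are exactly the $T$‑preimages of $=_H$, $\neq_H$, $\langle\,,\rangle_H$ and $||\cdot||_H$ on $L^0$. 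So the first task is to transport, one axiom at a time, the axioms of a linear space with an extensional inequality (Definition~\ref{def: linearspace}) and of an inner product space (Definition~\ref{def: ips}) from $L^0$ through $T$: that $=_{\mathsmaller{\dom(\B L)/\B L}}$ is an equivalence relation, and that $\neq_{\mathsmaller{\dom(\B L)/\B L}}$ is an extensional tight apartness compatible with $+, 0, \cdot$, follow from Proposition~\ref{prp: ineqhilbert}(iii) applied in $L^0$ together with the linearity of $P_{\B L}^0$; and $\InnProd_1$--$\InnProd_5$ follow from the corresponding facts in $H$, e.g.\ $\langle x, x\rangle_{\mathsmaller{\dom(\B L)/\B L}} = \langle P_{\B L}^0(x), P_{\B L}^0(x)\rangle_H$ is a nonnegative real that vanishes iff $P_{\B L}^0(x) =_H 0$ iff $x =_{\mathsmaller{\dom(\B L)/\B L}} 0$, and is $\neq_{\D K} 0$ iff $P_{\B L}^0(x) \neq_H 0$ iff $x \neq_{\mathsmaller{\dom(\B L)/\B L}} 0$; since $||x||_{\mathsmaller{\dom(\B L)/\B L}}^2 = \langle x, x\rangle_{\mathsmaller{\dom(\B L)/\B L}}$, the norm axioms $\Norm_1$--$\Norm_5$ come for free.

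For (i) it then remains to check completeness and separability. If $(x_n)_n$ is Cauchy in $\dom(\B L)/\B L$, then $||x_n - x_m||_{\mathsmaller{\dom(\B L)/\B L}} = ||P_{\B L}^0(x_n) - P_{\B L}^0(x_m)||_H$ shows that $(P_{\B L}^0(x_n))_n$ is Cauchy in $L^0$; as $L^0$ is closed in the complete space $\C H$ it is complete, so $P_{\B L}^0(x_n) \stackrel{n}\longrightarrow l^0$ for some $l^0 \in L^0 \subseteq \dom(\B L)$, and then $||x_n - l^0||_{\mathsmaller{\dom(\B L)/\B L}} = ||P_{\B L}^0(x_n) - l^0||_H \stackrel{n}\longrightarrow 0$ using $P_{\B L}^0(l^0) =_H l^0$. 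For separability, Theorem~\ref{thm: crucial}(iii) yields $||P_{\B L}^0(x)||_H \leq ||x||_H$ for every $x \in \dom(\B L)$, so $||x||_{\mathsmaller{\dom(\B L)/\B L}} = ||P_{\B L}^0(x)||_H \leq ||x||_H$; hence any $||\cdot||_H$‑dense sequence in the Hilbert subspace $\dom(\B L)$ of $\C H$ (Remark~\ref{rem: orthocs1}(iv)) is automatically dense for $||\cdot||_{\mathsmaller{\dom(\B L)/\B L}}$ as well (equivalently, $T$ carries it to a dense sequence in $L^0$). Together with the first paragraph, this shows that $\dom(\B L)/\B L$ is a separable, complete inner product space, i.e.\ a Hilbert space.

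For (ii), $\pi_{\B L}$ is linear because it is the identity rule while $+, 0, \cdot$ agree on the underlying set $\dom(\B L)$; the inequality just displayed shows it is a contraction; and if $L^0$ is strict then $||\pi_{\B L}|| = \sup\{||P_{\B L}^0(x)||_H \mid x \in \dom(\B L), \ ||x||_H \leq 1\} = ||P_{\B L}^0|| =_{\Real} 1$ by Theorem~\ref{thm: crucial}(iii). For (iii), $T_{\B L} = P_{\B L}^0$ descends to a well‑defined map on $\dom(\B L)/\B L$ (if $x =_{\mathsmaller{\dom(\B L)/\B L}} y$ then $P_{\B L}^0(x) =_H P_{\B L}^0(y)$, by the very definition of $=_{\mathsmaller{\dom(\B L)/\B L}}$), it is linear by Theorem~\ref{thm: crucial}, and it is an isometry in the sense of Definition~\ref{def: arrow} since $||T_{\B L}(x)||_H = ||P_{\B L}^0(x)||_H = ||x||_{\mathsmaller{\dom(\B L)/\B L}}$; in fact it is a linear isometric isomorphism of $\dom(\B L)/\B L$ onto $L^0$, being onto and satisfying $T_{\B L}(x) \neq_H T_{\B L}(y) \TOT x \neq_{\mathsmaller{\dom(\B L)/\B L}} y$.

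The only step that is not a mechanical transport along $T$ is the separability claim in (i): it rests on $\dom(\B L) = L^1 \vee L^0$ being separable as a subspace of $\C H$, i.e.\ on the standing fact that $\dom(\B L)$ is a Hilbert subspace of $\C H$; granting this, the rest is a routine verification that pulling the Hilbert‑space axioms back along the linear, surjective, norm‑non‑increasing map $T$ preserves them.
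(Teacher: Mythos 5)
Your proof is correct and follows essentially the same route as the paper's: both verify the linear, inequality, and inner-product axioms by transporting them through $P_{\B L}^0$, and both treat (ii) and (iii) by the identical direct computations with $||x||_{\mathsmaller{\dom(\B L)/\B L}} := ||P_{\B L}^0(x)||_{H}$. Where you genuinely differ is in part (i): the paper disposes of completeness with the single sentence that $\dom(\B L)$ is complete because it is closed in $\C H$, which only gives completeness for the $H$-norm, whereas the quotient norm is the strictly weaker $||P_{\B L}^0(\cdot)||_H$; a quotient-Cauchy sequence need not be $H$-Cauchy. Your argument --- push a quotient-Cauchy sequence forward to $L^0$, use that $L^0$ is closed in the complete space $\C H$, and pull the limit back via $P_{\B L}^0(l^0) =_H l^0$ --- is the correct and complete version of this step, and it is really just the observation that $T_{\B L}$ is a surjective isometry onto the Hilbert space $L^0$, so completeness (and separability, which the paper does not address at all even though its Definition~\ref{def: ips} requires it) is inherited. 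So your ``pullback along $T_{\B L}$'' framing buys a cleaner and more honest treatment of (i) at no extra cost; the paper's version is terser but leaves exactly these two points to the reader.
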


\begin{proof}
(i) It is straightforward to show that $x =_{\mathsmaller{\dom(\B L)/\B L}} y$ is an equality on $\dom(\B L)$ and $x \neq_{\mathsmaller{\dom(\B L)/\B L}} y$ is an inequality on $\dom(\B L)$, and that properties $(\LinIneq_1)$-$(\LinIneq_3)$ of $\C H$ imply properties $(\LinIneq_1)$-$(\LinIneq_3)$ for $\dom(\B L)$. The inner product-properties for $\langle x, y \rangle_{\mathsmaller{\dom(\B L)/\B L}}$ follow immediately from the inner product properties of $\langle , \rangle_{\mathsmaller{H}}$. Moreover, $\dom(\B L)$ is complete, since it is closed in $\C H$.\\
(ii) If $x, y \in \dom(\B L)$ with $x =_H y$, then $P_{\B L}l^1(x) =_H P_{\B L}l^1(y)$ and $P_{\B L}l^0(x) =_H P_{\B L}l^0(y)$, and hence $\pi_{\B L}$ is a function. The linearity of $\pi_{\B L}$ follows immediately by its definition. If $x \in \dom(\B L)$, we have that 
$$||\pi_{\B L}(x)||_{\mathsmaller{\dom(\B L)/\B L}} := ||x||_{\mathsmaller{\dom(\B L)/\B L}} :=  ||P_{\B L}^0(x)||_{\mathsmaller{H}} \leq ||x||_{\mathsmaller{H}},$$
i.e., $\pi_{\B L}$ is a contraction. Since $||P_{\B L}^0|| =_{\Real} 1$, we get that $\pi_{\B L}$ is also normed with norm $1$, since 
\begin{align*}
	1 & =_{\Real}  ||P_{\B L}^0||\\
	& := \sup \big\{||P_{\B L}^0(x)||_{\mathsmaller{H}} \mid x \in \dom(\B L) \ \& \ ||x||_{\mathsmaller{H}} \leq 1\big\}\\
	& := \sup \big\{||\pi_{\B L}(x)||_{\mathsmaller{\dom(\B L)/\B L}} \mid x \in \dom(\B L) \ \& \ ||x||_{\mathsmaller{H}} \leq 1\big\}\\
	& =: ||\pi_{\B L}||.
	\end{align*}
(iii) By the definition of $x =_{\mathsmaller{\dom(\B L)/\B L}} y$ we have that $T_{\B L}$ is a function, which is linear, since $P_{\B L}^0$ is linear. If $x \in \dom(\B L)$, then 
$||T_{\B L}(x)||_{\mathsmaller{H}} := ||P_{\B L}^0(x)||_{\mathsmaller{H}} =: ||x||_{\mathsmaller{\dom(\B L)/\B L}}$, hence $T_{\B L}$ is an isometry.
\end{proof}

Clearly, if $\B L =_{\mathsmaller{\B S(\C H)}} (L, L^{\perp})$ is total, then the quotient Hilbert space $\C H/\B L$ coincides with the quotient Hilbert space $H/L$ defined in~\cite{BV06}, as we described above.

\section{Bounded partial operators between normed spaces}
\label{sec: bpo}

In this section we introduce bounded partial operators between normed spaces, which motivate the introduction of the concept of a partial linear space. The bounded linear maps that are extended by the Hahn-Banach theorem are such bounded partial operators.

\begin{definition}\label{def: partialoperator} If $(X, =_X)$ and $(Y, =_Y)$ are sets, a partial function, or a partial map, $T$ from $X$ to $Y$, in symbols $T \colon X \pto Y$, is a function $T \colon \dom(T) \to Y$, where $\dom(T) \subseteq X$. We denote by $\C F(X, Y)$ the totality\footnote{As we quantify over $\C P(\C H)$ to define the membership condition for $\C F(X, Y)$, it is a proper class (see~\cite{Pe20}, section 2.7).} of partial functions from $X$ to $Y$. If $S, T \in \C F(X, Y)$, we define
$$S =_{\mathsmaller{\C F(X, Y)}} T :\TOT \dom(S) =_{\C P(X)} \dom(T) \ \& \ \forall_{x \in \dom(S)}\big(S(x) =_Y T(x)\big).$$
If $(Z, =_Z)$ is a set and $U \colon Y \pto Z$, the composite function $U \circ T \colon X \pto Z$ is the partial function 
\begin{center}
	\begin{tikzpicture}
		
		\node (E) at (0,0) {$X$};
		\node[right=of E] (F) {$Y$};
		\node[right=of F] (A) {$Z$};

		\draw[MyBlue,-left to] (E) to node [midway,above] {$T$} (F) ;
		\draw[MyBlue,-left to] (F) to node [midway,above] {$U$} (A) ;
		\draw[MyBlue,-left to, bend right=40] (E) to node [midway,below] {$U \circ T$} (A) ;
		
	\end{tikzpicture}
\end{center}
$U \circ T \colon \dom(U \circ T) \to Z$, where 
$$\dom(U \circ T) := \big\{x \in \dom(T) \mid T(x) \in \dom(U)\big\},$$
and $(U \circ T)(x) := U(T(x))$, for every $x \in \dom(U)$. If $(\C X, ||.||)$ and $(\C Y, ||.||)$ are normed spaces, a linear partial map $T \colon X \pto Y$ is a partial function $T \colon \dom(T) \to Y$, where $\dom(T) \leq X$, which is also linear. $T$ is a bounded partial operator, if it is bounded as a linear map from $\dom(T)$ to $Y$, and we denote by $\M B(\C X, \C Y)$ their totality. We write $L \leq \C X$ to denote that $L$ is a closed subspace of $\C X$, and let $S(\C X)$ be their totality. In case $\neq_X$ and $\neq_Y$ are extensional inequalities on $X$ and $Y$, respectively, 
we define the following equality and inequality on $\M B(\C X, \C Y)$
\begin{align*}
	T =_{\mathsmaller{\M B(\C X, \C Y)}} U & :\TOT \forall_{x \in X}\bigg[\big(x \in \dom(T) \To x \in \dom(U)\big) \wedge \\
	& \ \ \ \ \ \ \ \ \ \ \ \ \ \big(x \in \dom(U) \To x \in \dom(U)\big) \wedge \\
	& \ \ \ \ \ \ \ \ \ \ \ \ \ \big(x \in \dom(T) \wedge \dom(U) \To T(x) =_{Y} U(x)\big)\bigg],
\end{align*}
\vspace{-6mm}
\begin{align*}
	T \neq_{\mathsmaller{\M B(\C H, \C Y)}} U & :\TOT \exists_{x \in X_0}\bigg[\big(x \in \dom(T) \ \& \ x \in \dom(U)^{\neq}\big) \vee \\
	& \ \ \ \ \ \ \ \ \ \ \ \ \ \big(x \in \dom(U) \ \& \ x  \in \dom(T)^{\neq}\big) \vee \\
	& \ \ \ \ \ \ \ \ \ \ \ \ \ \big(x \in \dom(T) \wedge \dom(U) \ \& \ T(x) \neq_{Y} U(x)\big)\bigg].
\end{align*} 
If $x \in X_0$ witnesses the inequality $T \neq_{\mathsmaller{\M B(\C H, \C Y)}} U$, we write $x \colon T \neq_{\mathsmaller{\M B(\C H, \C Y)}} U$.
\end{definition}

The use of quantification over $X_0$ in the definition of $T \neq_{\mathsmaller{\M B(\C H, \C Y)}} U$, instead of quantification over $X$, is chosen so that the definition of the inequality between bounded partial operators on a Hilbert space (see Definition~\ref{def: partialproj}) is of the same form, although there the orthocomplement of a subspace is used instead of its strong complement (see footnote 13). 

\begin{proposition}\label{prp: ext}
The inequality relation $T \neq_{\mathsmaller{\M B(\C H, \C Y)}} U$ is extensional.
\end{proposition}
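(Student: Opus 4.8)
The plan is to verify the extensionality axiom $(\Ineq_2)$ of Definition~\ref{def: apartness} for the relation $\neq_{\mathsmaller{\M B(\C H, \C Y)}}$: I would take $T, T{'}, U, U{'} \in \M B(\C X, \C Y)$ together with $T =_{\mathsmaller{\M B(\C X, \C Y)}} T{'}$, $U =_{\mathsmaller{\M B(\C X, \C Y)}} U{'}$ and a witness $x \colon T \neq_{\mathsmaller{\M B(\C H, \C Y)}} U$, and show that the \emph{same} $x \in X_0$ witnesses $T{'} \neq_{\mathsmaller{\M B(\C H, \C Y)}} U{'}$. First I would unpack the two equality hypotheses into the data actually needed: $\dom(T) =_{\mathsmaller{\C P(X)}} \dom(T{'})$ with $T(z) =_Y T{'}(z)$ for every $z \in \dom(T)$, and symmetrically for $U$ and $U{'}$. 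I would also record the one structural observation that makes the argument go through, namely that because $\neq_X$ is an \emph{extensional} inequality, the strong complement constructor $(\cdot)^{\neq}$ is invariant under $=_{\mathsmaller{\C P(X)}}$; hence $\dom(U)^{\neq} =_{\mathsmaller{\C P(X)}} \dom(U{'})^{\neq}$ and $\dom(T)^{\neq} =_{\mathsmaller{\C P(X)}} \dom(T{'})^{\neq}$.

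The body of the proof would then be a three-way case split on the disjunction defining $x \colon T \neq_{\mathsmaller{\M B(\C H, \C Y)}} U$. If $x \in \dom(T)$ and $x \in \dom(U)^{\neq}$, transport along the two domain equalities gives $x \in \dom(T{'})$ and $x \in \dom(U{'})^{\neq}$, which is the first disjunct for $T{'} \neq_{\mathsmaller{\M B(\C H, \C Y)}} U{'}$; the case with the roles of $T$ and $U$ exchanged is handled identically. If instead $x \in \dom(T) \cap \dom(U)$ with $T(x) \neq_Y U(x)$, then $x \in \dom(T{'}) \cap \dom(U{'})$, and from $T(x) =_Y T{'}(x)$, $U(x) =_Y U{'}(x)$ and the extensionality of $\neq_Y$ I would conclude $T{'}(x) \neq_Y U{'}(x)$, the third disjunct. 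In each case the witnessing element of $X_0$ is unchanged, so $x \colon T{'} \neq_{\mathsmaller{\M B(\C H, \C Y)}} U{'}$, which is exactly what $(\Ineq_2)$ demands.

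There is no genuine obstacle here; the proof is essentially bookkeeping. The only point requiring a moment's care is the invariance of the strong-complement constructor $(\cdot)^{\neq}$ under equality of subsets --- this is precisely where the hypothesis that $\neq_X$ is extensional (rather than a bare inequality) is consumed --- after which the first two cases are pure transport along the assumed domain equalities and the third is transport plus the extensionality of $\neq_Y$.
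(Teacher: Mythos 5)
Your proof is correct and follows essentially the same route as the paper's: the paper proves the analogous statement for the Hilbert-space inequality (Proposition~\ref{prp: ineqext}) by fixing the same witness $x_0$, splitting into the three disjuncts, transporting membership along the domain equalities, and invoking the extensionality of the inequality on the codomain in the third case, and then simply refers the normed-space version back to that argument. Your explicit remark that the invariance of the strong complement $(\cdot)^{\neq}$ under $=_{\mathsmaller{\C P(X)}}$ is where the extensionality of $\neq_X$ is consumed is a point the paper leaves implicit, but it is the same proof.
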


\begin{proof}
We work as in the proof of Proposition~\ref{prp: ext1}. 
\end{proof}

\begin{remark}\label{rem: domcomp}
If $(\C X, ||.||)$, $(\C Y, ||.||)$, and $(\C Z, ||.||)$ are normed spaces, $T \colon X \pto Y$ and $U \colon Y \pto Z$ are partial bounded linear maps, then $\dom(U \circ T) \leq \C X$ and $U \circ T \colon X \pto Z$ is a partial bounded linear map.
\end{remark}

\begin{proof}
By definition $\dom(T) \leq \C X$ and $\dom(U) \leq \C Y$. Clearly, $0 \in \dom(U \circ T)$. If $x, y \in \dom(U \circ T)$, then $T(x), T(y) \in \dom(U)$, hence $T(x) + T(y) =_Y T(x + y) \in \dom(U)$, therefore $x + y \in \dom(U \circ T)$. Similarly, we show that if $x \in \dom(U \circ T)$ and $k \in \D K$, then $k \cdot x \in \dom(U \circ T)$. If $(x_n)_{n \in \Nat} \subseteq \dom(U \circ T)$ and $x \in X$ with $(x_n) \stackrel{n} \longrightarrow x$, then $x \in \dom(T)$ and $(T(x_n))_{n \in \Nat} \subseteq \dom(U)$. By the continuity of $T$ we also get $T(x) \in \dom(U)$, hence $x \in \dom(U \circ T)$. Trivially, $U \circ T$ is a bounded linear map.
\end{proof}

\begin{definition}\label{def: partialoperations}
Let $(\C X, ||.||)$, $(\C Y, ||.||)$ be normed spaces, and let $T \colon \dom(T) \to Y$, $U \colon \dom(U) \to Y$ be in $\M B(\C X, \C Y)$. Their addition $T + U$ has domain $\dom(T) \wedge \dom(U)$, and if $k \in \D K$, the product $k \cdot T$ has domain $\dom(T)$. Moreover, let $0_T \colon \dom(T) \to Y$, defined by $0_T(x) := 0$, for every $x \in \dom(T)$.
\end{definition}

The structure of $\M B(\C X, \C Y)$, equipped with the above operations, motivates the introduction of the notion of a partial linear space. In~\cite{MWP24} the notion of a swap ring was introduced as an abstract version of the algebraic structure of partial, Boolean-valued functions. Richman in~\cite{Ri12}, pp.~2682-2683, has also formulated an abstract
structure motivated by the real-valued partial functions on $\Real$, and especially the (explicit) algebraic functions, without developing their theory. 
Every swap ring is a Richman ring, in the sense of~\cite{Ri12} (see also~\cite{MWP25b, MWP25c}).

\begin{definition}\label{def: pls}
A partial linear space over the field $\D K \in \{\Real, \D C\}$ is a structure $\C X := (X, =_X, \neq_X; +, 0, (0_{x})_{x \in X}, \cdot)$, where $(X, =_X, \neq_X)$ is a totality with an extensional inequality and $(+, 0, (0_{x})_{x \in X}, \cdot)$ is the partial linear structure of $\C X$ with $0_{x} \in X$, for every $x \in X$,
that satisifies the axioms of a linear space in Definition~\ref{def: linearspace} except from the existence of the additive inverse, for every $x \in X$, and the property $0 \cdot x =_X 0$, for every $x \in X$. Instead, the following generalisations of these properties hold{}$:$\\[1mm]
$(\PL_1)$ $0 \cdot x =_X 0_x$, for every $x \in X$.\\[1mm]
$(\PL_2)$ For every $x \in X$ there is an element $(-x) \in X$, the partial additive inverse of $x$, such that $x + (-x) =_X 0_x$ and $0_{-x} =_X 0_x$.\\[1mm]
$(\PL_3)$ $0_0 =_X 0$.\\[1mm]
$(\PL_4)$ $0_{0_x} =_X 0_x$.\\[1mm]
Moreover, $\C X$ satisfies the axioms $(\LinIneq_1), (\LinIneq_2)$ from Definition~\ref{def: linearspace} and the following generalisation of $(\LinIneq_3)${}$:$\\[1mm]
$(\PL_5)$ $k \cdot x \neq_X 0 \To \big[0_x \neq_X 0 \vee (k \neq_{\D K} 0 \ \& \ x \neq_X 0) \big]$.\\[1mm]
We call $x \in X$ total, if $0_x =_X 0$, and we denote by $X_{\tota}$ their totality. We call $x \in X$ strictly partial, if $0_x \neq_X 0$, and we denote by $X_{\spart}$ their totality.
\end{definition}

Clearly, every linear space is a partial linear space which is equal to its total elements. 
If $\C X$ is a linear space, then $0_x \neq_X 0$ is impossible, and hence $(\PL_5)$ is reduced to $(\LinIneq_3)$. If $T \in \M B(\C X, \C Y)$ has domain a proper subspace of $\C X$, then $T$ is strictly partial. The converse also holds, since 
$$0_T \neq_{\mathsmaller{ \M B(\C X, \C Y)}} 0 :\TOT \exists_{x \in X_0}\big[x \in \dom(0_T)^{\neq} \vee \big(x \in \dom(0_T) \ \& \ 0_T(x) \neq_Y 0(x)\big)\big].$$

\begin{corollary}\label{cor: pls1} If $\C X$ is a partial linear space and $x \in X$, then the following hold:\\[1mm]
\normalfont (i)
\itshape $x + 0_x =_X x$.\\[1mm]
\normalfont (ii)
\itshape The partial additive inverse $(-x)$ of $x$ is unique.\\[1mm]
\normalfont (iii)
\itshape $(-1) \cdot x =_X (-x)$.\\[1mm]
\normalfont (iv)
\itshape $0_{k \cdot x} =_X 0_{x} =_X k \cdot 0_x$, for every $k \in \D K$, and every $x \in X$.\\[1mm]
\normalfont (v)
\itshape $0_x + 0_y =_X 0_{x+y}$, for every $x, y \in X$.\\[1mm]
\normalfont (vi)
\itshape $X_{\tota}$, equipped with the restricted partial linear structure, is a linear space.
\end{corollary}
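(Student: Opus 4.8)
The plan is to prove the six items by direct computation from the axioms of Definition~\ref{def: pls}, ordered so that later items may invoke earlier ones. The only linear-space axioms I may use freely are those that survive in a partial linear space: associativity and commutativity of $+$, neutrality of the genuine $0$, the two distributive laws, associativity of scalar multiplication, and $1 \cdot x =_X x$; I must avoid the unrestricted additive inverse and the identity $0 \cdot x =_X 0$, using $(\PL_1)$ in their place.

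For (i) I would write $x + 0_x =_X 1\cdot x + 0\cdot x =_X (1+0)\cdot x =_X 1 \cdot x =_X x$, using $(\PL_1)$ and scalar distributivity. Item (ii) is then the usual uniqueness argument for inverses, with (i) supplying the cancellations: if $y,y'$ both witness the defining clauses of $(\PL_2)$ for $x$, then, using $0_y =_X 0_x =_X 0_{y'}$ and (i),
$$y =_X y + 0_y =_X y + 0_x =_X y + (x + y') =_X (y+x) + y' =_X (x+y) + y' =_X 0_x + y' =_X y'.$$
For (iii) I would check that $(-1)\cdot x$ satisfies both defining clauses of the partial additive inverse: $x + (-1)\cdot x =_X (1 + (-1))\cdot x =_X 0 \cdot x =_X 0_x$ by $(\PL_1)$, and $0_{(-1)\cdot x} =_X 0\cdot((-1)\cdot x) =_X (0\cdot(-1))\cdot x =_X 0\cdot x =_X 0_x$ by $(\PL_1)$ and scalar associativity; uniqueness (ii) then gives $(-1)\cdot x =_X (-x)$. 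Item (iv) is the same kind of two-line computation: $0_{k\cdot x} =_X 0\cdot(k\cdot x) =_X (0k)\cdot x =_X 0\cdot x =_X 0_x$ and $k\cdot 0_x =_X k\cdot(0\cdot x) =_X (k0)\cdot x =_X 0\cdot x =_X 0_x$. For (v) I would use $(\PL_1)$ with distributivity over vector addition: $0_{x+y} =_X 0\cdot(x+y) =_X 0\cdot x + 0\cdot y =_X 0_x + 0_y$.

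For (vi) I would verify that $X_{\tota} := \{x \in X \mid 0_x =_X 0\}$, with the restricted structure, satisfies the full linear-space axioms. Closure under $+$ is immediate from (v): $0_{x+y} =_X 0_x + 0_y =_X 0$; closure under scalar multiplication from (iv): $0_{k\cdot x} =_X 0_x =_X 0$; and $0 \in X_{\tota}$ by $(\PL_3)$. For $x \in X_{\tota}$ the partial inverse $(-x)$ is a genuine inverse, since $x + (-x) =_X 0_x =_X 0$, and it lies in $X_{\tota}$ because $0_{-x} =_X 0_x =_X 0$ by $(\PL_2)$. The surviving linear-space axioms of $\C X$ restrict to $X_{\tota}$, and $(\PL_1)$ together with totality recovers the last missing identity $0 \cdot x =_X 0_x =_X 0$. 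The extensional inequality restricts, $(\LinIneq_1)$ and $(\LinIneq_2)$ are inherited, and for $(\LinIneq_3)$ I would note that on $X_{\tota}$ the disjunct $0_x \neq_X 0$ of $(\PL_5)$ is refuted by $(\Ineq_1)$, so $(\PL_5)$ collapses to $(\LinIneq_3)$. None of this is deep; the points to watch are the dependency order (so (iii) may cite (ii), and (vi) may cite (iv), (v)), the discipline of never invoking the two linear-space axioms a partial linear space is allowed to fail, and — the one genuinely worth spelling out — this reduction of $(\PL_5)$ to $(\LinIneq_3)$ on total elements.
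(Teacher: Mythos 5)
Your proposal is correct and follows essentially the same route as the paper's proof: items (i)--(v) are the identical axiom-chasing computations (with the same use of $(\PL_1)$ and uniqueness from (ii) to establish (iii)), and (vi) is the same closure argument via (iv), (v) and $(\PL_3)$. The only difference is that you are slightly more thorough on (vi), explicitly checking the inequality axioms and noting that $(\PL_5)$ collapses to $(\LinIneq_3)$ on total elements since the disjunct $0_x \neq_X 0$ is refuted by $(\Ineq_1)$; the paper leaves this part implicit.
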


\begin{proof}
(i) $x + 0_x =_X 1 \cdot x + 0 \cdot x =_X (1 + 0) \cdot x =_X 1 \cdot x =_X x$.\\
(iii) Let $y, z \in X$, such that $x + y =_X 0_x =_X x + z$ and $0_y =_X 0_x =_X 0_z$.  By case (i) we have that
$$y =_X y + 0_y =_X y + 0_x =_X y + x + z =_X 0_x + z =_X 0_z + z =_X z.$$
(iii) $(-1) \cdot x + 1 \cdot x = (-1 + 1)\cdot x =_X 0 \cdot x = 0_x$ and $0_{(-1)\cdot x} =_X 0 \cdot [(-1)\cdot x] =_X [0 \cdot (-1)] \cdot x =_X 0 \cdot x =_X 0_x$. Hence, by the uniqueness of the partial additive of $x$ we get $(-1) \cdot x =_X (-x)$.\\
(iv) By $(\PL_1)$ we have that $0_{kx} =_X 0 \cdot (k\cdot x) =_X k \cdot (0 \cdot x) =_X  (0 \cdot k) \cdot x =_X 0 \cdot x =_X 0_x$.\\
(v) $0_x + 0_y =_X 0 \cdot x + 0 \cdot y =_X 0 \cdot (x+y) =_X 0_{x+y}$.\\
(vi) By $(\PL_3)$ we have that $0 \in X_{\tota}$. If $x, y \in X_{\tota}$, then by case (v) we get $x + y \in X_{\tota}$, and if $k \in \D K$, then by case (iv) we get $k \cdot x \in X_{\tota}$. If $x \in X_{\tota}$, then $0 \cdot x =_X 0_x =_X 0$ and $x +(-x) =_X 0_x =_X 0$.
\end{proof}

\begin{proposition}\label{prp: ispls} 
If $(\C X, ||.||)$ and $(\C Y, ||.||)$ are normed spaces, then  $\M B(\C X, \C Y)$, equipped with the operations of Definition~\ref{def: partialoperations}, is a partial linear space.
\end{proposition}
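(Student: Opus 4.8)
The plan is to run through the clauses of Definition~\ref{def: pls} and check each one for $\M B(\C X, \C Y)$ under the operations of Definition~\ref{def: partialoperations}, taking the distinguished zero to be the operator $0 \colon X \to Y$ whose domain is all of $\C X$ and which sends every vector to $0$, and taking $(0_T)_{T \in \M B(\C X, \C Y)}$ as defined there. Since the notion of a partial linear space was extracted from exactly this example, most of the work is unwinding definitions. First, $(\M B(\C X, \C Y), =_{\M B(\C X, \C Y)}, \neq_{\M B(\C X, \C Y)})$ is a totality with an extensional inequality: reflexivity, symmetry and transitivity of $=_{\M B(\C X, \C Y)}$ are immediate, and extensionality of $\neq_{\M B(\C X, \C Y)}$ is Proposition~\ref{prp: ext}.

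For the equational clauses I would rely on the two domain rules $\dom(T + U) =_{\C P(X)} \dom(T) \wedge \dom(U)$ and $\dom(k \cdot T) =_{\C P(X)} \dom(T)$ and read off the values pointwise in $\C Y$. Commutativity and associativity of $+$, the identity law $T + 0 =_{\M B(\C X, \C Y)} T$ (here $\dom(T) \wedge X =_{\C P(X)} \dom(T)$), $1 \cdot T =_{\M B(\C X, \C Y)} T$, $k \cdot (l \cdot T) =_{\M B(\C X, \C Y)} (kl) \cdot T$ and the two distributive laws then all follow by a one-line domain computation plus the corresponding identity in $\C Y$. For $(\PL_1)$, $0 \cdot T$ has domain $\dom(T)$ and value $0$, so $0 \cdot T =_{\M B(\C X, \C Y)} 0_T$. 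For $(\PL_2)$ I would set $-T := (-1) \cdot T$; then $\dom(-T) = \dom(T)$, the map $T + (-T)$ has domain $\dom(T)$ and value $T(x) - T(x) = 0 = 0_T(x)$, hence $T + (-T) =_{\M B(\C X, \C Y)} 0_T$, and $0_{-T} = 0_{\dom(-T)} = 0_{\dom(T)} = 0_T$. Finally $(\PL_3)$ and $(\PL_4)$ are the identities $0_0 = 0_{\dom(0)} = 0$ and $0_{0_T} = 0_{\dom(0_T)} = 0_{\dom(T)} = 0_T$. One should resist the urge to cite Corollary~\ref{cor: pls1} at this stage, since that corollary presupposes the very statement being proved.

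The substance is in the three inequality clauses $(\LinIneq_1)$, $(\LinIneq_2)$ and $(\PL_5)$, and here I would unfold $\neq_{\M B(\C X, \C Y)}$ into its three disjuncts --- the two ``domain'' disjuncts, built from the strong complement $\dom(\cdot)^{\neq}$, and the ``pointwise'' disjunct, built from $\neq_Y$ --- and handle them separately. The pointwise disjunct reduces instantly to $(\LinIneq_1)$--$(\LinIneq_3)$ of $\C Y$ applied to $T(x), U(x)$. The domain disjuncts are the delicate point: a witness strongly apart from the intersection $\dom(T) \wedge \dom(U)$ must be reconciled with apartness from the separate domains $\dom(T)$, $\dom(U)$, for which one uses that apartness from every vector of a subspace $M$ forces apartness from every vector of any subspace contained in $M$, together with the fact that $\dom(0)^{\neq}$ is the strong empty subset, so that the disjunct tested against the total zero $0$ disappears. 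Steering this domain bookkeeping through cleanly --- in particular keeping straight how apartness from an intersection of domains relates to apartness from each factor --- is the main obstacle; this is the only place the closedness of $\dom(T), \dom(U)$ and the inequality axioms of $\C X$ are really needed, everything else being a matter of writing out domains and quoting the axioms of $\C X$ and $\C Y$.
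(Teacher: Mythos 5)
Your overall strategy is the paper's: check the clauses of Definition~\ref{def: pls} one by one, with $(\PL_1)$--$(\PL_4)$ and the algebraic identities being routine domain computations (the paper dismisses these in one line) and the real content sitting in the inequality clauses. Your equational part is fine, and your remark that one must not invoke Corollary~\ref{cor: pls1} is well taken. The problem is that the one step the paper actually works out --- the verification of $(\PL_5)$ --- is precisely the step you leave unresolved. Unfolding $k \cdot T \neq_{\mathsmaller{\M B(\C X, \C Y)}} 0$ against the \emph{total} zero leaves two live disjuncts: $\exists_{x \in X_0}\big[x \in \dom(T)^{\neq} \vee \big(x \in \dom(T) \ \& \ k\cdot T(x) \neq_Y 0\big)\big]$. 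The second disjunct you handle correctly via $(\LinIneq_3)$ in $\C Y$. But for the first disjunct the whole point of $(\PL_5)$ is that such an $x$ is \emph{already} a witness of $0_T \neq_{\mathsmaller{\M B(\C X, \C Y)}} 0$, simply because $\dom(0_{k\cdot T}) =_{\mathsmaller{S(\C X)}} \dom(k \cdot T) =_{\mathsmaller{S(\C X)}} \dom(T) =_{\mathsmaller{S(\C X)}} \dom(0_T)$; so this case lands in the first alternative of the conclusion and no ``reconciliation with the separate domains'' is needed at all. You never say this, and it is the only non-obvious sentence in the paper's proof.

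Worse, the generic tool you propose for the domain disjuncts --- that a vector apart from every element of $M$ is apart from every element of any subspace contained in $M$, i.e.\ $N \subseteq M \To M^{\neq} \subseteq N^{\neq}$ --- runs in the wrong direction for the one place where a genuine reconciliation would be required. In $(\LinIneq_2)$, say, a witness of $T + U \neq_{\mathsmaller{\M B(\C X, \C Y)}} 0$ coming from the domain disjunct lies in $\big(\dom(T) \wedge \dom(U)\big)^{\neq}$, and since $\dom(T) \wedge \dom(U) \subseteq \dom(T)$ your inclusion only tells you $\dom(T)^{\neq} \subseteq \big(\dom(T) \wedge \dom(U)\big)^{\neq}$ --- the converse of what you would need to extract a witness of $T \neq_{\mathsmaller{\M B(\C X, \C Y)}} 0$ or $U \neq_{\mathsmaller{\M B(\C X, \C Y)}} 0$. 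So the ``main obstacle'' you identify is not actually cleared by the method you describe. To match the paper you should drop the generic monotonicity argument, write out the unfolding of $k \cdot T \neq_{\mathsmaller{\M B(\C X, \C Y)}} 0$ explicitly, and dispatch the domain case by the equality of domains $\dom(0_T) =_{\mathsmaller{S(\C X)}} \dom(T)$ as above.
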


\begin{proof}
Properties $(\PL_1)$-$(\PL_4)$ are straightforward to show.
To show $(\PL_5)$, by Definition~\ref{def: partialoperator} we 
get
$$k \cdot T \neq_{\mathsmaller{ \M B(\C X, \C Y)}} 0 :\TOT \exists_{x \in X_0}\big[x \in \dom(T)^{\neq} \vee \big(x \in \dom(T) \ \& \ k \cdot T(x) \neq_Y 0\big)\big].$$
If $x \in X_0$ and $x \in \dom(T)^{\neq}$, then $x \colon 0_T \neq_{\mathsmaller{ \M B(\C X, \C Y)}}  0$, since $\dom(0_T) =_{\mathsmaller{S(\C X)}} \dom(T)$. If $x \in X_0$, $x \in \dom(T)$ and $k \cdot T(x) \neq_Y 0$, then by $(\LinIneq_3)$ on $\C Y$
 $k \neq_{\D K} 0$ and $T(x) \neq_Y 0$. Hence $x \colon T \neq_{\mathsmaller{ \M B(\C X, \C Y)}} 0$.
\end{proof}

\begin{definition}\label{def: plmap}
If $\C X$ and $\C Y$ are partial linear spaces, a $p$-linear map is a total function $T \colon X \to Y$ that satisfies the following properties$:$ $T(0) =_Y 0$, $T(x + x{'}) =_Y T(x) + T(x{'})$ and $T(k\cdot x) =_Y k \cdot T(x)$, for every $x, x{'} \in X$ and every $k \in \D K$. Let $\M L(\C X, \C Y)$ be the set of $p$-linear maps from $\C X$ to $\C Y$, equipped with the equality and inequality in Definitions~\ref{def: function} and~\ref{def: canonicalineq}, respectively.
\end{definition}

Notice, that we cannot prove $T(0) =_Y 0$ by the other two properties of a $p$-linear map, as in the case of a standard linear map. What we only get from each one of them is that $T(0_x) =_Y 0_{T(x)}$.

\begin{proposition}\label{prp: pls1}
If $\C X$ and $\C Y$ are partial linear spaces, then the following hold:\\[1mm]
\normalfont (i)
\itshape $\C Z(\C X) := \{0_x \mid x \in X\}$ is a partial linear subspace of $\C X$.\\[1mm]
 \normalfont (ii)
 \itshape The function $\zeta \colon X \to \C Z(\C X)$, defined by the rule $x \mapsto 0_x$, is in $\M L(\C X, \C Z(\C X))$.\\[1mm]
 \normalfont (iii)
 \itshape $\M L(\C X, \C Y)$ is a partial linear space.\\[1mm]
  \normalfont (iv)
 \itshape If $T \in \M L(\C X, \C Y)$ is total, then $\forall_{x \in X}\big(T(x) \ \mbox{is total} \ \hspace{-1mm}\big)$.
	
\end{proposition}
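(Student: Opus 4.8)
The plan is to do everything by pointwise computation, using the algebraic identities collected in Corollary~\ref{cor: pls1}. For part (i) it suffices to check that $\C{Z}(\C{X})$ is closed under the partial linear operations of $\C{X}$: once closure holds, the axioms $(\PL_1)$--$(\PL_5)$, $(\LinIneq_1)$, $(\LinIneq_2)$ transfer automatically, being universally quantified statements true in $\C{X}$, and the restriction of $\neq_X$ stays extensional. The verifications are short: $0 =_X 0_0 \in \C{Z}(\C{X})$ by $(\PL_3)$; $0_x + 0_y =_X 0_{x+y} \in \C{Z}(\C{X})$ by Corollary~\ref{cor: pls1}(v); $k\cdot 0_x =_X 0_x \in \C{Z}(\C{X})$ by Corollary~\ref{cor: pls1}(iv); for $z =_X 0_x$ we have $0_z =_X 0_{0_x} =_X 0_x \in \C{Z}(\C{X})$ by $(\PL_4)$; and the partial additive inverse of $0_x$ is $(-1)\cdot 0_x =_X 0_x$ by Corollary~\ref{cor: pls1}(iii),(iv), which again lies in $\C{Z}(\C{X})$. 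For part (ii), $\zeta$ is a well-defined function because $\cdot$ is a function and hence extensional, so $x =_X x{'}$ gives $0_x =_X 0\cdot x =_X 0\cdot x{'} =_X 0_{x{'}}$; it takes values in $\C{Z}(\C{X})$ by definition; and the three clauses of Definition~\ref{def: plmap} are $\zeta(0) =_X 0_0 =_X 0$ by $(\PL_3)$, $\zeta(x+x{'}) =_X 0_{x+x{'}} =_X 0_x+0_{x{'}} =_X \zeta(x)+\zeta(x{'})$ by Corollary~\ref{cor: pls1}(v), and $\zeta(k\cdot x) =_X 0_{k\cdot x} =_X 0_x =_X k\cdot 0_x =_X k\cdot\zeta(x)$ by Corollary~\ref{cor: pls1}(iv).

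For part (iii) I would equip $\M{L}(\C{X},\C{Y})$ with the pointwise operations $(T+U)(x):=T(x)+U(x)$, $(k\cdot T)(x):=k\cdot T(x)$, the zero map $0\colon x\mapsto 0$, the element $0_T:=0\cdot T$ (so that $(\PL_1)$ holds by definition), and the partial inverse $(-T):=(-1)\cdot T$. The first task is to check that each of these is again a $p$-linear map; the only clause not coming for free is $T(0) =_Y 0$ --- which, as noted just before this proposition, does not follow from the other two --- but it holds in each case from $T(0) =_Y 0$, $U(0) =_Y 0$ and $(\PL_3)$, $(\PL_4)$ in $\C{Y}$ (for instance $0_T(0) =_Y 0_{T(0)} =_Y 0_0 =_Y 0$, with additivity $0_T(x+x{'}) =_Y 0_{T(x)+T(x{'})} =_Y 0_{T(x)}+0_{T(x{'})}$ by Corollary~\ref{cor: pls1}(v)). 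The remaining vector-space axioms and $(\PL_2)$--$(\PL_4)$ then hold pointwise, directly from the corresponding facts in $\C{Y}$. For the axioms mentioning the inequality, I would use that $S \neq_{\M{L}(\C{X},\C{Y})} S{'}$ unfolds, by Definition~\ref{def: canonicalineq}, to $\exists_{x\in X}\big(S(x)\neq_Y S{'}(x)\big)$, so a single witness $x$ lifts $(\LinIneq_1)$, $(\LinIneq_2)$ and $(\PL_5)$ from $\C{Y}$ to $\M{L}(\C{X},\C{Y})$ --- in $(\PL_5)$ the subcase $0_{T(x)}\neq_Y 0$ gives $0_T\neq 0$ because $0_T(x) =_Y 0_{T(x)}$, and the other subcase gives $k\neq_{\D{K}}0$ together with $T\neq 0$ --- while extensionality of $\neq_{\M{L}(\C{X},\C{Y})}$ is inherited from that of $\neq_Y$.

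Part (iv) is then an unfolding: by (iii), $\M{L}(\C{X},\C{Y})$ is a partial linear space, so $T$ total means $0_T =_{\M{L}(\C{X},\C{Y})} 0$, i.e.\ $0_T(x) =_Y 0$ for all $x\in X$; since $0_T(x) =_Y 0_{T(x)}$, this says precisely that $T(x)$ is total for every $x$ (indeed the two statements are equivalent). I do not expect a genuine obstacle anywhere in this proposition; the only care needed is to keep using the plain canonical inequality on the totality $\M{L}(\C{X},\C{Y})$ of total functions --- not the heavier inequality on partial operators from Definition~\ref{def: partialoperator} --- and not to skip the $T(0) =_Y 0$ clause when checking $p$-linearity of the constructed maps.
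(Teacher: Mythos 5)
Your proof is correct and follows essentially the same route as the paper's: closure of $\C Z(\C X)$ under the operations via Corollary~\ref{cor: pls1}, the same three computations for $\zeta$, pointwise operations with $0_T(x) := 0_{T(x)}$ on $\M L(\C X, \C Y)$, the same case split for $(\PL_5)$, and (iv) by unfolding $0_T$. You are in fact a bit more explicit than the paper in spelling out well-definedness of $\zeta$ and the $T(0) =_Y 0$ clause, and your use of the canonical inequality quantified over $X$ matches Definition~\ref{def: canonicalineq} (the paper's proof writes $X_0$ there); none of this affects the argument.
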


\begin{proof}
(i) $0 \in \C Z(\C X)$, since $0 =_X 0_0$. If $0_x, 0_y \in \C Z(\C X)$, then $0_x + 0_y =_X 0_{x + y} \in \C Z(\C X)$, and if $k \in \D K$, then $k \cdot 0_x =_X 0_x \in \C Z(\C X)$.  Moreover, $-0_x =_X 0_x$, since $0_x + 0_x =_X 0_{2 \cdot x}  =_X 0_x =_X 0_{0_x}$.\\
(ii) Clearly, $\zeta(0) := 0_0 =_X 0$, and $\zeta(x +y ) := 0_{x+y} =_X 0_x + 0_y =: \zeta(x) + \zeta(y)$. If $k \in \D K$, then $\zeta(k \cdot x) := 0_{k \cdot x} =_X k \cdot 0_x =: k \cdot \zeta(x)$.\\
(iii) If $T, U \in \M L(\C X, \C Y)$, let $(T + U)(x) := T(x) + U(x)$ and $(k \cdot T)(x) := k \cdot T(x)$, for every $x \in X$. Let also $0_{T}(x) := 0_{T(x)}$, for every $x \in X$. Clearly, $T + U, k \cdot T$, and $0_T$ are in $\M L(\C X, \C Y)$. By definition $(0 \cdot T)(x) := 0 \cdot T(x) =_Y 0_{T(x)} =: 0_{T}(x)$, for every $x \in X$. Similarly, $[T + (-T)](x) := T(x) + (-T)(x) =_Y T(x) + (-T(x)) = 0_{T(x)} =: 0_{T}(x)$, for every $x \in X$. Moreover, $[0_{-T}](x) := 0_{(-T)(x)} =_Y 0_{-T(x)} =_Y 0_{T(x)} =: [0_T](x)$, for every $x \in X$. If $x \in X$, then $[0_0](x) := 0_{0(x)} := 0_0 =_Y 0 =: 0(x)$. If $x \in X$, then $[0_{0_T}](x) := 0_{(0_T)(x)} =_Y 0_{T(x)}  =: [0_T](x)$. 
To show $(\PL_5)$, let 
$k \cdot T \neq_{\mathsmaller{\M L(\C X, \C Y)}} 0 :\TOT \exists_{x \in X_0}\big(k \cdot T(x) \neq_Y 0\big)$, hence $\exists_{x \in X_0}\big[0_{T(x)} \neq_Y 0 \vee \big(k \neq_{\D K} 0 \ \& \ T(x) \neq_Y 0\big)\big]$.
If $x \in X_0$ with $0_{T(x)} \neq_Y 0$, then $x \colon 0_T \neq_{\mathsmaller{\M L(\C X, \C Y)}} 0$. If $x \in X_0$ with $k \neq_{\D K} 0$ and $T(x) \neq_Y 0$, then $x \colon T \neq_{\mathsmaller{\M L(\C X, \C Y)}} 0$.\\
Case (iv) follows immediately from the above definition of $0_T$ in $\M L(\C X, \C Y)$.
\end{proof}

\section{Partial projections on a Hilbert space}
\label{sec: partialproj}

In this section we study bounded partial operators on a Hilbert space $\C H$, and especially partial projections on $\C H$. Classically, all bounded operators $T$ on $\C H$, their domain $\dom(T)$ of which is a proper subspace of $\C H$, can be extended to a total bounded operator $\widehat{T}$ on the whole space $\C H$ by defining $\widehat{T}(x) := 0$, for every $x \in \dom(T)^{\perp}$. Since, classically, $\dom(T) \vee \dom(T)^{\perp} =_{\mathsmaller{S(\C H)}} H$, we may consider that all bounded operators on $\C H$ are total. Hence, classically, it is only when an operator is not bounded, or unbounded, that the domain of definition of $T$ plays a role (see~\cite{Go66}). As we have already explained, constructively, property $L \vee L^{\perp} =_{\mathsmaller{S(\C H)}} H$ cannot be accepted for every subspace $L$ of $\C H$. Hence, the constructive study of bounded and partial operators on a Hilbert space $\C H$ is meaningful.

\begin{definition}\label{def: partialproj}Let $\M B(\C H)$ be the totality $\M B(\C H, \C H)$ of bounded partial operators from $\C H$ to $\C H$. A bounded total operator $T$ on $\C H$ is a bounded partial operator $T$ on $\C H$ with $\dom(T) =_{\mathsmaller{S(\C H)}} 1$. A partial projection on $\C H$ is an idempotent and self-adjoint bounded partial operator on $\C H$. We denote their totality by $\M P(\C H)$. A partial projection on $\C H$ is called strict, if there is $x \in \dom(P)$ with $P(x) \neq_H 0$. We denote their totality by $\M P^*(\C H)$.
In order to formulate a dual notion of inequality on $\M B(\C H, \C H{'})$, we use the following definitions of equality and inequality\footnote{In the case of Hilbert spaces the inequality on bounded partial operators is influenced by the geometry of the space. The difference between the inequality in Definition~\ref{def: partialoperator} and that on partial projections is due to Corollary~\ref{cor: corBBcrucial}(ii). We cannot define the inequality on $\C B(\C H)$ quantifying over the whole space $H$, since $0$ is always in $\dom(T) \wedge \dom(U)^{\perp}$, i.e., any two bounded partial operators would be inequal. We could have defined the equality on $\C B(\C H)$ by quantifying over $H_0$, in order to keep the duality between the equality and inequality on $\C B(\C H)$. We prefer though, to keep our weaker equality (the one which uses quantification over $H$) together with the inequality that employs quantification over $H_0$.}
\begin{align*}
	T =_{\mathsmaller{\M B(\C H, \C H{'})}} U & :\TOT \forall_{x \in H}\bigg[\big(x \in \dom(T) \To x \in \dom(U)\big) \wedge \\
	& \ \ \ \ \ \ \ \ \ \ \ \ \ \big(x \in \dom(U) \To x \in \dom(U)\big) \wedge \\
	& \ \ \ \ \ \ \ \ \ \ \ \ \ \big(x \in \dom(T) \wedge \dom(U) \To T(x) =_{H{'}} U(x)\big)\bigg],
\end{align*}
\vspace{-6mm}
\begin{align*}
	T \neq_{\mathsmaller{\M B(\C H, \C H{'})}} U & :\TOT \exists_{x \in H_0}\bigg[\big(x \in \dom(T) \ \& \ x \in \dom(U)^{\perp}\big) \vee \\
	& \ \ \ \ \ \ \ \ \ \ \ \ \ \big(x \in \dom(U) \ \& \ x  \in \dom(T)^{\perp}\big) \vee \\
	& \ \ \ \ \ \ \ \ \ \ \ \ \ \big(x \in \dom(T) \wedge \dom(U) \ \& \ T(x) \neq_{H{'}} U(x)\big)\bigg].
\end{align*} 
If $T \neq_{\mathsmaller{\M B(\C H, \C H{'})}} U$ and $x \in H_0$ witnesses the inequality $T \neq_{\mathsmaller{\M B(\C H, \C H{'})}} U$, then we write $x \colon 	T \neq_{\mathsmaller{\M B(\C H, \C H{'})}} U$.
\end{definition}

Clearly, if $x \in H_0$, then $x \colon 0 \neq_{\mathsmaller{\C B(\C H)}} I_H$. 
Notice the similarity between the definition of equality and inequality on total functions in Definition~\ref{def: canonicalineq} and the definition of equality and inequality on partial bounded operators between Hilbert spaces. 
 As we show in the proof of Theorem~\ref{thm: bijection}(i), the projection $P_{\B L}$ induced in Theorem~\ref{thm: crucial} by some $\B L \leq  \C H$ is a partial projection on $\C H$.
In general, we cannot show that  $T \neq_{\mathsmaller{\M B(\C H, \C H{'})}} U$ is an apartness relation, although there are not many examples of this phenomenon\footnote{Another such inequality is the canonical inequality on the exterior union of a family of sets (see~\cite{Pe20}, p.~43).}.

\begin{proposition}\label{prp: ext1}
\label{prp: ineqext} 
The inequality relation $T \neq_{\mathsmaller{\M B(\C H, \C H{'})}} U$ is extensional.
\end{proposition}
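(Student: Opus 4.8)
The plan is to verify that $\neq_{\mathsmaller{\M B(\C H, \C H{'})}}$ satisfies $(\Ineq_2)$, i.e. that it respects the equality $=_{\mathsmaller{\M B(\C H, \C H{'})}}$ in both arguments. So I would suppose $T =_{\mathsmaller{\M B(\C H, \C H{'})}} T{'}$, $U =_{\mathsmaller{\M B(\C H, \C H{'})}} U{'}$, and $T \neq_{\mathsmaller{\M B(\C H, \C H{'})}} U$, the last witnessed by some $x \in H_0$, and then show that the same $x$ witnesses $T{'} \neq_{\mathsmaller{\M B(\C H, \C H{'})}} U{'}$. The point to note first is that the clause $x \in H_0$ does not mention the operators, so it survives unchanged. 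From the two equalities one reads off $\dom(T) =_{\mathsmaller{\C P(H)}} \dom(T{'})$ and $\dom(U) =_{\mathsmaller{\C P(H)}} \dom(U{'})$ as extensional subsets of $H$, together with $T(x) =_{H{'}} T{'}(x)$ whenever $x \in \dom(T)$ and $U(x) =_{H{'}} U{'}(x)$ whenever $x \in \dom(U)$.

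Next I would split into the three disjuncts defining $T \neq_{\mathsmaller{\M B(\C H, \C H{'})}} U$. In the first case, $x \in \dom(T)$ and $x \in \dom(U)^{\perp}$; then $x \in \dom(T{'})$ by the first domain equality, and since the orthocomplement of a subspace depends only on its underlying subset, $\dom(U) =_{\mathsmaller{\C P(H)}} \dom(U{'})$ yields $\dom(U)^{\perp} =_{\mathsmaller{\C P(H)}} \dom(U{'})^{\perp}$, hence $x \in \dom(U{'})^{\perp}$; so $x$ witnesses $T{'} \neq_{\mathsmaller{\M B(\C H, \C H{'})}} U{'}$ through its first disjunct. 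The second case is symmetric. In the third case, $x \in \dom(T) \wedge \dom(U)$ and $T(x) \neq_{H{'}} U(x)$; then $x \in \dom(T{'}) \wedge \dom(U{'})$, and $T{'}(x) =_{H{'}} T(x)$, $U{'}(x) =_{H{'}} U(x)$. Since $\neq_{H{'}}$ is a tight apartness relation (Proposition~\ref{prp: ineqhilbert}(iii)) it is in particular extensional, so $T(x) \neq_{H{'}} U(x)$ gives $T{'}(x) \neq_{H{'}} U{'}(x)$, and $x$ witnesses the third disjunct.

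Since in each case the original witness $x$ is again a witness, extensionality follows. The whole argument is routine; the only steps deserving attention are that $x \in H_0$ is an operator-independent clause and therefore transfers automatically, and that forming the orthocomplement $(-)^{\perp}$ respects equality of subsets, which is what makes the first two disjuncts transfer. (The proof of Proposition~\ref{prp: ext} is identical, reading $\dom(U)^{\neq}$ for $\dom(U)^{\perp}$ and invoking that $\neq_X$ is an extensional inequality there.)
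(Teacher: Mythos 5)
Your proof is correct and follows essentially the same route as the paper's: fix the witness $x_0 \in H_0$, run through the three disjuncts of the definition, transfer membership in the domains and their orthocomplements via the assumed equalities, and use the extensionality of $\neq_{H'}$ in the third case. The only cosmetic difference is that you make explicit two points the paper leaves implicit (that the clause $x \in H_0$ is operator-independent, and that $(-)^{\perp}$ respects equality of subsets), which is harmless.
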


\begin{proof}
If $T =_{\mathsmaller{\M B(\C H, \C H{'})}} T{'}$, $U =_{\mathsmaller{\M B(\C H, \C H{'})}} U{'}$, and $T \neq_{\mathsmaller{\M B(\C H, \C H{'})}} U$, we show that $T{'} \neq_{\mathsmaller{\M B(\C H, \C H{'})}} U{'}$. If $x_0 \colon T \neq_{\mathsmaller{\M B(\C H, \C H{'})}} U$, we show that 
$x_0 \colon T{'} \neq_{\mathsmaller{\M B(\C H, \C H{'})}} U{'}$. Suppose first that $x_0 \in \dom(T) \ \& \ x_0 \in \dom(U))^{\perp}$. As $T =_{\mathsmaller{\M B(\C H, \C H{'})}} T{'}$, we get  $x_0 \in \dom(T{'})$.
Since $x_0 \in \dom(U)^{\perp}$, by the equality $U =_{\mathsmaller{\M B(\C H, \C H{'})}} U{'}$ we 
get $x_0 \in \dom(U{'})^{\perp}$. If $x _0 \in \dom(U) \ \& \ x_0  \in \dom(T)^{\perp}$,
we work similarly. If $x_0 \in \dom(T) \wedge \dom(U) \ \& \ T(x_0) \neq_{H{'}} U(x_0)$, then by the supposed equalities we have that $x_0 \in \dom(T{'}) \wedge \dom(U{'})$, 
and by the extensionality of $\neq_{H{'}}$ we get $T{'}(x_0) =_{H{'}} T(x_0) \neq_{H{'}} U(x_0) =_{H{'}} U{'}(x_0)$.
\end{proof}

\begin{proposition}\label{prp: zeroisse}
The function $\zeta \colon \M B(\C H, \C H{'}) \to \C Z(\M B(\C H, \C H{'}))$, where $T \mapsto 0_T$, is strongly extensional.
\end{proposition}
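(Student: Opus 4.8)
The plan is to unfold the definition of strong extensionality (Definition~\ref{def: se}) together with that of the inequality on $\M B(\C H, \C H{'})$ (Definition~\ref{def: partialproj}), and to observe that the clause comparing \emph{values} becomes vacuous for the constant-zero operators. Concretely, I would have to show that $0_T \neq_{\mathsmaller{\M B(\C H, \C H{'})}} 0_U$ implies $T \neq_{\mathsmaller{\M B(\C H, \C H{'})}} U$, for arbitrary $T, U \in \M B(\C H, \C H{'})$, where the codomain $\C Z(\M B(\C H, \C H{'}))$ is understood with the inequality restricted from $\M B(\C H, \C H{'})$.

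First I would record the defining properties of $0_T$ and $0_U$ from Definition~\ref{def: partialoperations}: $\dom(0_T) =_{\mathsmaller{S(\C H)}} \dom(T)$, $\dom(0_U) =_{\mathsmaller{S(\C H)}} \dom(U)$, and $0_T(x) =_{H{'}} 0 =_{H{'}} 0_U(x)$ on the common domain $\dom(0_T) \wedge \dom(0_U)$. Assuming $0_T \neq_{\mathsmaller{\M B(\C H, \C H{'})}} 0_U$, I unwind the definition to obtain some $x \in H_0$ realising one of the three disjuncts of that inequality. In the first disjunct, $x \in \dom(0_T)$ and $x \in \dom(0_U)^{\perp}$; using the two domain equalities (and that the orthocomplement respects equality of subspaces) this reads $x \in \dom(T)$ and $x \in \dom(U)^{\perp}$, so the same $x$ witnesses the first disjunct of $T \neq_{\mathsmaller{\M B(\C H, \C H{'})}} U$. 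The second disjunct is handled symmetrically.

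The one point deserving comment is the third disjunct, $x \in \dom(0_T) \wedge \dom(0_U)$ with $0_T(x) \neq_{H{'}} 0_U(x)$: here $0_T(x) =_{H{'}} 0 =_{H{'}} 0_U(x)$, so $(\Ineq_1)$ for $\neq_{H{'}}$ yields $\bot$, and from $\bot$ we conclude $T \neq_{\mathsmaller{\M B(\C H, \C H{'})}} U$ (indeed, ex falso, any $x \in H_0$ witnesses it). Hence in every case $T \neq_{\mathsmaller{\M B(\C H, \C H{'})}} U$, which is exactly what strong extensionality demands. I do not expect any genuine obstacle: the whole content is that inequality of $0_T$ and $0_U$ can only originate from a mismatch of domains, and that mismatch is verbatim the mismatch of the domains of $T$ and $U$; the value-comparison clause, the only place where $T$ and $U$ could differ ``properly'', is unreachable for the zero operators.
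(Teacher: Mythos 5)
Your proposal is correct and follows essentially the same route as the paper's proof: extract the witness $x \in H_0$, observe that the value-comparison disjunct is ruled out because $0_T$ and $0_U$ agree (both being zero) on the common domain, and transfer the remaining domain-mismatch disjuncts verbatim to $T$ and $U$ via $\dom(0_T) =_{\mathsmaller{S(\C H)}} \dom(T)$ and $\dom(0_U) =_{\mathsmaller{S(\C H)}} \dom(U)$. Your explicit handling of the third disjunct through $(\Ineq_1)$ and ex falso is just a more spelled-out version of the paper's remark that only the two domain cases are possible.
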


\begin{proof}
Let $x \in H_0$ with $x \colon 0_T \neq_{\mathsmaller{\M B(\C H, \C H{'})}} 0_U$. Since both $0_T$ and $0_U$ are zero on $\dom(T) \wedge \dom(U)$, the only two possibilities for $x \colon 0_T \neq_{\mathsmaller{\M B(\C H, \C H{'})}} 0_U$ are $x \in \dom(0_U)$ and $x  \in \dom(0_T)^{\perp}$, or $x \in \dom(0_T) $ and $x  \in \dom(0_U)^{\perp}$. We treat only the first case, and for the second we work similarly. Since the first case is rewritten as $x \in \dom(U)$ and $x  \in \dom(T)^{\perp}$, we get immediately that $x \colon T \neq_{\mathsmaller{\M B(\C H, \C H{'})}} U$.
\end{proof}

 The bijection between orthocomplemented subspaces of $\C H$ and partial projections on $\C H$ that we establish next captures the computational content of the classical bijection between $S(\C H)$ and the projections on $\C H$
  (see~\cite{KR83}, p.~110). The fact that a total projection $P$ on $\C H$ induces constructively a subspace of $\C H$ on which $P$ projects is already described in~\cite{BB85}, p.~371.  Theorem~\ref{thm: bijection} explains why partial projections on $\C H$ correspond to characteristic functions of complemented subsets.

\begin{theorem}\label{thm: bijection}
Let the assignment routine\footnote{See definition~\ref{def: function}.} $i \colon  \B S(\C H) \sto  \M P(\C H)$, defined by the rule
$$\B L \mapsto P_{\B L}^1,$$
and let the assignment routine
$j \colon \M P(\C H) \sto \B S(\C H)$, defined by the rule$:$
$$P \mapsto \B L_P := (L_P^1, L_P^0),$$
$$L_P^1 := \big\{P(x) \mid x \in \dom(P)\} =_{S(\C H)} \{x \in \dom(P) \mid P(x) =_H x\big\} \leq \dom(P) \leq \C H,$$
$$L_P^0 := \big\{x \in \dom(P) \mid P(x) =_H 0\big\}  =: \Ker(P) \leq \dom(P) \leq \C H.$$
If we define the partial order on $\M P(\C H)$ by the rule$:$
$$P  \leq Q :\TOT j(P) \leq j(Q) :\TOT \B L_P \leq \B L_Q ,$$
and if we define the inequality on $\B S(\C H)$ by the rule$:$\footnote{This definition is a generalisation of Definition~\ref{def: ineqSloc}.}
$$\B L \neq_{\mathsmaller{\B S(\C H)}} \B M :\TOT i(\B L) \neq_{\mathsmaller{\M P(\C H)}} i(\B M) :\TOT P_{\B L}^1 \neq_{\mathsmaller{\M P(\C H)}} P_{\B M}^1,$$
then the following hold:\\[1mm]
\normalfont (i)
\itshape The routines $i$ and $j$ are well-defined functions that are 
inverse to each other.\\[1mm]
\normalfont (ii)
\itshape $i$ and $j$ are strongly extensional functions and injections.\\[1mm]
\normalfont (iii)
\itshape $i$ and $j$ preserve the corresponding partial orders.\\[1mm]
\normalfont (iv)
\itshape $i$ and $j$ preserve strictness.\\[1mm]
\normalfont (v)
\itshape $i$ and $j$ preserve totality.
\end{theorem}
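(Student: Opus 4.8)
The plan is to verify the five assertions in order, using Theorem~\ref{thm: crucial} for everything about $P_{\B L}^1$ and elementary Hilbert-space algebra for everything about $\B L_P$. First, for (i), I would check that $i$ and $j$ respect the equalities and land in the asserted totalities. For $i$, Theorem~\ref{thm: crucial}(ii) already says that $P_{\B L}^1$ is a bounded, self-adjoint, idempotent partial operator with $\dom(P_{\B L}^1) = \dom(\B L) \leq \C H$, i.e.\ a partial projection; and if $\B L =_{\mathsmaller{\B S(\C H)}} \B M$ then $L^1, L^0$ coincide with $M^1, M^0$ as subspaces, so $\dom(\B L) =_{\mathsmaller{S(\C H)}} \dom(\B M)$ and the two operators agree on the common domain by the uniqueness in Theorem~\ref{thm: crucial}(iv); hence $i$ is a function. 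For $j$, the two descriptions of $L_P^1$ agree by idempotence ($P(x)$ is always a fixed point, and every fixed point lies in the range); $L_P^1 = \{x \in \dom(P) \mid (P - \id_H)(x) =_H 0\}$ and $L_P^0 = \Ker(P)$ are closed linear subsets of the closed subspace $\dom(P)$ since $P$ is bounded, hence continuous; and $L_P^1 \perp L_P^0$ follows from self-adjointness, for if $P(u) =_H u$ and $P(v) =_H 0$ then $\langle u,v\rangle =_{\D K} \langle P(u),v\rangle =_{\D K} \langle u,P(v)\rangle =_{\D K} 0$. Thus $\B L_P \in \B S(\C H)$, and $j$ respects $=_{\mathsmaller{\M B(\C H)}}$ by inspection.

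Next I would compute the two composites. For $j \circ i$: with $\B L = (L^1,L^0)$ and $P := P_{\B L}^1$, the defining property $P(l^1 + l^0) =_H l^1$ (for $l^1 \in L^1$, $l^0 \in L^0$) gives $\rng(P) =_{\mathsmaller{S(\C H)}} L^1$ and $\Ker(P) =_{\mathsmaller{S(\C H)}} L^0$, so $j(i(\B L)) =_{\mathsmaller{\B S(\C H)}} \B L$. For $i \circ j$: given $P \in \M P(\C H)$, the decomposition $x =_H P(x) + (x - P(x))$ with $P(x) \in L_P^1$ and $x - P(x) \in \Ker(P) = L_P^0$ yields $\dom(P) \subseteq L_P^1 + L_P^0 \subseteq \dom(\B L_P) \subseteq \dom(P)$, hence $\dom(\B L_P) =_{\mathsmaller{S(\C H)}} \dom(P)$; and since this decomposition coincides with the unique orthogonal decomposition of Theorem~\ref{thm: crucial}(i) applied to $\B L_P$, the map $P_{\B L_P}^1$ equals $P$ on $\dom(P)$, so $i(j(P)) =_{\mathsmaller{\M P(\C H)}} P$. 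This proves (i).

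Parts (ii)--(v) are then nearly formal, because $\neq_{\mathsmaller{\B S(\C H)}}$ and the partial order on $\M P(\C H)$ are defined by transport along $i$ and $j$. Strong extensionality and injectivity of $i$ hold by the very definition of $\neq_{\mathsmaller{\B S(\C H)}}$ (it is $P_{\B L}^1 \neq_{\mathsmaller{\M P(\C H)}} P_{\B M}^1$), which also shows this relation is a genuine, extensional inequality, using that $i$ is a function and that $\neq_{\mathsmaller{\M P(\C H)}}$ is extensional by Proposition~\ref{prp: ext1}; the corresponding statements for $j$ then follow from $i \circ j = \id$ together with the extensionality of $\neq_{\mathsmaller{\M P(\C H)}}$. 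That $\leq$ on $\M P(\C H)$ is a partial order and that both maps preserve and reflect $\leq$ is bookkeeping (antisymmetry uses injectivity of $j$; reflexivity and transitivity descend from $\B S(\C H)$). For (iv), $\B L$ is strict iff $L^1 = \rng(P_{\B L}^1)$ contains some vector $\neq_H 0$ iff $P_{\B L}^1$ is strict, and dually $P$ is strict iff $L_P^1 = \rng(P)$ is strict; for (v), $\dom(i(\B L)) = \dom(\B L)$ and $\dom(j(P)) =_{\mathsmaller{S(\C H)}} \dom(P)$ by part (i), so totality (equivalently, domain all of $H$) transfers in both directions.

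I do not expect a genuinely hard step; the delicate points all sit inside part (i): establishing $\dom(\B L_P) =_{\mathsmaller{S(\C H)}} \dom(P)$ cleanly — this is exactly where closedness of $\dom(P)$ and the idempotence and self-adjointness of $P$ are used, to place $P(x)$ and $x - P(x)$ in the correct components — and being careful that $i$ and $j$ respect the deliberately asymmetric equality on $\M B(\C H)$ which quantifies over all of $H$. Everything in (ii)--(v) reduces to unwinding the definitions of the transported inequality and order.
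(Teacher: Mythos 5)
Your proposal is correct and follows essentially the same route as the paper's proof: well-definedness of $i$ via Theorem~\ref{thm: crucial}, orthogonality of $L_P^1$ and $L_P^0$ via self-adjointness, the two composites computed through $P(l^1+l^0)=_H l^1$ and $x=_H P(x)+(x-P(x))$, and parts (ii)--(v) by unwinding the transported inequality and order. The only cosmetic difference is that you justify functionality of $i$ by the uniqueness clause of Theorem~\ref{thm: crucial}(iv), whereas the paper uses the uniqueness of the decomposition in $L^1\oplus L^0$ from Theorem~\ref{thm: crucial}(i); both are valid.
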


\begin{proof}
(i) First, we show that $i$ is well-defined, i.e., $P_{\B L}^1 \in \M P(\C H)$. By Definition~\ref{def: partialoperator} we have that
$$\dom(P_{\B L}^1 \circ P_{\B L}^1) := \big\{x \in L^1 \oplus L^0 \mid P_{\B L}^1(x) \in  L^1 \oplus L^0\big\}
=_{\mathsmaller{S(\C H)}} L^1 \oplus L^0.$$
The remaining defining properties of a partial projection are shown in the proof of Theorem~\ref{thm: crucial}(ii). Next, we show that $i$ is a function , i.e., if $\B L =_{\mathsmaller{\B S(\C H)}} \B M$, then $P_{\B L}^1 =_{\mathsmaller{\M P(\C H)}} P_{\B M}^1$, for every $\B L$, $\B M$ in $\B S(\C H)$.
Clearly, if $\B L =_{\mathsmaller{\B S(\C H)}} \B M$, then $\dom(\B L) =_{\mathsmaller{S(\C H)}} \dom(\B M)$, i.e., $\dom(P_{\B L}^1) =_{\mathsmaller{S(\C H)}} \dom(P_{\B M}^1)$. If $x \in \dom(P_{\B L}^1))$ with $x =_H P_{\B L}^1(x) + P_{\B L}^0(x) =_H P_{\B M}^1(x) + P_{\B M}^0(x)$, then by the equalities $L^1 =_{\mathsmaller{S(\C H)}} M^1$ and $L^0 =_{\mathsmaller{S(\C H)}} M^0$ and the uniqueness of the representation of $0$ in $L^1 \oplus L^0$ we get $P_{\B L}^1(x) =_H P_{\B M}^1(x)$, and since $x \in H$ is arbitrary, we get $P_{\B L}^1 =_{\mathsmaller{\M P(\C H)}} P_{\B M}^1$. Next, we show that $j$ is well-defined. 
We have that $L_P^1 \perp L_P^0$, since by the self-adjointness of $P$, if $x^1 \in L_P^1$ and $x^0 \in L_P^0$, then 
$$\langle x^1, x^0 \rangle =_{\Real} \langle P(x^1), x^0 \rangle =_{\Real} \langle x^1, P(x^0) \rangle =_{\Real} \langle x^1, 0 \rangle =_{\Real} 0.$$
Next, we show that $j$ is a function, i.e., if $P =_{\mathsmaller{\M P(\C H)}} Q$, then $\B L_P =_{\mathsmaller{\B S(\C H)}} \B L_Q$, for every $P, Q \in \M P(\C H)$. By Definition~\ref{def: partialproj} we have that
\begin{align*}
	P =_{\mathsmaller{\M P(\C H)}} U & :\TOT \forall_{x \in H}\bigg[\big(x \in \dom(P) \To x \in \dom(Q)\big) \wedge \\
	& \ \ \ \ \ \ \ \ \ \ \ \ \ \big(x \in \dom(Q) \To x \in \dom(P)\big) \wedge \\
	& \ \ \ \ \ \ \ \ \ \ \ \ \ \big(x \in \dom(P) \wedge \dom(Q) \To P(x) =_{H} Q(x)\big)\bigg],
\end{align*}
from which the equalities $L_P^1 =_{\mathsmaller{S(\C H)}}  L_Q^1$ and $L_P^0 =_{\mathsmaller{S(\C H)}}  L_Q^0$ follow immediately. The equality $\B L =_{\mathsmaller{\B S(\C H)}} \B L_{P_{\B L}^1}$ follows by the equalities
$$ L_{P_{\B L}^1}^1 := \big\{P_{\B L}^1(x) \mid x \in L^1 \oplus L^0\big\} =_{\mathsmaller{S(\C H)}} L^1,$$
$$ L_{P_{\B L}^1}^0 := \big\{x \in L^1 \oplus L^0 \mid P_{\B L}^1(x) =_H 0 \big\} =_{\mathsmaller{S(\C H)}} L^0.$$
To show the equality $P =_{\mathsmaller{\M P(\C H)}} P^1_{\B L_P}$, we first show that
$$\dom(P) =_{\mathsmaller{S(\C H)}} \dom(P^1_{\B L_P}) := L_P^1 \oplus L_P^0.$$
Clearly, $L_P^1 \oplus L_P^0 \leq \dom(P)$. To show $\dom(P) \leq L_P^1 \oplus L_P^0$, we observe that if $x \in \dom(P)$, then
$$x =_H P(x) + (x - P(x)), \ \ \ P(x) \in L_P^1 \ \& \ x -P(x) \in L_P^0,$$ 
which also implies that $P^1_{\B L_P}(x) =_H P^1_{\B L_P}(P(x) + (x - P(x))) =_H P(x)$, for every $x \in \dom(P)$.\\
(ii) Injectivity and strong extensionality of $i$ is the equivalence $P_{\B L}^1 \neq_{\mathsmaller{\M P(\C H)}} P_{\B M}^1 \TOT \B L \neq_{\mathsmaller{\B S(\C H)}} \B M$, for every $\B L, \B M \in \B S(\C H)$, which holds by the above definition. By the same definition, case (i), and the extensionality of $\neq_{\mathsmaller{\M P(\C H)}}$ we also get
$$\B L_P \neq_{\mathsmaller{\B S(\C H)}} \B L_Q :\TOT P^1_{\B L_P} \neq_{\mathsmaller{\M P(\C H)}} P^1_{\B L_Q} \TOT P \neq_{\mathsmaller{\M P(\C H)}} Q,$$
and hence the injectivity and strong extensionality of $j$ follow.\\
(iv) If $l^1 \in L^1$ with $l^1 \neq_H 0$, then $P_{\B L}^1(l^1) := l^1 \neq_H 0$.
Conversely, if $P$ is strict, then $L_P^1$ is strict.\\
(v) Clearly, if $\B L$ is total, then $\dom(P_{\B L}^1) := \dom(\B L) =_{\mathsmaller{S(\C H)}} H$, and if $P$ is total, then $\dom(P) =_{\mathsmaller{S(\C H)}}  H =_{\mathsmaller{S(\C H)}}  L_P^1 \oplus L_P^0 =_{\mathsmaller{S(\C H)}} \dom(\B L_P)$.
\end{proof}


\begin{corollary}\label{cor: normproj}
	If $P$ is a partial projection on $\C H$, then its corresponding subspaces $L_P^1, L_P^0$ are located in $\dom(\B L)$, and if $P$ is strict, then $P$ is normed in $\dom(\B L)$ with $||P|| =_{\Real} 1$.
\end{corollary}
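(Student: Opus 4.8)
The plan is to obtain everything by feeding the orthocomplemented subspace attached to $P$ into Theorem~\ref{thm: crucial}, using the bijection of Theorem~\ref{thm: bijection}. First I would recall that, by Theorem~\ref{thm: bijection}(i), the routine $j$ sends $P$ to $\B L_P := (L_P^1, L_P^0) \in \B S(\C H)$, that $j$ is a genuine function inverse to $i$, and in particular that $P =_{\mathsmaller{\M P(\C H)}} P^1_{\B L_P}$ together with $\dom(P) =_{\mathsmaller{S(\C H)}} \dom(\B L_P) := L_P^1 \oplus L_P^0$. Thus the orthocomplemented subspace $\B L$ referred to implicitly in the statement is exactly $\B L_P$, and $\dom(\B L)$ is $\dom(P)$.

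For the locatedness assertion I would simply invoke Theorem~\ref{thm: crucial}(v) for $\B L = \B L_P$: it yields that $L_P^1$ and $L_P^0$ are located in $\dom(\B L_P)$, with $\rho(x, L_P^1) =_{\Real} ||P^0_{\B L_P}(x)||$ and $\rho(x, L_P^0) =_{\Real} ||P^1_{\B L_P}(x)||$ for every $x \in \dom(\B L_P)$; rewriting $P^1_{\B L_P}$ as $P$ gives the located distances explicitly in terms of $P$ itself. For the norm assertion, assume $P$ is strict, i.e.\ $\exists_{x \in \dom(P)}(P(x) \neq_H 0)$. By Theorem~\ref{thm: bijection}(iv) ($i$ and $j$ preserve strictness) the subspace $L_P^1 = \{P(x) \mid x \in \dom(P)\}$ is a strict subspace of $\C H$, so Theorem~\ref{thm: crucial}(ii), applied to $\B L = \B L_P$, tells us that $P^1_{\B L_P}$ is a bounded operator on $\dom(\B L_P)$ whose norm exists and equals $1$. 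Since $P =_{\mathsmaller{\M P(\C H)}} P^1_{\B L_P}$ means precisely that the two operators have $S(\C H)$-equal domains and agree pointwise on that common domain, the supremum $\sup\{||P(x)|| \mid x \in \dom(P),\ ||x|| \leq 1\}$ is literally the same real number as $||P^1_{\B L_P}||$, so $P$ is normed in $\dom(P)$ with $||P|| =_{\Real} 1$.

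The argument is a pure bookkeeping composition of already-established results, so I do not expect a genuine obstacle; the only two points deserving a sentence of care are (a) that ``strict'' in the partial-projection sense matches ``strict'' for $L_P^1$ — which is exactly Theorem~\ref{thm: bijection}(iv), using $L_P^1 = \{P(x) \mid x \in \dom(P)\}$ — and (b) that the weak equality $P =_{\mathsmaller{\M P(\C H)}} P^1_{\B L_P}$ (domains $S(\C H)$-equal, values equal on the common domain) is enough to transport the existence and value of the norm, which it is, because the sup-over-the-unit-ball expressions for the two operators are then the same expression.
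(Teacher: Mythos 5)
Your proposal is correct and follows essentially the same route as the paper: the paper's own proof derives locatedness of $L_P^1, L_P^0$ directly from Theorem~\ref{thm: crucial}(v), and obtains the norm statement from the equality $P =_{\mathsmaller{\M P(\C H)}} P^1_{\B L_P}$ (established in the proof of Theorem~\ref{thm: bijection}) combined with Theorem~\ref{thm: crucial}(ii). Your extra remarks on transporting strictness via Theorem~\ref{thm: bijection}(iv) and on why the weak equality of partial operators suffices to transfer the norm are sound elaborations of steps the paper leaves implicit.
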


\begin{proof}
The locatedness of $L_P^1$ and $L_P^0 $ follows by Theorem~\ref{thm: crucial}(v). If $P$ is strict, then by the above equality $P =_{\mathsmaller{\M P(\C H)}} P^1_{\B L_P}$ and Theorem~\ref{thm: crucial}(ii)
we get that $P$ is normed with $||P|| =_{\Real} 1$.
\end{proof}

\begin{remark}\label{rem: LP1}
	If $P \in \M P(\C H)$, then $L_P^1 =_{\mathsmaller{S(\C H)}} \{x \in \dom(P) \mid ||P(x)|| =_{\Real} ||x||\}$.
\end{remark}

\begin{proof}
Clearly, $L_P^1 \subseteq \{x \in \dom(P) \mid ||P(x)|| =_{\Real} ||x||\}$. If $x \in \dom(P)$ with $||P(x)|| =_{\Real} ||x||$, then $||x||^2 =_{\Real} ||P(x) + (x - P(x))||^2 =_{\Real} ||P(x)||^2 + ||x - P(x)||^2$, hence $||x - P(x)||^2 =_{\Real} 0$, i.e., $x =_H P(x)$ and $x \in L_P^1$.
\end{proof}

\begin{remark}\label{rem: meet}
	If $\B L, \B M \leq \C H$ with $\B L \leq \B M$, then $\dom(\B L) \wedge \dom(\B M) =_{\mathsmaller{S(\C H)}} L^1 \oplus (L^0 \wedge M^1) \oplus M^0$.
\end{remark}

\begin{proof}
	By Proposition~\ref{prp: closedMod} and Theorem~\ref{thm: crucial}(i) we use modularity twice as follows:
	\begin{align*}
		\dom(\B L) \wedge \dom(\B M)  & := (L^1 \vee L^0) \wedge (M^1 \vee M^0)\\
		& =_{\mathsmaller{S(\C H)}} [(M^1 \vee M^0) \wedge L^0] \vee L^1 \ \ \ \ \ \  [L^1 \leq M^1 \leq M^1 \vee M^0]\\
		& =_{\mathsmaller{S(\C H)}} [(L^0 \wedge M^1) \vee M^0] \vee L^1 \ \ \ \ \ \ [M^0 \leq L^0]\\
		& =_{\mathsmaller{S(\C H)}} L^1 \oplus [(L^0 \wedge M^1) \vee M^0] \ \ \ \ \ \ [L^1 \perp (L^0 \wedge M^1) \ \& \ L^1 \perp M^0]\\
		& =_{\mathsmaller{S(\C H)}} L^1 \oplus (L^0 \wedge M^1) \oplus M^0 \ \ \ \ \ \ \ \  [M^0 \perp L^0 \wedge M^1].\qedhere 
	\end{align*}
\end{proof}

If $\B L \leq \C H$ and $\dom(\B L) =_{\mathsmaller{S(\C H)}} H =_{\mathsmaller{S(\C H)}} \dom(\B M)$, then, clearly, $\dom(\B L) \wedge \dom(\B M) =_{\mathsmaller{S(\C H)}} H$, but also, according to Remark~\ref{rem: meet}, we have that $L^1 \oplus (L^0 \wedge M^1) \oplus M^0 =_{\mathsmaller{S(\C H)}} H$. To show this, let $x \in H$. By the totality of $\B L$ there are $l^1 \in L^1$ and $l^0 \in L^0$ with $x =_H l^1 + l^0$. By the totality of $\B M$ there are $m^1 \in M^1$ and $m^0 \in M^0$ with $l^0 =_H m^1 + m^0$. Hence, $x =_H l^1 + m^1 + m^0$, and since $M^1 \ni m^1 =_H l^0 - m^0 \in L^0$, we get that $m^1 \in L^0 \wedge M^1$.
Next, we show how Proposition~\ref{prp: equivorder} looks like in the complemented framework of orthocomplemented subspaces and partial projections. 
In contrast to the one-dimensional approach, both partial projection $P_{\B L}^1$ and $P_{\B L}^0$ are involved in the formulation of Theorem~\ref{thm: lescomp1}.

\begin{theorem}\label{thm: lescomp1}
If $\B L \leq \C H$ and $\B M \leq \C H$, then the following hold$:$\\[1mm]
\normalfont (i)
\itshape $\B L \leq \B M$ if and only if $P_{\B M}^1 \circ P_{\B L}^1 =_{\mathsmaller{\M P(\C H)}} P_{\B L}^1$ and $P_{\B L}^0\circ P_{\B M}^0 =_{\mathsmaller{\M P(\C H)}} P_{\B L}^0$.\\[1mm]
\normalfont (iia)
\itshape If $\B L \leq \B M$, then 
$$\dom(P_{\B L}^1 \circ P_{\B M}^1) =_{\mathsmaller{S(\C H)}} \dom(P_{\B M}^0 \circ P_{\B L}^0) =_{\mathsmaller{S(\C H)}} \dom(\B L) \wedge \dom(\B M),$$
and $P_{\B L}^1 \circ P_{\B M}^1 = P_{\B L}^1$ and 
$P_{\B M}^0 \circ P_{\B L}^0 = P_{\B M}^0$ on $\dom(\B L) \wedge \dom(\B M)$.\\[1mm]
\normalfont (iib)
\itshape If $L^1 \vee M^0 \leq \dom(P_{\B L}^1 \circ P_{\B M}^1) \wedge \dom(P_{\B M}^0 \circ P_{\B L}^0)$, and if $P_{\B L}^1 \circ P_{\B M}^1 = P_{\B L}^1$ 
and $P_{\B M}^0 \circ P_{\B L}^0 = P_{\B M}^0$ on $L^1 \vee M^0$, then $\B L \leq \B M$.\\[1mm]
\normalfont (iiia)
\itshape If $\B L \leq \B M$ , then, for every $x \in \dom(\B L) \wedge \dom(\B M)$, we have that
$$||P_{\B L}^1(x)|| \leq ||P_{\B M}^1(x)|| \ \ \ \& \ \ \  ||P_{\B M}^0(x)|| \leq ||P_{\B L}^0(x)||.$$
\normalfont (iiib)
\itshape If $\dom(\B L) =_{\mathsmaller{S(\C H)}} \dom(\B M)$ and, for every $x \in \dom(\B L)$, we have that $||P_{\B L}^1(x)|| \leq ||P_{\B M}^1(x)||$ and $||P_{\B M}^0(x)|| \leq ||P_{\B L}^0(x)||$, then $\B L \leq \B M$.\\[1mm]
\normalfont (iva)
\itshape If $\B L \leq \B M$, then, for every $x \in \dom(\B L) \wedge \dom(\B M)$, we have that
$$\langle P_{\B L}^1(x), x \rangle \leq \langle P_{\B M}^1(x), x\rangle \ \ \ \& \ \ \  \langle P_{\B M}^0(x), x \rangle \leq \langle P_{\B L}^0(x), x\rangle.$$
\normalfont (ivb)
\itshape If $\dom(\B L) =_{\mathsmaller{S(\C H)}} \dom(\B M)$, and, for every $x \in \dom(\B L)$, we have that $\langle P_{\B L}^1(x), x \rangle \leq \langle P_{\B M}^1(x), x\rangle$ and $\langle P_{\B M}^0(x), x \rangle \leq \langle P_{\B L}^0(x), x\rangle$, then $\B L \leq \B M$.
\end{theorem}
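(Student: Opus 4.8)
The whole theorem is driven by the direct-sum decomposition of Theorem~\ref{thm: crucial}(i): writing $l^i := P_{\B L}^i(x)$ for $x \in \dom(\B L)$ and $m^i := P_{\B M}^i(x)$ for $x \in \dom(\B M)$, each $x \in \dom(\B L) \wedge \dom(\B M)$ has two \emph{unique} decompositions $x =_H l^1 + l^0 =_H m^1 + m^0$ with $l^1 \in L^1$, $l^0 \in L^0$, $m^1 \in M^1$, $m^0 \in M^0$. When $\B L \leq \B M$, i.e. $L^1 \leq M^1$ and $M^0 \leq L^0$, these are linked by
$$m^1 =_H l^1 + (l^0 - m^0), \qquad l^0 =_H (m^1 - l^1) + m^0,$$
where $l^1 \in L^1$ and $l^0 - m^0 \in L^0$ (since $m^0 \in M^0 \leq L^0$), while $m^1 - l^1 \in M^1$ (since $l^1 \in L^1 \leq M^1$) and $m^0 \in M^0$. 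Pythagoras applied to these two equalities gives at once $||l^1|| \leq ||m^1||$ and $||m^0|| \leq ||l^0||$, which is (iiia); applying $P_{\B L}^1$ to the first equality and $P_{\B M}^0$ to the second, and invoking uniqueness of the decompositions, gives $(P_{\B L}^1 \circ P_{\B M}^1)(x) =_H P_{\B L}^1(x)$ and $(P_{\B M}^0 \circ P_{\B L}^0)(x) =_H P_{\B M}^0(x)$ for every $x \in \dom(\B L) \wedge \dom(\B M)$, which is the substantive part of (iia) and of the forward half of (i).

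For the domain identities in (iia) I would work straight from the definition of composition of partial maps. An $x \in \dom(\B M)$ lies in $\dom(P_{\B L}^1 \circ P_{\B M}^1)$ precisely when $P_{\B M}^1(x) = m^1 \in \dom(\B L)$; since $m^0 \in M^0 \leq L^0 \leq \dom(\B L)$ and $x =_H m^1 + m^0$, this holds iff $x \in \dom(\B L)$, so $\dom(P_{\B L}^1 \circ P_{\B M}^1) =_{\mathsmaller{S(\C H)}} \dom(\B L) \wedge \dom(\B M)$. Symmetrically, an $x \in \dom(\B L)$ lies in $\dom(P_{\B M}^0 \circ P_{\B L}^0)$ iff $P_{\B L}^0(x) = l^0 \in \dom(\B M)$, and since $l^1 \in L^1 \leq M^1 \leq \dom(\B M)$ this holds iff $x \in \dom(\B M)$, giving $\dom(P_{\B M}^0 \circ P_{\B L}^0) =_{\mathsmaller{S(\C H)}} \dom(\B L) \wedge \dom(\B M)$ as well.

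The converse halves are evaluations at the extreme vectors. For the ``only if'' of (i): equality of partial projections pins down the relevant domains, so one may evaluate the two operator identities of (i) at an arbitrary $l^1 \in L^1$ and at an arbitrary $m^0 \in M^0$; since $P_{\B L}^1(l^1) =_H l^1$ and $P_{\B M}^0(m^0) =_H m^0$, the identities collapse to $P_{\B M}^1(l^1) =_H l^1$ (hence $l^1 \in M^1$, so $L^1 \leq M^1$) and to $P_{\B L}^1(m^0) =_H m^0$ being forced into $L^1 \oplus L^0$ so that $m^0 \in L^0$, so $M^0 \leq L^0$. Part (iib) is the same computation localised: the hypothesis $L^1 \vee M^0 \leq \dom(P_{\B L}^1 \circ P_{\B M}^1) \wedge \dom(P_{\B M}^0 \circ P_{\B L}^0)$ guarantees that every projection below is applied where it is defined (in particular $L^1 \vee M^0 \subseteq \dom(\B L) \wedge \dom(\B M)$), and evaluating $P_{\B M}^0 \circ P_{\B L}^0 = P_{\B M}^0$ at $l^1 \in L^1 \subseteq L^1 \vee M^0$ gives $P_{\B M}^0(l^1) =_H P_{\B M}^0(P_{\B L}^0(l^1)) =_H 0$, i.e. $l^1 \in M^1$, while evaluating $P_{\B L}^1 \circ P_{\B M}^1 = P_{\B L}^1$ at $m^0 \in M^0 \subseteq L^1 \vee M^0$ gives $P_{\B L}^1(m^0) =_H P_{\B L}^1(P_{\B M}^1(m^0)) =_H 0$, i.e. $m^0 \in L^0$; together these yield $\B L \leq \B M$.

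Finally, (iva) reduces to (iiia) and (ivb) to (iiib) by the identity $\langle P(x), x\rangle =_{\D K} ||P(x)||^2$, valid for any partial projection $P$ and any $x \in \dom(P)$, which is established inside the proof of Theorem~\ref{thm: crucial}(ii) (and applies to $P_{\B L}^0, P_{\B M}^0$ since they are partial projections by Theorem~\ref{thm: crucial}(iii)). For (iiib) and (ivb), where $\dom(\B L) =_{\mathsmaller{S(\C H)}} \dom(\B M)$, I would evaluate the given inequalities at $l^1 \in L^1$ (forcing $||P_{\B M}^0(l^1)|| =_{\Real} 0$, hence $l^1 \in M^1$) and at $m^0 \in M^0$ (forcing $||P_{\B L}^1(m^0)|| =_{\Real} 0$, hence $m^0 \in L^0$), concluding $\B L \leq \B M$. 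I expect the only genuinely delicate point to be the domain bookkeeping for the composite partial projections in (i), (iia) and (iib): one must keep invoking $L^1 \vee L^0 =_{\mathsmaller{S(\C H)}} L^1 \oplus L^0$ from Theorem~\ref{thm: crucial}(i), so that the decompositions used are truly unique and the membership tests such as ``$P_{\B M}^1(x) \in \dom(\B L)$'' reduce cleanly to the inclusions $L^1 \leq M^1$ and $M^0 \leq L^0$; everything else is routine Pythagoras and linearity.
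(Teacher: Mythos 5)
Your proposal follows essentially the same route as the paper's proof: the unique decompositions supplied by Theorem~\ref{thm: crucial}(i), Pythagoras for (iiia), evaluation at the extreme vectors $l^1 \in L^1$ and $m^0 \in M^0$ for the converse directions, and the identity $\langle P(x), x\rangle =_{\D K} ||P(x)||^2$ to reduce (iv) to (iii). Your ``linking relations'' $m^1 =_H l^1 + (l^0 - m^0)$ and $l^0 =_H (m^1 - l^1) + m^0$ are exactly the paper's three-fold decomposition $x =_H l^1 + k^{01} + m^0$ with $k^{01} \in L^0 \wedge M^1$ from Remark~\ref{rem: meet}. Two of your local choices are slightly different and both fine: your domain computation in (iia) (``$P_{\B M}^1(x) \in \dom(\B L)$ iff $x \in \dom(\B L)$, since $m^0 \in M^0 \leq L^0$'') is a bit slicker than the paper's explicit use of Remark~\ref{rem: meet}, and in (iiib) you use the complementary norm inequality at each extreme vector where the paper goes through Remark~\ref{rem: LP1}.

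The one step that does not survive scrutiny is the second half of your converse of (i): ``$P_{\B L}^1(m^0) =_H m^0$ being forced into $L^1 \oplus L^0$ so that $m^0 \in L^0$'' is a non sequitur --- if $P_{\B L}^1(m^0) =_H m^0$ then $m^0 \in L^1$, not $L^0$. The correct evaluation (which is what the paper does, and which you yourself carry out correctly in (iib)) is: the domain part of the operator equality forces $m^0 =_H P_{\B M}^0(m^0) \in \dom(\B L) =_{\mathsmaller{S(\C H)}} L^1 \oplus L^0$, and then the function part gives $P_{\B L}^0(m^0) =_H \big(P_{\B L}^0 \circ P_{\B M}^0\big)(m^0) =_H P_{\B M}^0(m^0) =_H m^0$, whence $m^0 \in L^0$ by uniqueness of the decomposition. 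Note that this requires reading the second identity in (i) as $P_{\B L}^0 \circ P_{\B M}^0 =_{\mathsmaller{\M P(\C H)}} P_{\B M}^0$ (with domain $M^1 \oplus M^0$), which is what the paper's own proof actually establishes; the displayed statement's $P_{\B L}^0$ on the right-hand side appears to be a typo, and is probably what tripped you up here, since evaluating the literal version at $m^0$ yields only a tautology.
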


\begin{proof}
(i) If $L^1 \leq M^1$ and $M^0 \leq L^0$, then by Definition~\ref{def: partialoperator} we have that 
$$\dom(P_{\B M}^1 \circ P_{\B L}^1) := \{x \in L^1 \oplus L^0 \mid P_{\B L}^1(x) \in M^1 \oplus M^0\} =_{\mathsmaller{S(\C H)}} L^1 \oplus L^0 =_{\mathsmaller{S(\C H)}} \dom(P_{\B L}^1),$$
$$\dom(P_{\B L}^0 \circ P_{\B M}^0) := \{x \in M^1 \oplus M^0 \mid P_{\B M}^0(x) \in L^1 \oplus L^0\} =_{\mathsmaller{S(\C H)}} M^1 \oplus M^0 =_{\mathsmaller{S(\C H)}} \dom(P_{\B M}^0).$$
If $x =_H l^1 + l^0 \in L^1 \oplus L^0$, then $P_{\B M}^1 \big(P_{\B L}^1(x)\big) =_H P_{\B M}^1 (l^1) =_H l^1 =_H  P_{\B L}^1(x)$, and if $y =_H m^1 + m^0 \in M^1 \oplus M^0$, then $P_{\B L}^0 \big(P_{\B M}^0(y)\big) =_H P_{\B L}^0 (m^0) =_H m^0 =_H  P_{\B M}^0(y)$. Conversely, by the equality $P_{\B M}^1 \circ P_{\B L}^1 =_{\mathsmaller{\M P(\C H)}} P_{\B L}^1$ we get $\dom(P_{\B M}^1 \circ P_{\B L}^1)
=_{\mathsmaller{S(\C H)}} \dom(P_{\B L}^1) =_{\mathsmaller{S(\C H)}} L^1 \oplus L^0$. Hence, if $l^1 \in L^1$, then $l^1 \in M^1 \oplus M^0$. Let $l^1 =_H m^1 + m^0 \in M^1 \oplus M^0$. By the equality of partial functions we get $P_{\B M}^1\big(P_{\B L}^1(l^1)\big) =_H = m^1 =_H l^1$, hence, $l^1 \in M^1$. Working in a similar way, we show that $M^0 \leq L^0$.\\
(iia) By Definition~\ref{def: partialoperator} we have that
$$\dom(P_{\B L}^1 \circ P_{\B M}^1) := \{x \in M^1 \oplus M^0 \mid P_{\B M}^1(x) \in L^1 \oplus L^0\} \leq \C H,$$
$$\dom(P_{\B M}^0 \circ P_{\B L}^0) := \{x \in L^1 \oplus L^0 \mid P_{\B L}^0(x) \in M^1 \oplus M^0\ \leq \C H.$$
By the hypotheses $L^1 \leq M^1$ and $M^0 \leq L^0$ we have that 
$$L^1, L^0 \wedge M^1, M^0 \leq \dom(P_{\B L}^1 \circ P_{\B M}^1) \ \ \ \& \ \ \ L^1, L^0 \wedge M^1, M^0 \leq \dom(P_{\B M}^0 \circ P_{\B L}^0),$$
hence, by Remark~\ref{rem: meet} we have that
$$\dom(P_{\B L}^1 \circ P_{\B M}^1) \geq L^1 \oplus (L^0 \wedge M^1) \oplus M^0 \ \ \ \& \ \ \ \dom(P_{\B M}^0 \circ P_{\B L}^0) \geq L^1 \oplus (L^0 \wedge M^1) \oplus M^0.$$
For the inequality $\dom(P_{\B L}^1 \circ P_{\B M}^1) \leq L^1 \oplus (L^0 \wedge M^1) \oplus M^0$, let $x \in \dom(P_{\B L}^1 \circ P_{\B M}^1)$ with $m^1 \in M^1$ and $m^0 \in M^0$, such that $x =_H m^1 + m^0$. Since $m^1 \in L^1 \oplus L^0$, there are $l^1 \in L^1$ and $l^0 \in L^0$, such that $m^1 =_H l^1 + l^0$. Hence, $x =_H l^1 + l^0 + m^0$ with $L^0 \ni l^0 = m^1 - l^1 \in M^1$, i.e., $l^0 \in M^1 \wedge L^0$. For the inequality $\dom(P_{\B M}^0 \circ P_{\B L}^0) \leq L^1 \oplus (L^0 \wedge M^1) \oplus M^0$, we work similarly. Let now $x = l^1 + k^{01} + m^0 \in \dom(P_{\B L}^1 \circ P_{\B M}^1)$, where $l^1 \in L^1, k^{01} \in L^0 \wedge M^1$, and $m^0 \in M^0$. Clearly, we have that
$$P_{\B L}^1\big(P_{\B M}^1(x)\big) =_H P_{\B L}^1(l^1 + k^{01}) =_H l^1 =_H P_{\B L}^1(x) \ \ \ \& \ \ \ 
P_{\B M}^0\big(P_{\B L}^0(x)\big) =_H P_{\B L}^0(k^{01} + m^0) =_H m^0 =_H P_{\B M}^0(x).$$
(iib) Let $l^1 \in L^1$. Since $l^1 \in \dom(P_{\B L}^1 \circ P_{\B M}^1)$, there are $m^1 \in M^1$ and $m^0 \in M^0$, such that $l^1 =_H m^1 + m^0$. By the first equality 
$P_{\B L}^1(m^1) =_H l^1$. By the second equality 
$P_{\B M}^0(P_{\B L}^0(l^1)) =_H P_{\B M}^0(0) =_H 0 =_H P_{\B M}^0(l^1)$, i.e., $m^0 =_H 0$, hence, $l^1 =_H m^1 \in M^1$. Next, let $m^0 \in M^0$. Since $m^0 \in \dom(P_{\B M}^0 \circ P_{\B L}^0)$, there are $l^1 \in L^1$ and $l^0 \in L^0$ with $m^0 =_H l^1 + l^0$. By the second equality we get $P_{\B M}^0(l^0) =_H m^0$. By the first equality we get $P_{\B L}^1(P_{\B M}^1(m^0)) =_H P_{\B L}^1(0) =_H 0 =_H P_{\B L}^1(m^0)$, i.e., $l^1 =_H 0$, hence, $m^0 =_H l^0 \in L^0$.\\
(iiia) By Remark~\ref{rem: meet} if $x \in \dom(\B L) \wedge \dom(\B M)$, there are $l^1 \in L^1, k^{01} \in L^0 \wedge M^1$ and $m^0 \in M^0$, such that $x =_H l^1 + k^{01} + m^0$. Hence,
$$||P_{\B M}^1(x)||^2 =_{\Real} ||l^1 + k^{01}||^2 =_{\Real} ||l^1||^2 + ||k^{01}||^2 \geq ||l^1||^2 =_{\Real} ||P_{\B L}^1(x)||^2,$$
$$||P_{\B L}^0(x)||^2 =_{\Real} ||k^{01} + m^0||^2 =_{\Real} ||k^{01}||^2 + ||m^{0}||^2 \geq ||m^0||^2 =_{\Real} ||P_{\B M}^0(x)||^2.$$
(iiib) If $l^1 \in L^1$ and $m^0 \in M^0$, we have that
$$||l^1||^2 =_{\Real} ||P_{\B L}^1(l^1)||^2 \leq  ||P_{\B M}^1(l^1)||^2 \leq ||l^1||^2,$$
$$||m^0||^2 =_{\Real} ||P_{\B M}^0(m^0)||^2 \leq  ||P_{\B L}^0(m^0)||^2 \leq ||m^0||^2,$$
hence $||P_{\B M}^1(l^1)||^2 =_{\Real} ||l^1||^2$ and $||P_{\B L}^0(m^0)||^2 =_{\Real} ||m^0||^2$. Consequently, $||P_{\B M}^1(l^1)|| =_{\Real} ||l^1||$ and $||P_{\B L}^0(m^0)|| =_{\Real} ||m^0||$, hence, by Remark~\ref{rem: LP1} we get $l^1 \in L^1_{P_{\B M}} =_{\mathsmaller{S(\C H)}} M^1$ and $m^0 \in L^0_{P_{\B L}} =_{\mathsmaller{S(\C H)}} L^0$.\\
Cases (iva) and (ivb) are shown trivially by (iiia) and (iiib), respectively.
\end{proof}

Notice, that if $\B L$ and $\B M$ are total, then we recover all equivalences of Proposition~\ref{prp: equivorder}.

\section{Complemented quantum logic}
\label{sec: typeI}


In this section we define the basic  operations 
on $\B S(\C H)$, and we prove that $\B S(\C H)$ is what we call a complemented quantum lattice (Proposition~\ref{prp: ComplQL}). Due to the definition of negation of an orthocomplemented subspace, complemented quantum logic satisfies all properties of classical negation $L^{\perp}$ in $\ClQL$, and in this way it is a constructive logical system closer to $\ClQL$ than $\CoQL$.

\begin{definition}\label{def: operations2} The following operations are defined on {$\B S(\C H)$}$:$
	$$\B 1 := (1, 0) := (H, \{0\}),$$
	$$\B 0 := (0, 1) := (\{0\}, H),$$
	$$\B L \wedge \B M := \big(L^1 \wedge M^1, L^0 \vee M^0\big),$$
	$$\B L \vee \B M := \big(L^1 \vee M^1, L^0 \wedge M^0\big),$$
	$$-\B L := \big(L^0, L^1\big),$$
	$$\B L -\B  M := \B L \wedge (- \B M),$$
	$$\B L \To \B M := (- \B L) \vee \B M,$$
	$$\B L \TOT \B M := (\B L \To \B M) \wedge (\B M \To \B L),$$ 
	$$\neg \B L := \B L \To \B 0.$$
If $(\B L_i)_{i \in I}$ is a family of orthocomplemented subspaces of $\C H$ over the index-set $I$, let
$$\bigwedge_{i \in I} \B L_i := \bigg(\bigwedge_{i \in I}L_i^1, \ \bigvee_{i \in I}L_i^0\bigg),$$
$$\bigvee_{i \in I} \B L_i := \bigg(\bigvee_{i \in I}L_i^1, \ \bigwedge_{i \in I}L_i^0\bigg).$$
The relation of orthogonality on $\B S (\C H)$ is defined by the rule
$$\B L \perp \B M :\TOT \B L \leq (- \B M).$$
If $\B L \leq \C H$ is understood as a proposition, its derivability $\vdash \B L$ is defined as a proof of $\B L =_{\mathsmaller{\B S (\C H)}} \B 1$.

\end{definition}

It is straightforward to show that $\B L \wedge \B M \leq \C H$ , $\B L \vee \B M \leq \C H$, 
$\bigwedge_{i \in I} \B L_i \leq \C H$, and $\bigvee_{i \in I} \B L_i \leq \C H$.
Clearly, $\B L := (L, L^{\perp}) \leq \C H$, and $\neg \B L := - \B L \vee \B 0 =_{\mathsmaller{\B S(\C H)}} - \B L$. The equality
$\B 0 \To \B L := \B 1 \vee L =_{\mathsmaller{\B S(\C H)}} \B 1$
expresses that within $\B S(\C H)$, seen as a logical system, $\EFQ$ is derived.
In general, the implication in $\B S(\C H)$ does not satisfy the adjunction of a Heyting algebra.
By Theorem~\ref{thm: crucial}(i) we have that
$$\dom(\B L \wedge \B M) =_{\mathsmaller{S(\C H)}} (L^1 \wedge M^1) \oplus (L^0 \vee M^0),$$
$$\dom(\B L \vee \B M) =_{\mathsmaller{S(\C H)}} (L^1 \vee M^1) \oplus (L^0 \wedge M^0),$$
$$\dom(\B L - \B M) =_{\mathsmaller{S(\C H)}} (L^1 \wedge M^0) \oplus (L^0 \wedge M^1),$$
$$\dom\bigg(\bigwedge_{i \in I} \B L_i\bigg) =_{\mathsmaller{S(\C H)}} \bigg(\bigwedge_{i \in I}L_i^1\bigg) \oplus \bigg(\bigvee_{i \in I}L_i^0\bigg),$$
$$\dom\bigg(\bigvee_{i \in I} \B L_i\bigg) =_{\mathsmaller{S(\C H)}} \bigg(\bigvee_{i \in I}L_i^1\bigg) \oplus \bigg(\bigwedge_{i \in I}L_i^0\bigg).$$

As total orthocomplemented subspaces $\C H$ correspond to located subspaces of $\C H$, by Proposition~\ref{prp: Brouwerian}(i, ii) we get immediately the following fact.

\begin{proposition}\label{prp: Brouwerian2}
	If $\B L, \B M \leq \C H$, it cannot be accepted constructively that $\B L \wedge \B M$  or
	$\B L \vee \B M$ is total.
\end{proposition}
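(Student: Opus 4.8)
The plan is to reduce the totality statement for $\B L \wedge \B M$ and $\B L \vee \B M$ to the corresponding Brouwerian counterexamples $(\BC_1)$ and $(\BC_2)$ of Proposition~\ref{prp: Brouwerian}, using the bridge already established in the text: \emph{a total orthocomplemented subspace of $\C H$ is precisely $(L, L^{\perp})$ for a located subspace $L$} (Corollary~\ref{cor: corthm}(i)-(ii) together with Remark~\ref{rem: orthocs1}(ii)-(iii)). So I would start from two located subspaces $L, M \leq \C H$, form the total orthocomplemented subspaces $\B L := (L, L^{\perp})$ and $\B M := (M, M^{\perp})$, and compute
$$\B L \wedge \B M = (L \wedge M,\ L^{\perp} \vee M^{\perp}), \qquad \B L \vee \B M = (L \vee M,\ L^{\perp} \wedge M^{\perp})$$
from Definition~\ref{def: operations2}. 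Now $\dom(\B L \wedge \B M) =_{\mathsmaller{S(\C H)}} (L\wedge M) \oplus (L^{\perp}\vee M^{\perp})$ by the domain formula already displayed after Definition~\ref{def: operations2}, and $\B L \wedge \B M$ is total exactly when this domain is all of $H$.

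The key step is then: if $\B L \wedge \B M$ were total, then by Corollary~\ref{cor: corthm}(i) its first component $L\wedge M$ would be located (it is the $L^1$-component of a total orthocomplemented subspace, hence located by Remark~\ref{rem: orthocs1}(iii)). But $(\BC_1)$ says we cannot accept constructively that $L\wedge M$ is located for arbitrary located — indeed for arbitrary — $L, M$. Hence we cannot accept that $\B L \wedge \B M$ is total. The argument for $\B L \vee \B M$ is identical, invoking $(\BC_2)$ instead: totality of $\B L \vee \B M$ would force $L \vee M$ to be located, which $(\BC_2)$ forbids. One should double-check the logical form: Proposition~\ref{prp: Brouwerian} asserts these locatedness statements ``cannot be accepted constructively'' (i.e., they imply an omniscience principle such as $\LLPO$ or a weak Markov-type principle, via the standard recursive or Brouwerian counterexample in~\cite{BS00}); since totality of $\B L \wedge \B M$ constructively implies locatedness of $L \wedge M$, totality of $\B L \wedge \B M$ in general would imply the same forbidden principle, which is exactly what ``cannot be accepted constructively'' means for the conclusion.

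The main obstacle — really the only subtlety — is making sure the reduction is clean: I must confirm that the first component of a total orthocomplemented subspace is genuinely located in the ambient $\C H$ (not merely in its own domain), and this is precisely Remark~\ref{rem: orthocs1}(iii) (``if $\B L$ is total, then $\B L$ is located'') combined with part (c) of Definition~\ref{def: orthocs} which defines ``located'' for an orthocomplemented subspace as $L^1, L^0$ both located \emph{in $\C H$}. With that in hand the proof is a two-line deduction, which matches the paper's phrasing ``we get immediately the following fact.'' I would therefore write:

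\begin{proof}
	Let $L, M \leq \C H$ be arbitrary located subspaces, and set $\B L := (L, L^{\perp})$ and $\B M := (M, M^{\perp})$, which are total orthocomplemented subspaces by Remark~\ref{rem: orthocs1}(ii). By Definition~\ref{def: operations2}, $\B L \wedge \B M = (L \wedge M, L^{\perp} \vee M^{\perp})$. If $\B L \wedge \B M$ were total, then by Remark~\ref{rem: orthocs1}(iii) its first component $L \wedge M$ would be located in $\C H$, which by $(\BC_1)$ of Proposition~\ref{prp: Brouwerian} cannot be accepted constructively. Likewise, $\B L \vee \B M = (L \vee M, L^{\perp} \wedge M^{\perp})$, and totality of $\B L \vee \B M$ would make $L \vee M$ located, contradicting $(\BC_2)$. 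Hence neither $\B L \wedge \B M$ nor $\B L \vee \B M$ can be accepted to be total in general.
\end{proof}
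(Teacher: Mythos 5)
Your proof is correct and follows essentially the same route as the paper: the paper derives this proposition ``immediately'' from the correspondence between total orthocomplemented subspaces and located subspaces (Corollary~\ref{cor: corthm}(i) and Remark~\ref{rem: orthocs1}(ii)--(iii)) together with $(\BC_1)$ and $(\BC_2)$ of Proposition~\ref{prp: Brouwerian}, which is exactly your reduction. Your only addition is spelling out the two-line deduction that the paper leaves implicit.
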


 The orthocomplemented subspaces of a Hilbert space together with its total subspaces form a \textit{constructive, complemented quantum lattice}, or a \textit{swap orthomodular lattice}, since the negation of $\B L$ is formed by swapping its components. 
 As in the theory of swap algebras (see~\cite{MWP24, MWP25, MWP25b}), where the classical properties of a Boolean algebra hold for the total elements of a swap algebra, the classical properties of $S(\C H)$ hold for its total (and located by Remark~\ref{rem: orthocs1}) elements. Next, follows the orthocomplemented, two-dimensional analague to Proposition~\ref{prp: CoQL}. Notice that the inequality on $ \B S_{\tota}(\C H)$ is the one inherited by the inequality on $\B S(\C H)$, defined in Theorem~\ref{thm: bijection}.

\begin{proposition}\label{prp: ComplQL}
	Let the structure $(\B S(\C H), \B S_{\tota}(\C H), \leq, -, \B 0, \B 1)$. If $\C H$ is strict, if $\B L, \B M \in \B S(\C H)$, and if $(\B L_i)_{i \in I}$ is a family of orthocomplemented subspaces of $\C H$ over the index-set $I$, then the following hold$:$\\[1mm]
	$(\ComplQL_0)$ $\B 0, \B 1 \in \B S_{\tota}(\C H)$ and $\B 0 \neq_{\mathsmaller{\B S_{\tota}(\C H)}} \B 1$.\\[1mm]
	$(\ComplQL_1)$ $\B L \wedge \B M$ $(\B L \vee \B M)$ is the greatest lower bound $($least upper bound$)$ of $\B L, \B M$ relative to $\leq$.\\[1mm]
	$(\ComplQL_{1{'}})$ $\bigwedge_{i \in I} \B L_i$ \hspace{-2.5mm} $\big(\bigvee_{i \in I} \B L_i\big)$ is the greatest lower bound $($least upper bound$)$ of $(\B L_i)_{\mathsmaller{i \in I}}$ relative to $\leq$.\\[1mm]
	$(\ComplQL_2)$ $\B 0 \leq \B L \leq \B 1$.\\[1mm]
	$(\ComplQL_3)$ If $\B L \leq \B M$, then $- \B M \leq - \B L$.\\[1mm]
	$(\ComplQL_4)$ $\B L =_{\mathsmaller{\B S(\C H)}} -(- \B L)$.\\[1mm]
	$(\ComplQL_5)$ $\B L \wedge (- \B L) := (0, \dom(\B L)) =: 0_{\B L}$ and $\B L \vee (- \B L) := (\dom(\B L), 0) =: 1_{\B L}$.\\[1mm]
	$(\ComplQL_6)$ If $\B L \in \B S_{\tota}(\C H)$, then $- \B L \in \B S_{\tota}(\C H)$.\\[1mm]
	$(\ComplQL_7)$ If $\B L, \B M \in \B S_{\tota}(\C H)$ and $\B L \leq (- \B M)$, then $\B L \vee \B M \in \B S_{\tota}(\C H)$.\\[1mm]
	$(\ComplQL_8)$ If $\B L \in \B S_{\tota}(\C H)$ and $\B L \leq \B M$, then $\B M =_{\mathsmaller{\B S(\C H)}} \B L \vee (\B M - \B L)$.
\end{proposition}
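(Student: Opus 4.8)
The plan is to verify the eight clauses $(\ComplQL_0)$--$(\ComplQL_8)$ in turn; most of them reduce to the componentwise definitions of Definition~\ref{def: operations2} together with the constructive lattice theory of $S(\C H)$ from Section~\ref{sec: sub} (that $\B L\wedge\B M$, $\B L\vee\B M$, $\bigwedge_i\B L_i$, $\bigvee_i\B L_i$ really are orthocomplemented subspaces comes from Remark~\ref{rem: perp1}(iii)--(v)), and only $(\ComplQL_7)$ and $(\ComplQL_8)$ carry genuine content. For $(\ComplQL_0)$: $\dom(\B 0)=\{0\}\vee H=H=H\vee\{0\}=\dom(\B 1)$, so $\B 0,\B 1\in\B S_{\tota}(\C H)$; since $P_{\B 0}^1$ is the zero operator and $P_{\B 1}^1=I_H$, strictness of $\C H$ supplies $x\in H_0$ with $P_{\B 0}^1(x)=0\neq_H x=P_{\B 1}^1(x)$, i.e. $\B 0\neq_{\mathsmaller{\B S_{\tota}(\C H)}}\B 1$. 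For $(\ComplQL_1)$ and $(\ComplQL_{1'})$: in $S(\C H)$ the meet is the glb and the join the lub already constructively (clauses $(\CoQL_1), (\CoQL_{1'})$); unwinding $\B N\leq\B L\wedge\B M$ as $N^1\leq L^1\wedge M^1$ together with $L^0\vee M^0\leq N^0$, the glb/lub property transfers to $\B S(\C H)$ with the order reversed in the second coordinate, and dually for $\vee$. Clauses $(\ComplQL_2)$--$(\ComplQL_4)$ are immediate: $\B 0\leq\B L\leq\B 1$ unwinds to $\{0\}\leq L^1\leq H$ and $\{0\}\leq L^0\leq H$; $(\ComplQL_3)$ holds because $-$ swaps the two coordinates and the defining inequality of $\leq$ is symmetric under this swap; and $-(-\B L)=(L^1,L^0)=\B L$. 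For $(\ComplQL_5)$, $\B L\wedge(-\B L)=(L^1\wedge L^0,\ L^0\vee L^1)=(\{0\},\dom(\B L))=0_{\B L}$, using $L^1\cap L^0=\{0\}$ (orthogonality, Remark~\ref{rem: orthocs1}(i)) and $L^0\vee L^1=L^1\vee L^0=\dom(\B L)$; dually $\B L\vee(-\B L)=(\dom(\B L),\{0\})=1_{\B L}$.

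Next, $(\ComplQL_6)$ is just $\dom(-\B L)=L^0\vee L^1=L^1\vee L^0=\dom(\B L)=H$. For $(\ComplQL_7)$, from $\B L\leq-\B M=(M^0,M^1)$ I read off $L^1\leq M^0$ and $M^1\leq L^0$; totality of $\B M$ gives $M^0=(M^1)^\perp$, and totality of $\B L$ gives $L^0=(L^1)^\perp$ and makes $L^1$ located (Corollary~\ref{cor: corthm}(i), Remark~\ref{rem: orthocs1}(iii)); since then $L^1\leq(M^1)^\perp$, clause $(\CoQL_7)$ yields that $L^1\vee M^1$ is located. Now $L^0\wedge M^0=(L^1)^\perp\wedge(M^1)^\perp=(L^1\vee M^1)^\perp$ by Proposition~\ref{prp: corCoQL}(v), so $\B L\vee\B M=(N,N^\perp)$ with $N:=L^1\vee M^1$ located, and Remark~\ref{rem: orthocs1}(ii) makes this total; hence $\B L\vee\B M\in\B S_{\tota}(\C H)$.

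Finally, for $(\ComplQL_8)$: unfolding, $\B M-\B L=\B M\wedge(-\B L)=(M^1\wedge L^0,\ M^0\vee L^1)$, so $\B L\vee(\B M-\B L)=\big(L^1\vee(M^1\wedge L^0),\ L^0\wedge(M^0\vee L^1)\big)$, and I must show this equals $\B M$. First coordinate: $\B L$ total gives $L^1$ located and $L^0=(L^1)^\perp$ (Corollary~\ref{cor: corthm}(i)), while $\B L\leq\B M$ gives $L^1\leq M^1$, so the orthomodular law $(\CoQL_8)$ yields $L^1\vee(M^1\wedge(L^1)^\perp)=M^1$. Second coordinate, $L^0\wedge(M^0\vee L^1)=M^0$: the inclusion $\geq$ is clear since $M^0\leq L^0$ and $M^0\leq M^0\vee L^1$; for $\leq$, note $M^0\perp L^1$ (because $M^0\leq L^0\leq(L^1)^\perp$, Remark~\ref{rem: orthocs1}(i)), so $(L^1,M^0)\in\B S(\C H)$ and Theorem~\ref{thm: crucial}(i) gives $M^0\vee L^1=M^0+L^1$ as a set; if $x\in L^0$ with $x=_H m^0+l^1$, $m^0\in M^0$, $l^1\in L^1$, then $l^1=_H x-m^0\in(L^1)^\perp$ (since both $x,m^0\in L^0\leq(L^1)^\perp$) and $l^1\in L^1$, so $l^1\in L^1\cap(L^1)^\perp=\{0\}$ and $x=_H m^0\in M^0$.

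The main obstacle is the pair $(\ComplQL_7)$ and $(\ComplQL_8)$, which are exactly the clauses where totality is indispensable and where the one-dimensional constructive facts of Section~\ref{sec: sub} --- the orthomodular law $(\CoQL_8)$, the stability of $S_{\loc}(\C H)$ under orthogonal joins $(\CoQL_7)$, and the identity $L^0=(L^1)^\perp$ for total $\B L$ --- must be imported. Within $(\ComplQL_8)$, the first coordinate is the genuine appeal to the orthomodular identity, whereas the second coordinate has to be argued directly by the Pythagoras-type computation above rather than by symmetry with the first, since the coordinate-swap defining $-\B L$ does not commute with the orthomodular manipulation.
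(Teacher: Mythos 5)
Your proof is correct, and clauses $(\ComplQL_0)$--$(\ComplQL_6)$ coincide with what the paper's proof leaves as ``straightforward to show''. For the two substantive clauses, however, you take a genuinely different route. For $(\ComplQL_7)$ you reduce everything to the one-dimensional Bridges--Svozil facts: locatedness of $L^1\vee M^1$ via $(\CoQL_7)$, the identity $(L\vee M)^{\perp}=_{\mathsmaller{S(\C H)}}L^{\perp}\wedge M^{\perp}$ of Proposition~\ref{prp: corCoQL}(v), and then Remark~\ref{rem: orthocs1}(ii). The paper instead argues directly: it decomposes an arbitrary $x\in H$ as $(l+m)+(m^{\perp}-l)$ to show $H=_{\mathsmaller{S(\C H)}}(L+M)+(L^{\perp}\wedge M^{\perp})$, which is exactly what later yields Corollary~\ref{cor: cor7}(i), namely $\overline{L+M}=_{\mathsmaller{S(\C H)}}L+M$ --- information your argument does not produce. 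More importantly, Proposition~\ref{prp: CoQL} is stated without proof both here and in the cited source, and the paper's declared purpose is that its proof of $(\ComplQL_7)$ and $(\ComplQL_8)$ should, via the totality--locatedness correspondence, also serve as a proof of $(\CoQL_7)$ and $(\CoQL_8)$; citing those clauses therefore makes your argument circular relative to the paper's architecture, even if it is formally admissible as a citation of a stated result. The same remark applies to your appeal to the orthomodular law $(\CoQL_8)$ for the first coordinate of $(\ComplQL_8)$, which the paper proves directly by writing $m^1=_H l+l^{\perp}$ with $l^{\perp}\in M^1\wedge L^{\perp}$. On the other hand, your treatment of the second coordinate of $(\ComplQL_8)$ is arguably an improvement: by observing $L^1\perp M^0$ and applying Theorem~\ref{thm: crucial}(i) to the pair $(L^1,M^0)$ you obtain the closedness of $L^1+M^0$ for free and finish with a one-line orthogonality argument, whereas the paper re-runs the $\MANC$-based Cauchy-sequence argument from scratch.
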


\begin{proof}
$(\ComplQL_0)$: By definition $\B 0 \neq_{\mathsmaller{\B S_{\tota}(\C H)}} \B 1 :\TOT P_{\B 0} \neq_{\mathsmaller{\M P(\C H)}} P_{\B 1}$, which holds by the strictness of $\C H$.\\
$(\ComplQL_1)-(\ComplQL_6)$ are straightforward to show. \\
$(\ComplQL_7)$: If $\B L, B M \in \B S_{\tota}(\C H)$, then by Corollary~\ref{cor: corthm} we have that $\B L =_{\B S(\C H)} (L, L^{\perp})$ and $\B M =_{\B S(\C H)} (M, M^{\perp})$. By the hypothesis $\B L \leq (- \B M)$ we get $L \leq M^{\perp}$ and $M \leq L^{\perp}$. Since $(L^{\perp} \wedge M^{\perp}) \perp (L \vee M)$ by Theorem~\ref{thm: crucial}(i) we get 
$$(L \vee M) \vee (L^{\perp} \wedge M^{\perp}) =_{\mathsmaller{S(\C H)}} (L \vee M) \oplus (L^{\perp} \wedge M^{\perp}) =_{\mathsmaller{S(\C H)}} \overline{L + M} \oplus (L^{\perp} \wedge M^{\perp}) \leq H.$$
Let $x \in H$. Since $H =_{\mathsmaller{S (\C H)}} L + L^{\perp} =_{\mathsmaller{S (\C H)}} M + M^{\perp}$, there are $l \in L, l^{\mathsmaller{\perp}} \in L^{\perp}$ and $m \in M, m^{\mathsmaller{\perp}} \in M^{\perp}$, such that 
$x =_H l + l^{\mathsmaller{\perp}} =_H m + m^{\mathsmaller{\perp}}$. Hence,
$L^{\perp} \ni l^{\mathsmaller{\perp}} - m =_H m^{\mathsmaller{\perp}} - l \in M^{\perp},$
i.e., $$l^{\mathsmaller{\perp}} - m =_H m^{\mathsmaller{\perp}} - l \in L^{\perp} \wedge M^{\perp}.$$
Moroever, we have that
$$x  =_H m + m^{\mathsmaller{\perp}} =_H (m + l) + (m^{\mathsmaller{\perp}} - l) =_H (l + m) + (m^{\mathsmaller{\perp}} - l) \in (L + M) + (L^{\perp} \wedge M^{\perp}),$$
this,
$$H \leq (L + M) + (L^{\perp} \wedge M^{\perp}) \leq \overline{L + M} \oplus (L^{\perp} \wedge M^{\perp}) \leq H,$$
and hence, $H =_{\mathsmaller{S (\C H)}} (L + M) + (L^{\perp} \wedge M^{\perp})$. Consequently, 
$\B L \vee \B M \in \B S_{\tota}(\C H)$.\\
$(\ComplQL_8)$: Let $\B L =_{\B S (\C H)} (L, L^{\perp})$ with $H =_{S (\C H)} L \oplus L^{\perp}$, $L \leq M^1$, and $M^0 \leq L^{\perp}$. We show that 
\begin{align*}
\B M & =_{\mathsmaller{\B S(\C H)}}  \B L \vee (\B M - \B L)\\
& =_{\mathsmaller{\B S(\C H)}}  (L, L^{\perp}) \vee [(M^1, M^0) \wedge (L^{\perp}, L)]\\
& =_{\mathsmaller{\B S(\C H)}}  (L, L^{\perp}) \vee (M^1 \wedge L^{\perp}, \overline{M^0 + L})\\
& =_{\mathsmaller{\B S(\C H)}}  (L \vee (M^1 \wedge L^{\perp}), L^{\perp} \wedge \overline{M^0 + L}).
\end{align*}
First, we show that $M^1 =_{\mathsmaller{S (\C H)}} \overline{L + (M^1 \wedge L^{\perp})}$. Clearly, $\overline{L + (M^1 \wedge L^{\perp})} \leq M^1$. For the converse inequality, if $m^1 \in M^1$, there are $l \in L$ and $l^{\mathsmaller{\perp}} \in L^{\perp}$, such that $m^1 =_H l + l^{\mathsmaller{\perp}}$.
Hence, $m^1 - l =_H l^{\mathsmaller{\perp}}$, i.e., $l^{\mathsmaller{\perp}} \in M^1 \wedge L^{\perp}$.
Consequently, $m^1 \in L + (M^1 \wedge L^{\perp})$, and hence $M^1 \leq  L + (M^1 \wedge L^{\perp})$.
Overall, we get
$$M^1 =_{\mathsmaller{S (\C H)}} L + (M^1 \wedge L^{\perp}) =_{\mathsmaller{S (\C H)}} \overline{L + (M^1 \wedge L^{\perp})}.$$
Next, we show that $M^0 =_{\mathsmaller{S (\C H)}} L^{\perp} \wedge \overline{M^0 + L}$. Clearly, $M^0 \leq L^{\perp} \wedge \overline{M^0 + L}$. For the converse inequality, if $x \in L^{\perp} \wedge \overline{M^0 + L}$, then $x \in L^{\perp}$ and $x \in \overline{M^0 + L}$. Let $(x_n)_{n \in \Nat} \subseteq M^0 + L$, such that $(x_n) \stackrel{n} \longrightarrow x$. Since $M^0 \leq L^{\perp}$, there are unique $m_n^0 \in M^0$ and $l_n \in L$, for every $n \in \Nat$, such that $x_n =_H m_n^0 + l_n$.
Since $L \leq M^1$, we get $L \perp M^0$, and working as in the proof of Theorem~\ref{thm: crucial}(i), the induced by $\MANC$ sequences $(l_n)_{n \in \Nat} \subseteq L$ and $(m_n^0)_{n \in \Nat} \subseteq M^0$ are Cauchy sequences, hence, there are $l \in L$ and $m^0 \in M^0$, such that 
$(l_n) \stackrel{n} \longrightarrow l$ and $(m_n^0) \stackrel{n} \longrightarrow m^0$. By the uniqueness of the limit we have that $x =_H m^0 + l$, i.e., $L^{\perp} \ni x - m^0 =_H l \in L$, and hence, $x - m^0 =_H l =_H 0$. As $x =_H m^0 \in M^0$, we get $x \in M^0$. 
\end{proof}

%
%
%
%
Next, we show an important corollary of $(\ComplQL_7)$ by inspection of its
proof. The equality in the following cases (ii) and (iii) is the equality of the set of functions from $H$ to $H$.

\begin{corollary}\label{cor: cor7}
If $\B L, \B M \in \B S_{\tota}(\C H)$ and $\B L \leq (- \B M)$, then the following hold$:$\\[1mm]
\normalfont (i)
\itshape $\overline{L + M} =_{\mathsmaller{S (\C H)}} L + M$.\\[1mm]
\normalfont (ii)
\itshape $P_{\B L} \circ P_{\B M} =_{\mathsmaller{\M P(\C H)}} 0$.\\[1mm]
\normalfont (iii)
\itshape $P_{\B L \vee \B M} =_{\mathsmaller{\M P(\C H)}} P_{\B L} + P_{\B M}$.
\end{corollary}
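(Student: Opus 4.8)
The plan is to extract everything from the proof of $(\ComplQL_7)$, where we already showed that under the hypotheses $\B L, \B M \in \B S_{\tota}(\C H)$ with $\B L \leq (-\B M)$ we have $L \leq M^\perp$, $M \leq L^\perp$, and crucially $L^\perp \wedge M^\perp \perp L \vee M$ together with the chain of inclusions yielding $H =_{\mathsmaller{S(\C H)}} (L+M) + (L^\perp \wedge M^\perp)$. First I would prove (i): since $(L \vee M) \vee (L^\perp \wedge M^\perp) =_{\mathsmaller{S(\C H)}} (L \vee M) \oplus (L^\perp \wedge M^\perp) = \overline{L+M} \oplus (L^\perp \wedge M^\perp)$, and we also showed $H \leq (L+M) + (L^\perp \wedge M^\perp)$, the sum $(L+M) + (L^\perp \wedge M^\perp)$ is already all of $H$, hence closed; since this sum is a direct sum (the summands being orthogonal, as $L^\perp \wedge M^\perp \perp L$ and $\perp M$), a Cauchy sequence in $L+M$ with limit in $H$ must have that limit lie in $L+M$ (decompose, use Pythagoras on the two orthogonal components, conclude the $L^\perp \wedge M^\perp$-component is $0$). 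So $\overline{L+M} =_{\mathsmaller{S(\C H)}} L+M$.

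For (ii), note $\B L \vee \B M \in \B S_{\tota}(\C H)$ by $(\ComplQL_7)$, so all three partial projections $P_{\B L}, P_{\B M}, P_{\B L \vee \B M}$ are total, i.e. defined on all of $H$; hence the equality of functions from $H$ to $H$ makes sense. Since $L \perp M$ (from $M \leq L^\perp$), for any $x \in H$ write $x = m + m^\perp$ with $m \in M$, $m^\perp \in M^\perp$ (totality of $\B M$); then $P_{\B M}(x) = m \in M \subseteq L^\perp$, so $P_{\B L}(P_{\B M}(x)) = P_{\B L}(m)$; but $m \in L^\perp =_{\mathsmaller{S(\C H)}} L^0$ (using $\B L =_{\mathsmaller{\B S(\C H)}} (L, L^\perp)$ by Corollary~\ref{cor: corthm}), and $P_{\B L}$ sends the $L^0$-component to $0$, so $P_{\B L}(P_{\B M}(x)) =_H 0$ for every $x$. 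This gives $P_{\B L} \circ P_{\B M} =_{\mathsmaller{\M P(\C H)}} 0$.

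For (iii), fix $x \in H$. By (i) and the totality of $\B L \vee \B M$, decompose $x$ uniquely as $x =_H (\ell + m) + w$ with $\ell \in L$, $m \in M$, $w \in L^\perp \wedge M^\perp$, using that $H = (L+M) \oplus (L^\perp \wedge M^\perp)$; then $P_{\B L \vee \B M}(x) =_H \ell + m$ since $\ell + m \in L \vee M = L + M$ and $w$ lies in the orthogonal complement $(L \vee M)^0$. On the other hand, to compute $P_{\B L}(x)$: writing $x =_H \ell + (m + w)$ with $\ell \in L$ and $m + w \in L^\perp$ (since $m \in M \subseteq L^\perp$ and $w \in L^\perp$), totality of $\B L$ and the $\L = (L, L^\perp)$ decomposition give $P_{\B L}(x) =_H \ell$; symmetrically $P_{\B M}(x) =_H m$. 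Hence $(P_{\B L} + P_{\B M})(x) =_H \ell + m =_H P_{\B L \vee \B M}(x)$, and since $x$ is arbitrary, $P_{\B L \vee \B M} =_{\mathsmaller{\M P(\C H)}} P_{\B L} + P_{\B M}$.

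The main obstacle I anticipate is the bookkeeping of uniqueness of decompositions: one must be careful that the three-fold decomposition $x = \ell + m + w$ of (iii) is genuinely well-defined (the $\ell + m$ part is unique as an element of $L+M$, but $\ell$ and $m$ individually need not be unless $L \cap M = \{0\}$, which does hold here since $L \perp M$), and that the two-fold decompositions used to identify $P_{\B L}(x)$ and $P_{\B M}(x)$ are consistent with it. All of this is forced by orthogonality and Theorem~\ref{thm: crucial}(i), so no genuinely new idea is needed beyond a clean inspection of the $(\ComplQL_7)$ argument; the proof is essentially a matter of organizing the already-established facts. One should also remark explicitly at the outset that (ii) and (iii) only make sense because $\B L$, $\B M$, and $\B L \vee \B M$ are all total (so the partial projections are genuine total operators on $H$), which is exactly what the hypotheses plus $(\ComplQL_7)$ provide.
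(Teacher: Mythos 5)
Your proposal is correct and follows essentially the same route as the paper: all three parts are read off from the decomposition $H =_{\mathsmaller{S(\C H)}} (L+M) + (L^{\perp}\wedge M^{\perp})$ extracted from the proof of $(\ComplQL_7)$, with (ii) and (iii) computed exactly as in the paper via the orthogonal splittings $x =_H l + l^{\perp} =_H m + m^{\perp}$. The only cosmetic difference is in (i), where the paper concludes more directly by noting that $x - (l+m)$ lies in both $\overline{L+M}$ and the orthogonal subspace $L^{\perp}\wedge M^{\perp}$, hence is perpendicular to itself and so zero, rather than running a Pythagoras argument along a Cauchy sequence.
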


\begin{proof}
(i) By the proof of $(\ComplQL_7)$ we get the additional information $H =_{\mathsmaller{S (\C H)}} (L + M) + (L^{\perp} \wedge M^{\perp}) =_{\mathsmaller{S (\C H)}} \overline{L + M} \oplus (L^{\perp} \wedge M^{\perp})$, and hence $\overline{L + M} =_{\mathsmaller{S (\C H)}} L + M$.
To show this, let $x \in \overline{L + M}$. By the above decomposition of $H$ there are $l \in L, m \in M$ and $k^{\mathsmaller{\perp}} \in L^{\perp} \wedge M^{\perp}$, such that $x =_H (l + m) + k^{\mathsmaller{\perp}}$. Hence $ \overline{L + M} \ni x - (l + m) =_H k^{\mathsmaller{\perp}} \in L^{\perp} \wedge M^{\perp}$. Consequently, $x =_H l + m$ and $k^{\mathsmaller{\perp}} =_H 0$, i.e., 
$\overline{L + M} \leq L + M$. \\
(ii) By hypothesis $\dom(P_{\B L}) =_{\mathsmaller{S (\C H)}} \dom(\B M) =_{\mathsmaller{S (\C H)}} 1$, and hence $\dom(P_{\B L} \circ P_{\B M}) =_{\mathsmaller{S (\C H)}} 1$ too. If $x =_H l + l^{\mathsmaller{\perp}} =_H m + m^{\mathsmaller{\perp}}$, then $P_{\B L}(x) =_H l$ and $P_{\B M}(x) =_H m \in M \leq L^{\perp}$. Hence, $P_{\B L}(P_{\B M}(x)) =_H 0$.\\
(iii) By $(\ComplQL_7)$ $\dom(P_{\B L \vee \B M}) =_{\mathsmaller{S (\C H)}} 1 =_{\mathsmaller{S (\C H)}} 1 \wedge 1 =_{\mathsmaller{S (\C H)}} \dom(P_{\B L} + P_{\B M})$. 
By the equalities $H =_{\mathsmaller{S (\C H)}} (L + M) + (L^{\perp} \wedge M^{\perp})$ and 
$x =_H (l + m) + (m^{\mathsmaller{\perp}} - l)$ we get $P_{\B L \vee \B M}(x) =_H l + m =_H P_{\B L}(x) + P_{\B M}(x)$.
\end{proof}

Our constructive proof of Corollary~\ref{cor: cor7}(iii) does not depend on the classical equality $(L \wedge M)^{\perp} =_{\mathsmaller{S (\C H)}} L^{\perp} \vee M^{\perp}$, as the classical proof of Proposition 2.5.3 in~\cite{KR83}, p.~111, which is used in the proof of the corresponding Corollary 2.5.4 in~\cite{KR83}, p.~112. Because of this, our proof is interesting also from a classical point of view.
The proof of $(\ComplQL_8)$, which also requires $\MANC$, provides the additional information $L + (M^1 \wedge L^{\perp}) =_{\mathsmaller{S (\C H)}} \overline{L + (M^1 \wedge L^{\perp})}$. 

The following properties of the complemented quantum lattice are derived as the orthocomplemented, two dimensional analague to Proposition~\ref{prp: corCoQL}, and are straightforward to show. 

\begin{proposition}\label{prp: corComplQL}
	If $\B L, \B M \in \B S(\C H)$, then the following hold$:$\\[1mm]
	\normalfont (i)
	\itshape If $\B L \in \B S_{\tota}(\C H)$, $\B L \leq \B M$, and  $\B M - \B L =_{\mathsmaller{\B S(\C H)}} \B 0$, then $\B L =_{\mathsmaller{\B S(\C H)}} \B M$.\\[1mm]
	\normalfont (ii)
	\itshape $\B S(\C H)$ satisfies $(\ComplQL_1)$-$(\ComplQL_7)$, but we cannot accept in $\BISH$ that it satisfies $(\ComplQL_8)$.\\[1mm]
	\normalfont (iii)
	\itshape $\B 0 =_{\mathsmaller{\B S(\C H)}} - \B 1$ and $\B 1 =_{\mathsmaller{\B S(\C H)}} - \B 0$.\\[1mm] 
	\normalfont (iv)
	\itshape $- (\B L \vee \B M) =_{\mathsmaller{\B S(\C H)}} (- \B L) \wedge (- \B M)$.\\[1mm]
	\normalfont (v)
	\itshape $- (\B L \wedge \B M) =_{\mathsmaller{\B S(\C H)}} (- \B L) \vee (- \B M)$.\\[1mm]
	\normalfont (vi)
	\itshape If $\B L, \B M \in \B S_{\tota}(\C H)$ and $\B L \leq \B M$, then $\B M - \B L \in S_{\tota}(\C H)$.\\[1mm]
	\normalfont (vii)
	\itshape If $\B L, \B M \in S_{\tota}(\C H)$ and $\B L \leq (- \B M)$, then $\B M =_{\mathsmaller{\B S(\C H)}} (\B L \vee \B M) - \B L$.\\[1mm]
	\normalfont (viii)
	\itshape If $\B L, \B M \in \B S_{\tota}(\C H)$ and $(- \B L) =_{\mathsmaller{\B S(\C H)}} (- \B M)$, then $\B L =_{\mathsmaller{\B S(\C H)}} \B M$.
\end{proposition}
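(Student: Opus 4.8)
The plan is to derive every item from Proposition~\ref{prp: ComplQL}, Corollary~\ref{cor: corthm}, the one-dimensional statements of Proposition~\ref{prp: corCoQL}, and the single structural fact that negation on $\B S(\C H)$ is the swap $-\B L := (L^0, L^1)$, which collapses the De Morgan and double-negation identities into bookkeeping. Concretely, item (i) is $(\ComplQL_8)$ applied to $\B L \le \B M$, giving $\B M =_{\mathsmaller{\B S(\C H)}} \B L \vee (\B M - \B L) =_{\mathsmaller{\B S(\C H)}} \B L \vee \B 0 =_{\mathsmaller{\B S(\C H)}} \B L$ via $(\ComplQL_2)$; item (iii) is immediate since $-\B 1 = (\{0\}, H) = \B 0$ and dually; items (iv) and (v) hold because $-(\B L \vee \B M)$ and $(-\B L) \wedge (-\B M)$ both unfold to $(L^0 \wedge M^0,\, L^1 \vee M^1)$, and dually $-(\B L \wedge \B M) = (-\B L) \vee (-\B M) = (L^0 \vee M^0,\, L^1 \wedge M^1)$; and item (viii) follows by applying $-$ to the hypothesis and invoking $(\ComplQL_4)$ — so, unlike Proposition~\ref{prp: corCoQL}(x), the totality hypothesis is not even needed. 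I would flag that (iv), (v), (viii) are exactly the places where the complemented approach is strictly stronger than the one-dimensional one, where the analogous items degrade to one-sided inequalities or require locatedness.

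For item (ii) the positive half just re-cites Proposition~\ref{prp: ComplQL}. For the impossibility of the unrestricted $(\ComplQL_8)$ I would specialise $\B M := \B 1$: then $\B L \le \B 1$ by $(\ComplQL_2)$ and $\B 1 - \B L := \B 1 \wedge (-\B L) = (H \wedge L^0,\, \{0\} \vee L^1) = -\B L$, so $(\ComplQL_8)$ applied to an arbitrary $\B L$ would give $\B 1 =_{\mathsmaller{\B S(\C H)}} \B L \vee (-\B L) = (\dom(\B L), \{0\})$ by $(\ComplQL_5)$, forcing $\dom(\B L) =_{\mathsmaller{S(\C H)}} H$, i.e.\ $\B L$ total. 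Taking $\B L := (L, L^{\perp})$ for an arbitrary closed $L \le \C H$, totality of $\B L$ amounts to the locatedness of $L$ by Remark~\ref{rem: orthocs1}(ii) and Theorem~\ref{thm: loctot}; since not every closed subspace of a Hilbert space is located (the Bridges--Svozil Brouwerian example behind Proposition~\ref{prp: Brouwerian}), the unrestricted $(\ComplQL_8)$ cannot be accepted in $\BISH$.

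The two items needing real work are (vi) and (vii), and both rest on the decomposition argument already used after Remark~\ref{rem: meet}. For (vi), use Corollary~\ref{cor: corthm}(ii) to write $\B L =_{\mathsmaller{\B S(\C H)}} (L, L^{\perp})$, $\B M =_{\mathsmaller{\B S(\C H)}} (M, M^{\perp})$ with $L, M$ located and $L \le M$; then $\B M - \B L = (M \wedge L^{\perp},\, M^{\perp} \vee L)$, which lies in $\B S(\C H)$ since $M \wedge L^{\perp} \perp M^{\perp}$ and $M \wedge L^{\perp} \perp L$ give $M \wedge L^{\perp} \perp M^{\perp} \vee L$ by Remark~\ref{rem: perp1}. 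For totality, given $x \in H$ write $x =_H m + m^{\perp}$ by totality of $\B M$, then $m =_H l + l^{\perp}$ by totality of $\B L$ with $l \in L \le M$, so $l^{\perp} =_H m - l \in M \wedge L^{\perp}$ and $x =_H l + l^{\perp} + m^{\perp} \in L + (M \wedge L^{\perp}) + M^{\perp} \le \dom(\B M - \B L)$. (Alternatively: $M \wedge L^{\perp}$ is located by Proposition~\ref{prp: corCoQL}(viii) and $M^{\perp} \vee L$ by $(\CoQL_7)$, and conclude through Remark~\ref{rem: orthocs1}(ii).) For (vii): $\B L \perp \B M$ unfolds to $L^1 \le M^0$ and $M^1 \le L^0$; $(\ComplQL_7)$ makes $\B L \vee \B M$ total, and $(\B L \vee \B M) - \B L = \big((L^1 \vee M^1) \wedge L^0,\, (L^0 \wedge M^0) \vee L^1\big)$. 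The first component equals $M^1$: since $L^1 \perp M^1$ the sum $L^1 + M^1$ is closed by Theorem~\ref{thm: crucial}(i), and if $y =_H l^1 + m^1 \in L^0$ with $m^1 \in M^1 \le L^0$, then $l^1 =_H y - m^1 \in L^1 \cap L^0 = \{0\}$, so $y =_H m^1$. The second equals $M^0$: $\le M^0$ is clear, and for $m^0 \in M^0$, totality of $\B L$ gives $m^0 =_H l^1 + l^0$ with $l^0 =_H m^0 - l^1 \in M^0$ (as $L^1 \le M^0$), hence $l^0 \in L^0 \wedge M^0$ and $m^0 \in L^1 + (L^0 \wedge M^0)$.

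The main obstacle is not a single hard lemma but the combinatorics of tracking components through $\wedge$, $\vee$ and $-$, together with the two totality verifications in (vi) and (vii); once the swap description of $-$ is exploited and Corollary~\ref{cor: corthm}(ii) has reduced total orthocomplemented subspaces to ordinary located ones, each remaining step is forced, and the only genuinely delicate point — that the orthomodular law fails without the totality hypothesis — is handled by the $\B M := \B 1$ specialisation above.
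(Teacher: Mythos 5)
Your proof is correct, and it supplies details that the paper deliberately omits: the paper's entire ``proof'' of this proposition is the remark that the items are the two-dimensional analogues of Proposition~\ref{prp: corCoQL} and are straightforward to show. Your strategy --- reduce (iii)--(v) and (viii) to bookkeeping with the swap $-\B L := (L^0, L^1)$, reduce (i) to $(\ComplQL_8)$ plus $\B L \vee \B 0 =_{\mathsmaller{\B S(\C H)}} \B L$, and handle (vi), (vii) by first normalising total elements to the form $(L, L^{\perp})$ via Corollary~\ref{cor: corthm}(ii) and then decomposing an arbitrary $x \in H$ through the two direct sums --- is exactly the route the paper's remark points to, and your Brouwerian argument for (ii) (specialising $\B M := \B 1$ so that unrestricted $(\ComplQL_8)$ forces every $\B L$ to be total, hence every closed subspace to be located) is the correct way to transfer the failure of the unrestricted orthomodular law to the complemented setting.

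One small caveat: the parenthetical alternative you offer for (vi) is not actually complete as stated. Locatedness of $M \wedge L^{\perp}$ and of $M^{\perp} \vee L$ does not by itself let you invoke Remark~\ref{rem: orthocs1}(ii), since that remark produces the total pair $\big(M \wedge L^{\perp}, (M \wedge L^{\perp})^{\perp}\big)$, and you would still have to identify $(M \wedge L^{\perp})^{\perp}$ with $M^{\perp} \vee L$ --- a De Morgan identity of the type that $(\BC_3)$ warns cannot be taken for granted constructively. Your main argument for (vi), the explicit decomposition $x =_H l + l^{\perp} + m^{\perp}$ with $l^{\perp} \in M \wedge L^{\perp}$, does not need this and is the one to keep.
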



The totality of the meet $\B L \wedge \B M$ is established as follows.

\begin{corollary}\label{cor: wedgetotal}
	$(\ComplQL_{7'})$ If $\B L, \B M \in \B S_{\tota}(\C H)$ and $(- \B M) \leq \B L$, then $\B L \wedge \B M \in \B S_{\tota}(\C H)$.
\end{corollary}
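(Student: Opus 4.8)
The plan is to derive $(\ComplQL_{7'})$ as the De Morgan dual of $(\ComplQL_7)$ from Proposition~\ref{prp: ComplQL}, using the equality $-(\B L \vee \B M) =_{\mathsmaller{\B S(\C H)}} (-\B L) \wedge (-\B M)$ of Proposition~\ref{prp: corComplQL}(iv) together with $(\ComplQL_3)$, $(\ComplQL_4)$ and $(\ComplQL_6)$. The one auxiliary fact I would isolate first is that $\B L \in \B S_{\tota}(\C H) \TOT -\B L \in \B S_{\tota}(\C H)$: the implication from left to right is exactly $(\ComplQL_6)$, and conversely, if $-\B L \in \B S_{\tota}(\C H)$, then $-(-\B L) \in \B S_{\tota}(\C H)$ by $(\ComplQL_6)$ applied to $-\B L$, and $-(-\B L) =_{\mathsmaller{\B S(\C H)}} \B L$ by $(\ComplQL_4)$, so $\B L \in \B S_{\tota}(\C H)$ by extensionality of totality (immediate, since being total is a condition on $\dom(\B L)$ and $\dom$ respects $=_{\mathsmaller{\B S(\C H)}}$).

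Now suppose $\B L, \B M \in \B S_{\tota}(\C H)$ and $(-\B M) \leq \B L$. Set $\B K := -\B L$ and $\B N := -\B M$, which lie in $\B S_{\tota}(\C H)$ by the auxiliary fact. Applying $(\ComplQL_3)$ to $(-\B M) \leq \B L$ gives $-\B L \leq -(-\B M)$, that is $\B K \leq -\B N$. Hence $(\ComplQL_7)$ applies to $\B K$ and $\B N$ and yields $\B K \vee \B N \in \B S_{\tota}(\C H)$. By the auxiliary fact, $-(\B K \vee \B N) \in \B S_{\tota}(\C H)$, and Proposition~\ref{prp: corComplQL}(iv) together with $(\ComplQL_4)$ gives $-(\B K \vee \B N) =_{\mathsmaller{\B S(\C H)}} (-\B K) \wedge (-\B N) =_{\mathsmaller{\B S(\C H)}} \B L \wedge \B M$. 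Therefore $\B L \wedge \B M \in \B S_{\tota}(\C H)$, which is $(\ComplQL_{7'})$.

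I do not expect a genuine obstacle here; the only care needed is the bookkeeping with the swapping operation $-$ and the direction of $(\ComplQL_3)$. As an independent check (and in case a self-contained argument is preferred) one can mimic the proof of $(\ComplQL_7)$ directly: since $\B L$ and $\B M$ are total, Corollary~\ref{cor: corthm}(ii) lets us write $\B L =_{\mathsmaller{\B S(\C H)}} (L, L^{\perp})$ and $\B M =_{\mathsmaller{\B S(\C H)}} (M, M^{\perp})$, and the hypothesis $(-\B M) \leq \B L$ becomes $M^{\perp} \leq L$ and $L^{\perp} \leq M$; then $\dom(\B L \wedge \B M) =_{\mathsmaller{S(\C H)}} (L \wedge M) \oplus \overline{L^{\perp} + M^{\perp}}$, and running the decomposition-of-$H$ computation from the proof of $(\ComplQL_7)$ with $L^{\perp}, M^{\perp}$ in the roles of $L, M$ shows this equals $H$. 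I would include the short duality proof in the text and, if desired, relegate the direct computation to a remark.
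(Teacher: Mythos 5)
Your proof is correct and follows essentially the same route as the paper's: reduce $(\ComplQL_{7'})$ to $(\ComplQL_7)$ by applying $(\ComplQL_3)$ to the hypothesis, invoking $(\ComplQL_7)$ for $-\B L$ and $-\B M$, and then transferring totality back through the De Morgan identity of Proposition~\ref{prp: corComplQL} together with $(\ComplQL_4)$ and $(\ComplQL_6)$. The only cosmetic difference is that the paper uses Proposition~\ref{prp: corComplQL}(v) on $\B L \wedge \B M$ where you use (iv) on $(-\B L) \vee (-\B M)$, which is the same duality read in the opposite direction.
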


\begin{proof}
By $(\ComplQL_6)$ and Proposition~\ref{prp: corComplQL}(v) it suffices to show that $- (\B L \wedge \B M) =_{\mathsmaller{S(\C H)}} (-\B L) \vee (-\B M) \in \B S_{\tota}(\C H)$. By hypothesis $(- \B M) \leq \B L
\TOT (-\B L) \leq -(-\B M) =_{\mathsmaller{S(\C H)}} \B M$, and by $(\ComplQL_7)$ we get $(-\B L) \vee (-\B M) \in \B S_{\tota}(\C H)$.
\end{proof}

The following properties are straightforward to show.

\begin{proposition}\label{prp: swapalg1}
	If $\B L \leq \C H$, then the following hold:\\[1mm]	
	\normalfont (i)
	\itshape $-0_{\B L} =_{\mathsmaller{\B S(\C H)}} 1_{\B L}$.\\[1mm]
	\normalfont (ii)
	\itshape  $0_{\B L} =_{\mathsmaller{\B S(\C H)}} 0_{\B L}$.\\[1mm]
	\normalfont (iii)
	\itshape $0_{\B 0} =_{\mathsmaller{\B S(\C H)}} \B 0$ and $1_{\B 1} =_{\mathsmaller{\B S(\C H)}} \B 1$.\\[1mm]
	\normalfont (iv)
	\itshape $0_{0_{\B L}} =_{\mathsmaller{\B S(\C H)}} 0_{\B L} =_{\mathsmaller{\B S(\C H)}} 0_{1_{\B L}}$. \\[1mm]
	\normalfont (v)
	\itshape $\B 0 \vee \B L =_{\mathsmaller{\B S(\C H)}} {\B L}$. \\[1mm]
	\normalfont (vi)
	\itshape $0_{\B L} \wedge \B L =_{\mathsmaller{\B S(\C H)}} 0_{\B L}$. \\[1mm]
	\normalfont (vii)
	\itshape $\B 1 \wedge {\B L}  =_{\mathsmaller{\B S(\C H)}} {\B L}  =_{\mathsmaller{\B S(\C H)}} 1_{\B L} \wedge {\B L}$.\\[1mm]
	\normalfont (viii)
	\itshape $1_{\B L} \vee {\B L}=_{\mathsmaller{\B S(\C H)}} 1_{\B L}$.
\end{proposition}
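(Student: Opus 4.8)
The plan is to verify each of (i)--(viii) by unwinding the definitions of $\wedge,\vee,-$ on $\B S(\C H)$ from Definition~\ref{def: operations2}, together with the explicit descriptions $0_{\B L} := (0,\dom(\B L))$ and $1_{\B L} := (\dom(\B L),0)$ recorded in $(\ComplQL_5)$, and then reducing everything to a handful of elementary identities in the lattice $S(\C H)$: namely $0 \wedge K =_{\mathsmaller{S(\C H)}} 0$, $1 \wedge K =_{\mathsmaller{S(\C H)}} K$, $0 \vee K =_{\mathsmaller{S(\C H)}} K$, $1 \vee K =_{\mathsmaller{S(\C H)}} 1$, and the absorption laws $K \leq L \To K \vee L =_{\mathsmaller{S(\C H)}} L$ and $K \leq L \To K \wedge L =_{\mathsmaller{S(\C H)}} K$; all of these hold constructively, since $S(\C H)$ satisfies $(\CoQL_1)$ and $(\CoQL_2)$ (Proposition~\ref{prp: corCoQL}(ii)). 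The only structural input used repeatedly is $L^1 \leq \dom(\B L)$ and $L^0 \leq \dom(\B L)$, immediate from $\dom(\B L) := L^1 \vee L^0$ and the fact that $\vee$ is the join; in particular Theorem~\ref{thm: crucial}(i) is not needed.

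First I would dispatch (i)--(iv). For (i), $-0_{\B L} = -(0,\dom(\B L)) = (\dom(\B L),0) = 1_{\B L}$, directly from the swap definition of $-$; (ii) is just reflexivity of $=_{\mathsmaller{\B S(\C H)}}$. For (iii), $\dom(\B 0) = \{0\} \vee H =_{\mathsmaller{S(\C H)}} H$ gives $0_{\B 0} = (0,H) =_{\mathsmaller{\B S(\C H)}} \B 0$, and dually $\dom(\B 1) = H \vee \{0\} =_{\mathsmaller{S(\C H)}} H$ gives $1_{\B 1} =_{\mathsmaller{\B S(\C H)}} \B 1$. For (iv), $\dom(0_{\B L}) = 0 \vee \dom(\B L) =_{\mathsmaller{S(\C H)}} \dom(\B L)$ and $\dom(1_{\B L}) = \dom(\B L) \vee 0 =_{\mathsmaller{S(\C H)}} \dom(\B L)$, so both $0_{0_{\B L}}$ and $0_{1_{\B L}}$ are $(0,\dom(\B L)) = 0_{\B L}$.

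Then I would handle (v)--(viii) by a componentwise computation. For (v), $\B 0 \vee \B L = (\{0\} \vee L^1,\, H \wedge L^0) =_{\mathsmaller{\B S(\C H)}} (L^1,L^0)$. For (vii), $\B 1 \wedge \B L = (H \wedge L^1,\, \{0\} \vee L^0) =_{\mathsmaller{\B S(\C H)}} \B L$, while $1_{\B L} \wedge \B L = (\dom(\B L) \wedge L^1,\, 0 \vee L^0) =_{\mathsmaller{\B S(\C H)}} (L^1,L^0)$ using $L^1 \leq \dom(\B L)$. For (vi), $0_{\B L} \wedge \B L = (0 \wedge L^1,\, \dom(\B L) \vee L^0) =_{\mathsmaller{\B S(\C H)}} (0,\dom(\B L)) = 0_{\B L}$ using $L^0 \leq \dom(\B L)$. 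For (viii), $1_{\B L} \vee \B L = (\dom(\B L) \vee L^1,\, 0 \wedge L^0) =_{\mathsmaller{\B S(\C H)}} (\dom(\B L),0) = 1_{\B L}$, again by $L^1 \leq \dom(\B L)$.

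Since every step is a direct unfolding followed by one of the listed lattice identities, there is no genuine obstacle. The only point requiring a moment's care is to keep the two components in the right order, recalling that in $\B L \leq \B M$ the inequality on the second component is reversed, and to use that $\dom(\B L)$ absorbs both $L^1$ and $L^0$ under join. One could alternatively phrase (v)--(viii) as instances of generic identities valid in any complemented quantum lattice once $(\ComplQL_1)$--$(\ComplQL_2)$ are in hand, but the componentwise verification is shorter.
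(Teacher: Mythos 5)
Your proof is correct and is exactly the routine componentwise verification the paper has in mind: the paper states only that these properties ``are straightforward to show,'' and your unfolding of Definition~\ref{def: operations2} together with $0_{\B L} := (0,\dom(\B L))$, $1_{\B L} := (\dom(\B L),0)$ and the absorptions $L^1, L^0 \leq \dom(\B L)$ supplies precisely that argument. No gaps; all eight items check out.
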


Distributivity and modularity in $\B S(\C H)$ depends on the distributivity and modularity in $S(\C H)$, and hence they do not hold, in general.
As we remark next, condition $\B L \leq (- \B M)$ in $\ComplQL_7$ and in Corollary~\ref{cor: cor7} is the two-dimensional version of the one-dimensional relation $L \perp M$ on $S(\C H)$ and  the Hilbert 
space-analogue to the disjointness condition between complemented subsets
$$\B A \Disj \B B :\TOT \B A \subseteq (- \B B).$$
The proof of the following remark is straightforward.

\begin{remark}\label{rem: perp2}If $\B L, \B M, \B N \leq \C H$ and $(\B L_i)_{i \in I}$ is an $I$-family
	in $\B S(\C H)$, then the following hold$:$\\[1mm]
	\normalfont (i)
	\itshape If $\B L =_{\mathsmaller{\B S(\C H)}} (L, L^{\perp})$ and $\B M =_{\mathsmaller{\B S(\C H)}} (M, M^{\perp})$, then $\B L \perp \B M \TOT L \perp M$.\\[1mm]
	\normalfont (ii)
	\itshape If $\B L \perp \B M$, then $\B M \perp \B L$.\\[1mm]
	\normalfont (iii)
	\itshape If $\B L \perp \B M$ and $\B N \leq \B L$, then $\B N \perp \B M$.\\[1mm]
	\normalfont (iv) $\B L \perp \bigvee_{i \in I} \B L_i \TOT \forall_{i \in I}\big(\B L^1 \perp \B L_i)$.
\end{remark}


Complemented quantum logic can be formulated constructively as follows.

\begin{definition}\label{def: ComplQL}
	A complemented quantum lattice is a structure $(\B {\C S}, \leq, -, 0, 1, (0_p)_{p \in \B {\C S}}, (1_p)_{p \in \B {\C S}})$, such that $(\B {\C S}, =_{\mathsmaller{\B {\C S}}}, \neq_{\mathsmaller{\B {\C S}}}, leq)$ is a poset, $0, 1 \in \B {\C S}$, $0_p, 1_p \in \B {\C S}$, for every $p \in \B {\C S}$, and $- \colon \B {\C S} \to \B {\C S}$. If $\B {\C S}_{\tota}$ is the subset of the total elements of $\B {\C S}$, i.e., $p \vee (-p) =_{\B {\C S}} 1$, for every $p \in \B {\C S}_{\tota}$, the following conditions hold$:$\\[1mm]
	$(\ComplQL_0)$ $0, 1 \in  \B {\C S}_{\tota}$ and $0 \neq_{\mathsmaller{\B {\C S}}} 1$.\\[1mm]
	$(\ComplQL_1)$ $p \wedge q$ $(p \vee q)$ is the least upper bound $($greatest lower bound$)$ of $p, q$ relative to $\leq$.\\[1mm]
	$(\ComplQL_2)$ $0 \leq p \leq 1$.\\[1mm]
	$(\ComplQL_3)$ If $p \leq q$, then\footnote{As in classical quantum logic, the converse implication follows by $(\ComplQL_4)$.} $-q \leq -p$.\\[1mm]
	$(\ComplQL_4)$ $p  =_{\mathsmaller{\B {\C S}}} -(-p)$.\\[1mm]
	$(\ComplQL_5)$ $p \wedge (-p) =_{\mathsmaller{\B {\C S}}} 0_p$ and $p \vee (-p) =_{\mathsmaller{\B {\C S}}} 1_p$.\\[1mm]
	$(\ComplQL_6)$ If $t \in \B {\C S}_{\tota}$, then $-t \in \B {\C S}_{\tota}$.\\[1mm]
	$(\ComplQL_7)$ If $t \in \B {\C S}_{\tota}$ and $t \leq (-p)$, then $t \vee p \in \B {\C S}_{\tota}$.\\[1mm]
	$(\ComplQL_8)$ If $t \in \B {\C S}_{\tota}$ and $t \leq p$, then $p =_{\mathsmaller{\B {\C S}}} t \vee (p-t)$, where $p-t := p \wedge (-t)$.\\[1mm]
	A complete complemented quantum lattice is defined in the obvious way.
\end{definition}

Since $1 \in  \B {\C S}_{\tota}$,  we get immediately that $ \B {\C S}_{\tota}$ is inhabited. Clearly, we have the following.

\begin{remark}\label{remL comparison} A complemented quantum lattice satisfies all properties of an intuitionistic quantum lattice, except $(\CoQL_5)$. 
\end{remark}

As swap algebras are the constructive and complemented analogue to Boolean algebras, in analogy to Example~\ref{ex: BA}, we have the following example of a complemented quantum logic.

\begin{example}\label{ex: swapI}
	\normalfont
	Every swap algebra of type $(\ti)$ is a distributive(see~\cite{MWP24}), complemented quantum logic. Since a Boolean algebra is a swap algebra of type $(\ti)$, a Boolean algebra is also a distributive, complemented quantum logic.
\end{example}

Using the bijection between $\B S(\C H)$ and $\M P(\C H)$ in Theorem~\ref{thm: bijection}, the operations on $\B S(\C H)$ induce the corresponding operations on $\M P(\C H)$.

\begin{definition}\label{def: operationsproj} Let the functions $i \colon \B S(\C H) \to \M P(\C H)$ and $j \colon  \M P(\C H) \to \B S(\C H)$, defined in Theorem~\ref{thm: bijection}. The following operations are defined on {$\M P(\C H)$}$:$
	$$1 := I_H,$$
	$$0 := 0_H,$$
	$$P \wedge Q := i\big(j(P) \wedge j(Q)\big),$$
	$$P \vee Q := i\big(j(P) \vee j(Q)\big),$$
	$$\sim{\hspace{-1mm}}P := 1 - P,$$
	$$P \sim Q := P \wedge (\sim{\hspace{-1mm}}Q),$$
	$$P \To Q := (\sim{\hspace{-1mm}}P) \vee Q,$$
	$$P \TOT Q := (P \To Q) \wedge (Q \To P),$$ 
	$$\neg P := P \To 0.$$
	If $(P_k)_{k \in K}$ is a family of partial projections of $\C H$ over the index-set $K$, let
	$$\bigwedge_{k \in K} P_k := i\bigg(\bigwedge_{k \in k}j(P_k)\bigg),$$
	$$\bigvee_{k \in K} P_k := i\bigg(\bigvee_{k \in k}j(P_k)\bigg).$$
	The relation of orthogonality on $\M P (\C H)$ is defined by the rule
	$$P \perp Q :\TOT P \leq (\sim{\hspace{-1mm}} Q) :\TOT j(P) \leq j(\sim{\hspace{-1mm}} Q).$$
	If $L \leq \C H$ is understood as a proposition, its derivability $\vdash P$ is defined as a proof of $P =_{\mathsmaller{\M P(\C H)}} 1$.
\end{definition}

\begin{proposition}\label{prp: projiscomplql}
The structure $(\M P(\C H), \M P_{\tota}(\C H), \wedge, \vee, \sim, 0, 1)$ is a complemented quantum logic.
\end{proposition}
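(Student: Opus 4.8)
The plan is to transport the complemented quantum lattice structure on $\B S(\C H)$, established in Proposition~\ref{prp: ComplQL}, along the bijection $i \colon \B S(\C H) \to \M P(\C H)$ of Theorem~\ref{thm: bijection}, using the fact that the operations on $\M P(\C H)$ in Definition~\ref{def: operationsproj} are \emph{defined} so as to make $i$ and $j$ preserve all the relevant structure. First I would observe that by Theorem~\ref{thm: bijection}(i) the maps $i$ and $j$ are mutually inverse functions, by (iii) they preserve the partial orders, and by (v) they preserve totality; hence $i$ restricts to a bijection $\B S_{\tota}(\C H) \to \M P_{\tota}(\C H)$ that is an order-isomorphism in both directions. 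In particular $\M P_{\tota}(\C H) = \{P \in \M P(\C H) \mid P \vee (\sim P) =_{\mathsmaller{\M P(\C H)}} 1\}$, since $P \vee (\sim P) := i(j(P) \vee (-j(P)))$ and $j(P) \vee (-j(P))$ is total exactly when $\dom(j(P)) =_{\mathsmaller{S(\C H)}} H$, which is the totality of $P$.

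Next I would verify each axiom $(\ComplQL_0)$--$(\ComplQL_8)$ for $(\M P(\C H), \M P_{\tota}(\C H), \wedge, \vee, \sim, 0, 1)$ by pulling it back through $j$ to the corresponding axiom for $\B S(\C H)$. For $(\ComplQL_0)$: $0 := 0_H$ and $1 := I_H$ are total, and $j(0_H) = \B 0$, $j(I_H) = \B 1$, so $0 \neq_{\mathsmaller{\M P(\C H)}} 1$ follows from $\B 0 \neq_{\mathsmaller{\B S_{\tota}(\C H)}} \B 1$ together with the strong extensionality of $i$ in Theorem~\ref{thm: bijection}(ii) (equivalently, directly from $x \colon 0 \neq_{\mathsmaller{\C B(\C H)}} I_H$ for any $x \in H_0$, since $\C H$ is strict). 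For $(\ComplQL_1)$ and $(\ComplQL_{1'})$: since $j$ is an order-isomorphism, $j$ carries greatest lower bounds to greatest lower bounds; as $P \wedge Q := i(j(P) \wedge j(Q))$ and $\wedge$ in $\B S(\C H)$ computes the glb by $(\ComplQL_1)$ of Proposition~\ref{prp: ComplQL}, the element $P \wedge Q$ is the glb of $P,Q$ in $\M P(\C H)$, and dually for $\vee$; the infinitary case is identical. For $(\ComplQL_2)$: $\B 0 \leq j(P) \leq \B 1$ from Proposition~\ref{prp: ComplQL}, apply $i$ and use order-preservation. For $(\ComplQL_3)$ and $(\ComplQL_4)$: $\sim P := 1 - P = i(\B 1 - j(P)) = i(-j(P))$ since $\B 1 - \B L := \B 1 \wedge (-\B L) = -\B L$; then $(\ComplQL_3)$ is Proposition~\ref{prp: ComplQL}$(\ComplQL_3)$ transported, and $(\ComplQL_4)$ is $\sim{\sim}P = i(-(-j(P))) = i(j(P)) = P$ by Proposition~\ref{prp: ComplQL}$(\ComplQL_4)$ and Theorem~\ref{thm: bijection}(i).

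The remaining axioms go the same way. For $(\ComplQL_5)$ I would set $0_P := i(0_{j(P)}) = i((0, \dom(j(P))))$ and $1_P := i((\dom(j(P)), 0))$, so that $P \wedge (\sim P) = i(j(P) \wedge (-j(P))) = i(0_{j(P)}) = 0_P$ and $P \vee (\sim P) = 1_P$ follow verbatim from Proposition~\ref{prp: ComplQL}$(\ComplQL_5)$. For $(\ComplQL_6)$: if $P \in \M P_{\tota}(\C H)$ then $j(P) \in \B S_{\tota}(\C H)$ by Theorem~\ref{thm: bijection}(v), so $-j(P) \in \B S_{\tota}(\C H)$ by Proposition~\ref{prp: ComplQL}$(\ComplQL_6)$, whence $\sim P = i(-j(P)) \in \M P_{\tota}(\C H)$ again by preservation of totality. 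For $(\ComplQL_7)$: from $P \perp Q$, i.e.\ $j(P) \leq -j(Q)$, and $j(P), j(Q) \in \B S_{\tota}(\C H)$, Proposition~\ref{prp: ComplQL}$(\ComplQL_7)$ gives $j(P) \vee j(Q) \in \B S_{\tota}(\C H)$, hence $P \vee Q = i(j(P) \vee j(Q)) \in \M P_{\tota}(\C H)$. For $(\ComplQL_8)$: with $P \in \M P_{\tota}(\C H)$ and $P \leq Q$ we have $j(P) \in \B S_{\tota}(\C H)$ and $j(P) \leq j(Q)$, so $j(Q) =_{\mathsmaller{\B S(\C H)}} j(P) \vee (j(Q) - j(P))$ by Proposition~\ref{prp: ComplQL}$(\ComplQL_8)$, and applying $i$ (a function, hence respecting equality) together with $P - Q := i(j(P) \wedge (-j(Q))) = i(j(P) - j(Q))$ yields $Q =_{\mathsmaller{\M P(\C H)}} P \vee (Q - P)$. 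The main obstacle is essentially bookkeeping: one must check that each derived operation symbol on $\M P(\C H)$ in Definition~\ref{def: operationsproj} unfolds, under $j$, to exactly the matching operation on $\B S(\C H)$ in Definition~\ref{def: operations2} (in particular that $\sim P$ corresponds to $-\B L$, $P - Q$ to $\B L - \B M$, and $0_P, 1_P$ to $0_{\B L}, 1_{\B L}$), so that the transport is literally a change of notation; once this dictionary is fixed, no genuinely new argument beyond Proposition~\ref{prp: ComplQL} and Theorem~\ref{thm: bijection} is required. I expect this is precisely why the authors call the result a proposition whose proof is ``straightforward''.
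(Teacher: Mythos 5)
Your proposal is correct and follows essentially the same route as the paper: the paper's proof also transports the axioms of Proposition~\ref{prp: ComplQL} along the bijection of Theorem~\ref{thm: bijection}, works out only the case of $P \wedge Q$ being the greatest lower bound explicitly (by unfolding $P \wedge Q \leq P$ to $j(P) \wedge j(Q) \leq j(P)$ and using monotonicity of $i$ for the universal property), and leaves the remaining axioms as ``similar''. Your more detailed axiom-by-axiom dictionary, including the identification $\sim P = i(-j(P))$ and the choice $0_P := i(0_{j(P)})$, $1_P := i(1_{j(P)})$, is exactly the bookkeeping the paper suppresses.
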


\begin{proof}
All properties of a complemented quantum lattice follow by the definition of the operations on $\M P(\C H)$, the corresponding properties of $\B S(\C H)$, and the properties of the functions $i$ and $j$ in Theorem~\ref{thm: bijection}. For example, we show that $P \wedge Q$ is the greatest lower bound of $P$ and $Q$. First, we show that $P \wedge Q \leq P$. By definition of the inequality on $\M P(\C H)$ we have that
\begin{align*}
P \wedge Q \leq P & :\TOT j(P \wedge Q) \leq j(P) \\
& :\TOT j\big(i(j(P) \wedge j(Q))\big) \leq j(P) \\
& \TOT j(P) \wedge j(Q) \leq j(P).
\end{align*}
Similarly, 
we have that 
$P \wedge Q \leq Q$. If $R \leq P$ and $R \leq Q$, then we show that $R \leq P \wedge Q$. By definition we have that $j(R) \leq j(P)$ and $j(R) \leq j(Q)$, hence 
$j(R) \leq j(P) \wedge j(Q)$ Since $i$ is monotone, we get $i(j(R)) =_{\mathsmaller{\M P(\C H)}} R \leq i\big(j(P) \wedge j(Q)\big) =: P \wedge Q$. For the remaining properties we work similarly.
\end{proof}

\begin{proposition}\label{prp: ij}
Let the functions $i \colon \B S(\C H) \to \M P(\C H)$ and $j \colon  \M P(\C H) \to \B S(\C H)$, defined in Theorem~\ref{thm: bijection}. The following hold$:$\\[1mm]
\normalfont (i)
\itshape $i(\B 1) =_{\mathsmaller{\M P(\C H)}} 1$ and $j(1) =_{\mathsmaller{\B S(\C H)}} \B 1$.\\[1mm]
\normalfont (ii)
\itshape $i(\B 0) =_{\mathsmaller{\M P(\C H)}} 0$ and $j(0) =_{\mathsmaller{\B S(\C H)}} \B 0$.\\[1mm]
\normalfont (iii)
\itshape $i(- \B L) =_{\mathsmaller{\M P(\C H)}} \sim{\hspace{-1mm}}i(\B L)$ and $j(\sim{\hspace{-1mm}}P) =_{\mathsmaller{\B S(\C H)}} -j(P)$.\\[1mm]
\normalfont (iv)
\itshape $i(\B L \wedge \B M) =_{\mathsmaller{\M P(\C H)}} i(\B L) \wedge i(\B M)$ and $j(P \wedge Q) =_{\mathsmaller{\B S(\C H)}} j(P) \wedge j(Q)$.\\[1mm]
\normalfont (v)
\itshape $i(\B L \vee \B M) =_{\mathsmaller{\M P(\C H)}} i(\B L) \vee i(\B M)$ and $j(P \vee Q) =_{\mathsmaller{\B S(\C H)}} j(P) \vee j(Q)$.\\[1mm]
\normalfont (vi)
\itshape $i(\B L - \B M) =_{\mathsmaller{\M P(\C H)}} i(\B L) \sim i(\B M)$ and $j(P \sim Q) =_{\mathsmaller{\B S(\C H)}} j(P) - j(Q)$.\\[1mm]
\normalfont (vii)
\itshape $i(\B L \To \B M) =_{\mathsmaller{\M P(\C H)}} i(\B L) \To i(\B M)$ and $j(P \To Q) =_{\mathsmaller{\B S(\C H)}} j(P) \To j(Q)$.\\[1mm]
\normalfont (viii)
\itshape $i( \neg \B L) =_{\mathsmaller{\M P(\C H)}} \neg i(\B L)$ and $j(\neg P) =_{\mathsmaller{\B S(\C H)}} \neg j(P)$.\\[1mm]
\normalfont (ix)
\itshape $i\bigg(\bigwedge_{k \in K}\B L_k\bigg) =_{\mathsmaller{\M P(\C H)}} \bigwedge_{k \in K}i(\B L_k)$ and $j\bigg(\bigwedge_{k \in K}P_k\bigg) =_{\mathsmaller{\B S(\C H)}} \bigwedge_{k \in K}j(P_k)$.\\[1mm]
\normalfont (ix)
\itshape $i\bigg(\bigvee_{k \in K}\B L_k\bigg) =_{\mathsmaller{\M P(\C H)}} \bigvee_{k \in K}i(\B L_k)$ and $j\bigg(\bigvee_{k \in K}P_k\bigg) =_{\mathsmaller{\B S(\C H)}} \bigvee_{k \in K}j(P_k)$.
\end{proposition}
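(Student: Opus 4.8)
The plan is to use the fact that, by Definition~\ref{def: operationsproj}, every operation on $\M P(\C H)$ is \emph{manufactured} from the corresponding operation on $\B S(\C H)$ by conjugating with $i$ and $j$, together with Theorem~\ref{thm: bijection}(i), which says that $i,j$ are mutually inverse, i.e.\ $i(j(P)) =_{\mathsmaller{\M P(\C H)}} P$ for every $P \in \M P(\C H)$ and $j(i(\B L)) =_{\mathsmaller{\B S(\C H)}} \B L$ for every $\B L \in \B S(\C H)$. Granting this, the whole proposition reduces to the ``base cases'' consisting of the nullary operations $\B 0, \B 1$ and the unary operation $-$; all binary and infinitary operations are then handled purely formally. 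First I would compute the base cases directly. Since $\B 1 := (H, \{0\})$, its domain is $H$ and $P_{\B 1}^1(x) =_H x$ for every $x \in H$, so $i(\B 1) := P_{\B 1}^1 =_{\mathsmaller{\M P(\C H)}} I_H =: 1$; dually $j(1) =_{\mathsmaller{\B S(\C H)}} j(i(\B 1)) =_{\mathsmaller{\B S(\C H)}} \B 1$. The same computation with $\B 0 := (\{0\}, H)$ gives $i(\B 0) =_{\mathsmaller{\M P(\C H)}} 0_H =: 0$ and hence $j(0) =_{\mathsmaller{\B S(\C H)}} \B 0$, settling (i) and (ii).

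For the negation ${\sim}P := 1 - P$, I would first observe that $1 - P$ is a self-adjoint, idempotent, bounded partial operator with domain $\dom(I_H) \wedge \dom(P) =_{\mathsmaller{S(\C H)}} \dom(P)$ — this can be checked directly, or obtained from Theorem~\ref{thm: crucial}(iii) after writing $P =_{\mathsmaller{\M P(\C H)}} P_{\B L_P}^1$ via $i \circ j = \id$ — so that ${\sim}P \in \M P(\C H)$ and $j({\sim}P)$ is defined. Computing the induced subspaces, $L_{{\sim}P}^1 = \{x \in \dom(P) \mid x - P(x) =_H x\} =_{\mathsmaller{S(\C H)}} \{x \mid P(x) =_H 0\} =_{\mathsmaller{S(\C H)}} L_P^0$ and $L_{{\sim}P}^0 =_{\mathsmaller{S(\C H)}} \{x \mid P(x) =_H x\} =_{\mathsmaller{S(\C H)}} L_P^1$, whence $j({\sim}P) =_{\mathsmaller{\B S(\C H)}} (L_P^0, L_P^1) =_{\mathsmaller{\B S(\C H)}} -j(P)$. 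Applying $i$ to the instance $j({\sim}i(\B L)) =_{\mathsmaller{\B S(\C H)}} -j(i(\B L)) =_{\mathsmaller{\B S(\C H)}} -\B L$ and using $i \circ j = \id$ yields ${\sim}i(\B L) =_{\mathsmaller{\M P(\C H)}} i(-\B L)$; this proves (iii).

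With the base cases in hand, everything else is bookkeeping. For (iv): by Definition~\ref{def: operationsproj}, $j(P \wedge Q) := j\big(i(j(P) \wedge j(Q))\big) =_{\mathsmaller{\B S(\C H)}} j(P) \wedge j(Q)$ using $i \circ j = \id$, and symmetrically $i(\B L) \wedge i(\B M) := i\big(j(i(\B L)) \wedge j(i(\B M))\big) =_{\mathsmaller{\M P(\C H)}} i(\B L \wedge \B M)$ using $j \circ i = \id$; item (v) is identical with $\vee$ replacing $\wedge$, and the two families-versions (the last two items) are identical with $\bigwedge_{k \in K}$ and $\bigvee_{k \in K}$ replacing $\wedge$. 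For (vi)--(viii) I would unfold the derived operations and invoke the cases already proved: $i(\B L - \B M) := i(\B L \wedge (-\B M)) =_{\mathsmaller{\M P(\C H)}} i(\B L) \wedge i(-\B M) =_{\mathsmaller{\M P(\C H)}} i(\B L) \wedge {\sim}i(\B M) =: i(\B L) \sim i(\B M)$, by (iv) and (iii); then $i(\B L \To \B M) := i((-\B L) \vee \B M) =_{\mathsmaller{\M P(\C H)}} i(-\B L) \vee i(\B M) =_{\mathsmaller{\M P(\C H)}} {\sim}i(\B L) \vee i(\B M) =: i(\B L) \To i(\B M)$, by (v) and (iii); and $i(\neg \B L) := i(\B L \To \B 0) =_{\mathsmaller{\M P(\C H)}} i(\B L) \To i(\B 0) =_{\mathsmaller{\M P(\C H)}} i(\B L) \To 0 =: \neg i(\B L)$, by (vii) and (ii). Each $j$-statement is obtained by reading the same chain through $j$ and using $i \circ j = \id$.

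The only genuine content, and hence the main obstacle, is the negation lemma of the second paragraph: one must verify both that $1 - P$ actually lies in $\M P(\C H)$ (this is where Theorem~\ref{thm: crucial}(iii) enters, applied to $\B L_P$ since $P = P_{\B L_P}^1$) and that its induced orthocomplemented subspace is the swap of $\B L_P$, taking care that $I_H - P$ is read as a partial operator whose domain is cut down to $\dom(P)$ according to the conventions of Definition~\ref{def: partialoperations}. Once ${\sim}$ is pinned down this way, the rest is just the formal interplay of the unit laws with the fact that all operations on $\M P(\C H)$ were defined by transport along the bijection.
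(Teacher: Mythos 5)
Your proposal is correct and follows essentially the same route as the paper: the paper proves only case (iv) by the formal conjugation argument $i(\B L)\wedge i(\B M) := i\big(j(i(\B L))\wedge j(i(\B M))\big) =_{\mathsmaller{\M P(\C H)}} i(\B L\wedge\B M)$ and $j(P\wedge Q) := j\big(i(j(P)\wedge j(Q))\big) =_{\mathsmaller{\B S(\C H)}} j(P)\wedge j(Q)$, and dismisses the remaining cases with ``we work similarly.'' Your write-up is in fact slightly more complete, since cases (i)--(iii) concern the operations $1 := I_H$, $0 := 0_H$ and ${\sim}P := 1-P$, which are \emph{not} defined by transport along the bijection and so genuinely require the direct computations you supply (in particular the verification that $L^1_{{\sim}P} =_{\mathsmaller{S(\C H)}} L^0_P$ and $L^0_{{\sim}P} =_{\mathsmaller{S(\C H)}} L^1_P$).
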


\begin{proof}
We prove only case (iv), and for the remaining cases we work similarly. As $i$ and $j$ are inverse to each other, we have that
$$i(\B L) \wedge i(\B M) := i\big(j(i(\B L)) \wedge j(i(\B M))\big)=_{\mathsmaller{\M P(\C H)}} i(\B L \wedge \B M),$$
$$\ \ \ \ \ \ \ \ \ \ \ \ \ \ \ \ \ \ \ \ \ \ \ \ \ \ \ \ \ \ \ \ \ j(P \wedge Q) := j\big(i\big(j(P) \wedge j(Q)\big)\big) =_{\mathsmaller{\B S(\C H)}} j(P) \wedge j(Q). \ \ \ \ \ \ \ \ \ \ \ \ \ \ \ \ \ \ \ \ \ \ \ \ \ \ \ 
\qedhere$$
\end{proof}

\section{Commuting partial projections}
\label{sec: commuting}

If $P \colon L_P^1 \oplus L_P^0 \to L_P^1$ and $Q \colon L_Q^1 \oplus L_Q^0 \to L_Q^1$ are in $\M P(\C H)$, the domains of their compositions are 
\begin{align*}
\dom(Q \circ P) &:= \big\{x \in L_P^1 \oplus L_P^0 \mid P(x) \in L_Q^1 \oplus L_Q^0\big\}\\
& =_{\mathsmaller{S(\C H)}} \big\{l_P^1 + l_P^0 \in  L_P^1 \oplus L_P^0 \mid \exists_{l_Q^1 \in L_Q^1}\exists_{l_Q^0 \in L_Q^0}(l_P^1 =_H l_Q^1 + l_Q^0)\big\}\\
& \leq \dom(P),
\end{align*}
\begin{align*}
	\dom(P \circ Q) &:= \big\{x \in L_Q^1 \oplus L_Q^0 \mid Q(x) \in L_P^1 \oplus L_P^0\big\}\\
	& =_{\mathsmaller{S(\C H)}} \big\{l_Q^1 + l_Q^0 \in  L_Q^1 \oplus L_Q^0 \mid \exists_{l_P^1 \in L_P^1}\exists_{l_P^0 \in L_P^0}(l_Q^1 =_H l_P^1 + l_P^0)\big\}\\
	& \leq \dom(Q).
\end{align*}
If $P$ and $Q$ commute, i.e., $Q \circ P =_{\mathsmaller{\M P(\C H)}} P \circ Q$, then by definition $\dom(Q \circ P) =_{\mathsmaller{S(\C H)}} \dom(P \circ Q)$ and for every $x \in \dom(Q \circ P)$ we have that $Q(P(x)) =_H P(Q(x))$. It is straightforward to show that in this case $Q \circ P$ is a partial projection on $\C H$, and 
$$\dom(Q \circ P) \leq \dom(P) \wedge \dom(Q).$$
If $x \in \dom(Q \circ P) =_{\mathsmaller{S(\C H)}} \dom(P \circ Q)$, then 
\begin{align*}
x & =_H l_P^1 + l_P^0 \in \dom(Q \circ P)\\
& =_H (l_Q^1 + l_Q^0) + l_P^0\\
& =_H \lambda_Q^1 + \lambda_Q^0 \in \dom(P \circ Q)\\
& =_H (\lambda_P^1 + \lambda_P^0) + \lambda_Q^0,
\end{align*}
with
$$Q(P(x)) =_H l_Q^1, \ \ \ P(Q(x)) =_H \lambda_P^1,$$
and the equality $Q(P(x)) =_H P(Q(x))$ implies that 
$$l_Q^1 =_H \lambda_P^1 \in L_Q^1 \wedge L_P^1.$$
Next, we show the constructive and partial version of Proposition 2.5.3 in~\cite{KR83}, pp.~111-112. The classical proof of Proposition~\ref{prp: comm1}(iii) is based on the classical property $(\BC_3)$ of Proposition~\ref{prp: Brouwerian}, which is avoided here. The proof of Proposition~\ref{prp: Brouwerian}(iv) in~\cite{KR83} is adopted here accordingly, as complemented negation behaves ``classically''.

\begin{proposition}\label{prp: comm1}
If $P, Q$ are commuting partial projections on $\C H$, then the following hold$:$\\[1mm]
\normalfont(i)
\itshape $\dom(P \circ Q) =_{\mathsmaller{S(\C H)}} (L_P^1 \wedge L_Q^1) \oplus (L_P^0 \vee L_Q^0)$.\\[1mm]
\normalfont(ii)
\itshape $L_P^0 \vee L_Q^0 =_{\mathsmaller{S(\C H)}} L_P^0 + L_Q^0$.\\[1mm]
\normalfont(iii)
\itshape $P \wedge Q =_{\mathsmaller{\M P(\C H)}} Q \circ P$.\\[1mm]
\normalfont(iv)
\itshape If $\dom[(\sim{\hspace{-1mm}}Q) \circ (\sim{\hspace{-1mm}}P)] =_{\mathsmaller{S(\C H)}} \dom[(\sim{\hspace{-1mm}}P) \circ (\sim{\hspace{-1mm}}Q)]$, then 
$$P \vee Q =_{\mathsmaller{\M P(\C H)}} P + Q - (Q \circ P) \ \ \ \& \ \ \ L_P^1 \vee L_Q^1 =_{\mathsmaller{S(\C H)}} L_P^1 + L_Q^1.$$
\end{proposition}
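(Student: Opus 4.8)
The plan is to prove the four parts of Proposition~\ref{prp: comm1} in order, leveraging the commutation hypothesis and the structural facts already established. Throughout, write $\B P := \B L_P = (L_P^1, L_P^0)$ and $\B Q := \B L_Q = (L_Q^1, L_Q^0)$ for the orthocomplemented subspaces associated to $P$ and $Q$ via Theorem~\ref{thm: bijection}, and recall from Theorem~\ref{thm: crucial}(i) that each domain splits as an orthogonal direct sum; also recall the preliminary computation above showing that $Q \circ P$ is a partial projection when $P,Q$ commute, with $\dom(Q \circ P) \leq \dom(P) \wedge \dom(Q)$ and that for $x \in \dom(Q \circ P)$ one has $Q(P(x)) = P(Q(x)) \in L_P^1 \wedge L_Q^1$.

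\textbf{Part (i).} First I would show $\dom(P \circ Q) \leq (L_P^1 \wedge L_Q^1) \oplus (L_P^0 \vee L_Q^0)$: given $x \in \dom(P\circ Q) = \dom(Q \circ P)$, write $x$ in the two ways displayed in the paragraph above the proposition, obtain $Q(P(x)) = l_Q^1 = \lambda_P^1 \in L_P^1 \wedge L_Q^1$, and then argue that $x - l_Q^1$ decomposes in $L_P^0 \vee L_Q^0$ by subtracting off this common piece from both representations. For the reverse inclusion, observe $L_P^1 \wedge L_Q^1 \leq \dom(Q\circ P)$ (any such vector is fixed by both $P$ and $Q$), and $L_P^0, L_Q^0 \leq \dom(Q\circ P)$ (each is killed by $P$ or $Q$ and lands in the domain of the other trivially since $0$ is everywhere); then use that $\dom(Q \circ P) \leq \C H$ is closed under $+$ and closure to get $L_P^0 \vee L_Q^0 \leq \dom(Q\circ P)$, hence the whole direct sum. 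Orthogonality of $L_P^1\wedge L_Q^1$ to $L_P^0\vee L_Q^0$ follows from $L_P^1 \perp L_P^0$ together with Remark~\ref{rem: perp1}(iii)(iv).

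\textbf{Part (ii).} This is the heart of the matter and I expect it to be the main obstacle, because classically one would invoke $(\BC_3)$ (De Morgan for $\perp$), which is not available constructively. The plan is to argue directly: by part (i), $\dom(P\circ Q)$ contains $L_P^0 \vee L_Q^0$ as a closed summand, so it suffices to show $L_P^0 + L_Q^0$ is \emph{already closed}. Take a sequence $(x_n) \subseteq L_P^0 + L_Q^0$ converging to some $x$; write $x_n = a_n + b_n$ with $a_n \in L_P^0$, $b_n \in L_Q^0$. The key is to control these decompositions using the commuting projections: apply $P$ and $Q$ (which are defined on $\dom(P)\wedge\dom(Q) \supseteq \dom(P\circ Q)$, to which $x$ and the $x_n$ belong) and use $P\circ Q = Q \circ P$ together with boundedness to show $(a_n)$ and $(b_n)$ are Cauchy; then completeness of $L_P^0, L_Q^0$ (closed subspaces of $\C H$, hence complete) gives limits, and continuity of $+$ plus uniqueness of limits yields $x \in L_P^0 + L_Q^0$. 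Extracting the limit vectors invokes $\MANC$ as in the proof of Theorem~\ref{thm: crucial}(i). The subtle point is producing the Cauchy estimates: I would write $a_n = (I - Q\circ P$-adjusted combination$)\ldots$ — concretely, using that on $L_P^0 + L_Q^0$ the map $x \mapsto P_{\B L}^0$-type projections behave compatibly, one expresses $a_n$ and $b_n$ as continuous linear images of $x_n$, which immediately gives Cauchy-ness; the commutation hypothesis is exactly what makes these projection-like maps well-defined on the sum.

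\textbf{Parts (iii) and (iv).} For (iii), by Definition~\ref{def: operationsproj} we have $P \wedge Q := i(\B P \wedge \B Q)$, whose domain is $(L_P^1\wedge L_Q^1)\oplus(L_P^0\vee L_Q^0)$, and by part (i) this equals $\dom(P\circ Q) = \dom(Q\circ P)$; it remains to check the two partial projections agree pointwise on this common domain, which follows because for $x = u + v$ with $u \in L_P^1\wedge L_Q^1$ and $v \in L_P^0\vee L_Q^0$ both maps send $x$ to $u$ (for $Q\circ P$: $P(x)$ keeps the $L_P^1$-part, then $Q$ keeps the $L_Q^1$-part, and one checks the cross terms vanish using part (ii) to write $v = v_P + v_Q$ explicitly). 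For (iv), apply (i)--(iii) to the commuting pair $\sim\!\!P = i(-\B P)$ and $\sim\!\!Q = i(-\B Q)$, whose associated orthocomplemented subspaces are the swaps $(L_P^0, L_P^1)$ and $(L_Q^0, L_Q^1)$; the extra hypothesis $\dom[(\sim\!\!Q)\circ(\sim\!\!P)] =_{\mathsmaller{S(\C H)}} \dom[(\sim\!\!P)\circ(\sim\!\!Q)]$ is precisely commutation of $\sim\!\!P, \sim\!\!Q$ at the level of domains (the pointwise equality then following as in (iii)), so part (ii) applied to them gives $L_P^1 \vee L_Q^1 =_{\mathsmaller{S(\C H)}} L_P^1 + L_Q^1$. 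Finally, $P \vee Q = \sim(\sim\!\!P \wedge \sim\!\!Q) = \sim((\sim\!\!Q)\circ(\sim\!\!P))$ by (iii), and expanding $\sim(\sim\!\!Q \circ \sim\!\!P) = I - (I-Q)(I-P) = P + Q - QP$ on the relevant (now total-on-the-sum) domain, using $QP = PQ$, yields the stated operator identity; the algebra here is the classical one and goes through verbatim once (ii) supplies the closedness that makes all the domains behave.
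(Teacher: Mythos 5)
Your parts (i), (iii) and (iv) follow essentially the same route as the paper: the two-sided inclusion for $\dom(P\circ Q)$ using the displayed decompositions, the pointwise computation $Q(P(l_{PQ}^1+l_P^0+l_Q^0))=_H l_{PQ}^1$ with the cross term killed by $Q(P(l_Q^0))=_H P(Q(l_Q^0))=_H 0$, and in (iv) the reduction to $\sim\!P,\sim\!Q$ plus $P\vee Q=_{\mathsmaller{\M P(\C H)}}\sim\!(\sim\!P\wedge\sim\!Q)$ and part (ii) applied to the swapped pair. The one substantive divergence is part (ii), and there your proposal has both a missed shortcut and a gap. The shortcut: in your own part (i) the decomposition $x=_H\lambda_P^1+(\lambda_P^0+\lambda_Q^0)$ already places $x-\lambda_P^1$ in the \emph{plain} sum $L_P^0+L_Q^0$, not merely in its closure; so the forward inclusion you prove is really $\dom(P\circ Q)\leq (L_P^1\wedge L_Q^1)\oplus(L_P^0+L_Q^0)$, which sandwiches against the reverse inclusion $(L_P^1\wedge L_Q^1)\oplus(L_P^0\vee L_Q^0)\leq\dom(P\circ Q)$ to give
$(L_P^1\wedge L_Q^1)\oplus(L_P^0+L_Q^0)=_{\mathsmaller{S(\C H)}}(L_P^1\wedge L_Q^1)\oplus(L_P^0\vee L_Q^0)$, whence $L_P^0+L_Q^0=_{\mathsmaller{S(\C H)}}L_P^0\vee L_Q^0$ by the elementary cancellation $K\oplus L=_{\mathsmaller{S(\C H)}}K\oplus\overline{L}\To L=_{\mathsmaller{S(\C H)}}\overline{L}$. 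This is exactly how the paper proves (ii) --- no Cauchy sequences, no appeal to $\MANC$ beyond what Theorem~\ref{thm: crucial} already used. What you call ``the heart of the matter'' is a two-line corollary of (i).

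The gap in your Cauchy-sequence substitute: the decomposition $x_n=_H a_n+b_n$ with $a_n\in L_P^0$, $b_n\in L_Q^0$ is not unique ($L_P^0\wedge L_Q^0$ need not be $0$), so ``the'' summands are not functions of $x_n$, and your claim that ``one expresses $a_n$ and $b_n$ as continuous linear images of $x_n$'' is precisely the step you never supply; as written the argument does not close. It can be repaired: on $L_P^0+L_Q^0$ one has $Q(P(x))=_H Q(P(b))=_H P(Q(b))=_H 0$, so $P(x)\in L_Q^0$ and $x-P(x)\in L_P^0$, and the choice $b_n:=P(x_n)$, $a_n:=x_n-P(x_n)$ gives continuous (hence Cauchy-preserving) decompositions, with $P$ defined on the relevant closure because $\overline{L_P^0+L_Q^0}\leq\dom(P\circ Q)\leq\dom(P)\wedge\dom(Q)$. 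But you should state this formula explicitly if you take this route --- and you should not, since the sandwich argument renders it superfluous.
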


\begin{proof}
(i) By definition $j(P) \wedge j(Q)  := \B L_P \wedge \B L_Q := \big(L_P^1 \wedge L_Q^1, L_P^0 \vee L_Q^0\big)$, where
\begin{align*}
L_P^1 \wedge L_Q^1 & := \{x \in \dom(P) \mid P(x) =_H x\} \wedge \{y \in \dom(Q) \mid Q(y) =_H y\} \\
& =_{\mathsmaller{S(\C H)}} \{z \in \dom(P) \wedge \dom(Q) \mid P(z) =_H z =_H Q(z)\}, 
\end{align*}
and $L_P^0 \vee L_Q^0 := \overline{L_P^0 + L_Q^0}$. Next, we show that 
$$(L_P^1 \wedge L_Q^1) \oplus (L_P^0 \vee L_Q^0) \leq \dom(P \circ Q).$$
If $z \in L_P^1 \wedge L_Q^1$, then $z \in \dom(P) \wedge \dom(Q)$ with $P(z) =_H z \in \dom(Q)$, hence $z \in \dom(Q \circ P)$. Consequently, $L_P^1 \wedge L_Q^1 \leq \dom(P \circ Q)$. Since
$$L_P^0 := \{x \in \dom(P) \mid P(x) =_H 0 \in \dom(Q)\} \leq \dom(Q \circ Q),$$
$$L_Q^0 := \{x \in \dom(Q) \mid Q(x) =_H 0 \in \dom(P)\} \leq \dom(P \circ Q),$$
and as $\dom(P \circ Q)$ is closed, we get $L_P^0 \vee L_Q^0 \leq \dom(P \circ Q)$. To show the inequality
$$\dom(P \circ Q) \leq (L_P^1 \wedge L_Q^1) \oplus (L_P^0 \vee L_Q^0),$$
let $x \in \dom(P \circ Q)$. Since $x =_H \lambda_P^1 + (\lambda_P^0 + \lambda_Q^0)$, for some $ \lambda_P^1 \in  L_P^1 \wedge L_Q^1$, $ \lambda_P^0 \in L_P^0$, and $ \lambda_Q^0 \in L_Q^0$, we have that
\begin{align*}
(L_P^1 \wedge L_Q^1) \oplus (L_P^0 + L_Q^0) & \leq (L_P^1 \wedge L_Q^1) \oplus (L_P^0 \vee L_Q^0)\\
& =_{\mathsmaller{S(\C H)}} \dom(j(P) \wedge j(Q))\\
& \leq \dom(P \circ Q)\\
& \leq (L_P^1 \wedge L_Q^1) \oplus (L_P^0 + L_Q^0).
\end{align*}
Hence, 
$$(L_P^1 \wedge L_Q^1) \oplus (L_P^0 + L_Q^0) =_{\mathsmaller{S(\C H)}} (L_P^1 \wedge L_Q^1) \oplus (L_P^0 \vee L_Q^0) =_{\mathsmaller{S(\C H)}} \dom(P \circ Q).$$
(ii) By the previous equality we get easily\footnote{If $K, L$ are linear subspaces of $\C H$, such that $K \oplus L =_{\mathsmaller{S(\C H)}} K \oplus \overline{L}$, then it is immediate to show that $L =_{\mathsmaller{S(\C H)}} \overline{L}$.} that $L_P^0 + L_Q^0 =_{\mathsmaller{S(\C H)}} L_P^0 \vee L_Q^0$.\\
(iii) By the definition of $i(\B L)$, the domain of $P \wedge Q$ is the domain of $j(P) \wedge j(Q)$, hence, by case (i) we get $\dom(P \wedge Q) =_{\mathsmaller{S(\C H)}} \dom(Q \circ P)$. Since
$$P \wedge Q \colon (L_P^1 \wedge L_Q^1) \oplus (L_P^0 + L_Q^0) \to L_P^1 \wedge L_Q^1,$$
$$l_{PQ}^1 + (l_P^0 + l_Q^0) \mapsto l_{PQ}^1,$$
we also have that
\begin{align*}
(Q \circ P)(l_{PQ}^1 + (l_P^0 + l_Q^0)) & =_H Q\big(l_{PQ}^1 + 0 + P(l_Q^0)\big)\\
& =_H Q\big(l_{PQ}^1\big) + 0 + Q\big(P(l_Q^0)\big)\\
& =_H l_{PQ}^1 + P\big(Q(l_Q^0)\big)\\
& =_H  l_{PQ}^1 + P(0)\\
& =_H l_{PQ}^1.
\end{align*}
(iv) The equality $P(Q(x)) =_H Q(P(x))$, for every $x \in \dom(Q \circ P)$ implies that $(\sim{\hspace{-1mm}}P)((\sim{\hspace{-1mm}}Q)(x)) =_H (\sim{\hspace{-1mm}}Q)((\sim{\hspace{-1mm}}P)(x))$, for every $x \in \dom((\sim{\hspace{-1mm}}Q) \circ (\sim{\hspace{-1mm}}P))$, and since the corresponding domains are by hypothesis equal, we get 
that $\sim{\hspace{-1mm}}P$ and $\sim{\hspace{-1mm}}Q$ commute. By
 Proposition~\ref{prp: corComplQL}(iv) for partial projections we have that $P \vee Q =_{\mathsmaller{\M P(\C H)}} \sim{\hspace{-1mm}}[(\sim{\hspace{-1mm}}P) \wedge (\sim{\hspace{-1mm}}Q)]$. Using case (iii) we get $P \vee Q =_{\mathsmaller{\M P(\C H)}} \sim{\hspace{-1mm}}[(I-P) \circ (I - Q)] =_{\mathsmaller{\M P(\C H)}} P + Q - (Q \circ P)$. Since $L_{\sim{\hspace{-0.4mm}}P}^1 =_{\mathsmaller{S(\C H)}} L_P^0$ and $L_{\sim{\hspace{-0.4mm}}P}^0 =_{\mathsmaller{S(\C H)}} L_P^1$, the equality $L_P^1 \vee L_Q^1 =_{\mathsmaller{S(\C H)}} L_P^1 + L_Q^1$ follows immediately from case (ii).
\end{proof}


\begin{thebibliography}{10}\label{bibliography}


\bibitem{AR10} P.~Aczel, M.~Rathjen: \textit{Notes on Constructive Set Theory}, Report No.~40, 2000/2001, ISSN 1103-467X. Institut Mittag-Leffler.

\bibitem{BvN36} G.~Birkhoff, J. von Neumann: The Logic of Quantum Mechanics, Annals of Mathematics, Vol. 37, No. 4 (Oct., 1936), pp. 823--843.


	\bibitem{Bi67} E.~Bishop: \textit{Foundations of Constructive Analysis}, McGraw-Hill, 1967.
\bibitem{BC72} E.~Bishop, H.~Cheng: \textit{Constructive Measure Theory}, Mem. Amer. Math. Soc. 116, 1972.
	\bibitem{BB85} E.~Bishop, D.~S.~Bridges: \textit{Constructive Analysis}, Grundlehren der math.~Wissenschaften 279,
	Springer-Verlag, Heidelberg-Berlin-New York, 1985.
	
	
\bibitem{Br79} D.~S.~Bridges: \textit{Constructive Functional Analysis}, Research Notes in Mathematics, 28, Pitman, London, 1979.
	
\bibitem{Br81} D.~Bridges: Towards a constructive foundation for quantum mechanics. In: F.~Richman (Ed.) \textit{Constructive Mathematics, Proceedings of the New Mexico State University Conference Held at Las Cruces, New Mexico, August 11--15, 1980}, Lecture Notes in Mathematics 873, Springer-Verlag, Berlin, 1981, 260--273.



\bibitem{Br99} D.~S.~Bridges: Can constructive mathematics be applied in physics? Journal of Philosphical Logic, 28, 1999, 439--453.


	
	\bibitem{BR87} D.~S.~Bridges, F.~Richman: \textit{Varieties of Constructive Mathematics}, 
	Cambridge University Press, 1987.
	
	
	\bibitem{BS00} D.~S.~Bridges, K. Svozil: Constructive Mathematics and Quantum Physics, International Journal of Theoretical Physics, Vol. 39, No. 3, 2000, 503--515.
	
	\bibitem{BV06} D.~S.~Bridges, L.~S.~V\^{\i}\c{t}\u{a}: \textit{Techniques of Constructive Analysis},
	in: Universitext, Springer, New York, 2006.
	%
	
\bibitem{Ch21} Chan: \textit{Foundations of Constructive Probability Theory}, Cambridge University Press, 2021.
	
\bibitem{CS10} F.~Ciraulo and G.~Sambin: The overlap algebra of regular opens,
J.~Pure Appl.~Algebra 214(11), 2010, 1988--1995.
	
	
	
\bibitem{Co89} D.~W.~Cohen: \textit{An Introduction to Hilbert Space and Quantum Logic}, Springer-Verlag, 1989.

\bibitem{CL17} T.~Coquand, H.~Lombardi: Hidden constructions in abstract algebra,
Krull Dimension, Going Up, Going Down, arXiv.1712.04725v1, 2017.
	




\bibitem{Da18} P.~J.~Daniell: A general form of integral, Annals of mathematics, Second Series, 19 (4),
1918, 279--294.


	
\bibitem{EGL09} K.~Engesser, D.~M.~Gabbay, D.~Lehmann (Eds.) \textit{Handbook of Quantum Logic and Quantum Structures: Quantum Logic}, Elsevier, 2009.


\bibitem{Di20} H.~Diener: Constructive Reverse Mathematics, arXiv:1804.05495, 2020.
	
	
	
	
	
	
\bibitem{Fe79} S.~Feferman: Constructive Theories of Functions and Classes, in Boffa et. al (Eds.) \textit{Logic Colloquium '78},	Elsevier, 1979, 159--224.
	
	
\bibitem{Gl57} A.~M.~Gleason: Measures on the Closed Subspaces of a Hilbert Space, J.~Math.~Mechanics 6(4), 1957, 885--893.
	
	
	\bibitem{Go66} S.~Goldberg: \textit{Unbounded Linear Operators: Theory and Applications}, McGraw-Hill, 1966. 
	
\bibitem{GP25} F.~L.~Grubm\"uller, I.~Petrakis: A Predicative Approach to the Constructive Integration Theory of Locally Compact Metric Spaces, Journal of Logic and Analysis 17 FDS4, 2025, 1--24, https://doi.org/10.4115/jla.2025.17.FDS4.
	
	
	
\bibitem{Ha57} P.~R.~Halmos: \textit{Introduction to Hilbert Space and the Theory of Spectral Multiplicity}, Second Edition, Chelsea Publishing Company, 1957.	
	
	
\bibitem{Ha82} P.~R.~Halmos: \textit{A Hilbert Space Problem Book}, Second Edition, Revised and Enlarged, Springer-Verlag, 1982.
	
	
\bibitem{KR83} R.~V.~Kadison, J.~R.~Ringrose: \textit{Fundamentals of the theory of operator algebras}, Volume 1, Academic Press, 1983.

	
	
\bibitem{KP25} N.~K\"opp, I.~Petrakis: Strong negation in the theory of computable functionals TCF, 
Logical Methods in Computer Science, Volume 21, Issue 2 - https://doi.org/10.46298/lmcs-21(2:1)2025



\bibitem{La17} K.~Landsman: \textit{Foundations of Quantum Theory, From Classical Concepts to Operator Algebras}, Springer, 2017

	
\bibitem{Le23} T.~Leinster: A categorical derivation of Lebesgue integration, arXiv:2011.00412, 2023
	
\bibitem{Li81} B.~V.~Limaye: \textit{Functional Analysis}, Wiley Eastern Limited, 1981.
	
	
	
\bibitem{MRR88} R.~Mines, F.~Richman, W.~Ruitenburg: \textit{A course in constructive algebra}, Springer Science$+$
Busi-\\
ness Media New York, 1988.
	
\bibitem{MWP24} D.~Misselbeck-Wessel, I.~Petrakis: Complemented subsets and Boolean-valued, partial functions, Computability, vol. 13, no. 3-4, 2024, pp. 399--431.
	
	
\bibitem{MWP25} D.~Misselbeck-Wessel, I.~Petrakis: Constructive Stone representations for separated swap and Boolean algebras, submitted, 2025.
	
\bibitem{MWP25b} D.~Misselbeck-Wessel, I.~Petrakis: Boolean Rigs, Algebra Universalis, to appear, 2025.
	
	
\bibitem{MWP25c} D.~Misselbeck-Wessel, I.~Petrakis: Richman Rigs Revisited, submitted, 2025.	
	
\bibitem{My75} J.~Myhill: Constructive Set Theory, J.~Symbolic Logic 40, 1975, 347--382.
	
%
	
	
%
\bibitem{Pa12} E.~Palmgren: Constructivist and structuralist foundations: Bishop's and Lawvere's theories of sets,
Annals of Pure and Applied Logic 163, 2012, 1384--1399.
\bibitem{Pa13} E.~Palmgren: Bishop-style constructive mathematics in type theory - A tutorial, Slides, 2013. \texttt{http://staff.math.su.se/palmgren/}.
\bibitem{Pa14} E.~Palmgren: \textit{Lecture Notes on Type Theory}, manuscript, 2014.
\bibitem{PW14} E.~Palmgren, O. Wilander: Constructing categories and setoids of setoids in type theory,	Logical Methods in Computer Science. 10 (2014), Issue 3, paper 25.
\bibitem{Pa17} E.~Palmgren: Constructions of categories of setoids from proof-irrelevant families,
Archive for Mathematical Logic 56, 2017, 51--66.
\bibitem{Pa19} E.~Palmgren: From type theory to setoids and back, preprint, 2019.




\bibitem{PM09} M.~Pavi$\v c$cic, N.~D.~Megill: Is Quantum Logic a Logic?, in~\cite{EGL09}, pp.~23--47.




	%
	%
\bibitem{Pe19a} I.~Petrakis: Borel and Baire sets in Bishop Spaces, in F. Manea et. al. (Eds):
Computing with Foresight and Industry, CiE 2019, LNCS 11558, Springer, 2019, 240--252.
	%
	%
	
	\bibitem{Pe20} I.~Petrakis: \textit{Families of Sets in Bishop Set Theory}, Habilitationsschrift, LMU, Munich, 2020, available at \url{https://www.mathematik.uni-muenchen.de/~petrakis/content/Theses.php}.
	
	%
	%
	%
	%
	
	
	
	
	

\bibitem{Pe22a} I.~Petrakis: Proof-relevance in Bishop-style constructive mathematics, 
Mathematical Structures in Computer Science, Volume 32, Issue 1, 2022, 1--43. doi:10.1017/S0960129522000159.
	
	
	
	
	%
	%

	
\bibitem{Pe24}  I.~Petrakis: Sets Completely Separated by Functions in Bishop Set Theory, Notre Dame Journal of Formal Logic, 65(2): 151-180 (May 2024) DOI: 10.1215/00294527-2024-0010.

\bibitem{Pe25} I.~Petrakis: The apart algebra of regular closed sets, in preparation, 2025.

\bibitem{Pe24a} I.~Petrakis: Topologies of open complemented subsets, in ``Constructive Mathematics: Foundation and Practice'',  M. Mitrovi\' c, M.~N.~Hounkonnou (Eds.), ``Mathematics in Mind" series, Springer,  2026, forthcoming. Also available at arXiv:2312.17095v1, 2023.

\bibitem{PW22} I.~Petrakis, D.~Wessel: Algebras of complemented subsets, in U. Berger et.al.~(Eds): 
\textit{Revolutions and Revelations in Computability}, CiE 2022,
Lecture Notes in Computer Science 13359, Springer, 2022, 246--258.

%
	
	%
	
\bibitem{PZ24} I.~Petrakis, M.~Zeuner: Pre-measure spaces and pre-integration spaces in predicative Bishop-Cheng measure theory, Logical Methods in Compter Science, Volume 21, Issue 2 - https://doi.org/10.46298/lmcs-21(2:1)2025.
	

	
	
	
\bibitem{Pi76} C.~Piron: \textit{Foundations of Quantum Physics}, W.A.~Benjamin, Inc., 1976.
	

	
	
\bibitem{Ri00} F.~Richman: Gleason's theorem has a constructive proof. J.~Philos.~Logic 29, 2000, 425--431.

\bibitem{Ri01} F.~Richman: Constructive mathematics without choice, in P. Schuster et. al. (Eds.) \textit{Reuniting the Antipodes Constructive and
Nonstandard Views of the Continuum}, Dordrecht: Kluwer, 2001, 199--205.
	
	
\bibitem{Ri12} F.~Richman: Algebraic Functions, Calculus Style, Communications in Algebra, 40, 2012, 2671--2683.
	
\bibitem{RB99} F.~Richman, D.~S.~Bridges, A constructive proof of Gleason's Theorem, Journal of
Functional Analysis 162, 1999, 287--312.

	
	\bibitem{Ri22} E.~Rijke: \textit{Introduction to Homotopy Type Theory},  https://arxiv.org/abs/2212.11082, 2022, pre-publication version, which will be  published by Cambridge University Press.
	
	

	
		\bibitem{Sh22} M.~Shulman: Affine logic for constructive mathematics, The Bulletin of Symbolic Logic, Volume 28, Number 3, 2022, 327--386.
	
	
	
	
\bibitem{Sp02} B. Spitters: Constructive and intuitionistic integration theory and functional analysis,
PhD Thesis, University of Nijmegen, 2002.

\bibitem{Sp05} B. Spitters: Constructive results on operator algebras, Journal of Universal
Computer Science, 12, 2005, 2096--2113.
	
	
\bibitem{TvD88} A.~S.~Troelstra, D.~van Dalen: \textit{Constructivisim in Mathematics, An Introduction}, Volume 1, North-Holland, 1988.	
	
\bibitem{Ho13} The Univalent Foundations Program: \textit{Homotopy Type Theory:
Univalent Foundations of Mathematics}, Institute for Advanced Study, Princeton, 2013.
	
	
	
	
\bibitem{vN32} J. von Neumann: \textit{Mathematische Grundlagen der Quanten-mechanik}, Berlin, 1932,
English translation \textit{Mathematical Foundations of Quantum Mechanics}, Princeton: Princeton University Press, 1955.
	
	
	
	
\bibitem{Ye00} F.~Ye: Toward a constructive theory of unbounded linear operators, The Journal of
Symbolic Logic, 65 (1), 357--370.
\bibitem{Ye11} F.~Ye: \textit{Strict Finitism and the Logic of Mathematical Applications}, Synthese Library 355,  Springer, 2011.

	
	
	
	
	
	
\end{thebibliography}
\end{document}